\g@addto@macro\normalsize{%
  \setlength\abovedisplayskip{0.5em}
  \setlength\belowdisplayskip{0.5em}
  \setlength\abovedisplayshortskip{0.1em}
  \setlength\belowdisplayshortskip{0.5em}
}
\newcommand{\Omit}[1]{}
\theoremstyle{plain}
\newtheorem{theorem}{Theorem}[subsection]
\newtheorem*{theorem*}{Theorem}
\newtheorem{fact}{Fact}[subsection]
\newtheorem*{fact*}{Fact}
\newtheorem{observation}{Observation}[subsection]
\newtheorem*{observation*}{Observation}
\newtheorem{lemma}{Lemma}[subsection]
\newtheorem*{lemma*}{Lemma}
\newtheorem{proposition}{Proposition}[subsection]
\newtheorem*{proposition*}{Proposition}
\newtheorem{corollary}{Corollary}[subsection]
\newtheorem*{corollary*}{Corollary}
\newtheorem{claim}{Claim}[subsection]
\newtheorem*{claim*}{Claim}
\theoremstyle{definition}
\newtheorem{definition}{Definition}[subsection]
\newtheorem*{definition*}{Definition}
\theoremstyle{remark}
\newtheorem{example}{Example}[subsection]
\newtheorem*{example*}{Example}
\newcommand{\R}{\mathbb{R}}  
\DeclareMathOperator*{\E}{\mathbb{E}}   
\DeclareMathOperator*{\bigtimes}{\mathrel{\text{\scalebox{1.3}{$\times$}}}}
\newcommand{\smallwedge}{\mathrel{\text{\raisebox{0.25ex}{\scalebox{0.8}{$\wedge$}}}}}
\newcommand{\smallvee}{\mathrel{\text{\raisebox{0.25ex}{\scalebox{0.8}{$\vee$}}}}}
\newcommand{\D}{\mathcal{D}}  
\newcommand{\V}{\mathcal{V}}  
\newcommand{\Lat}{\mathcal{L}}  
\newcommand{\meet}{\smallwedge}
\newcommand{\join}{\smallvee}
\DeclareMathOperator*{\bigjoin}{\bigvee}
\def\sigsel/{\texorpdfstring{\textsc{SignalSelection}}{SignalSelection}}
\newcommand{\defsubsupermodular}{
A function $f$ from a lattice to the reals is \emph{submodular} if it exhibits diminishing marginal value: For all $A',A,B$ on the lattice with $A \meet B \preceq A' \preceq A$,
 \[ f(B \join A') - f(A') \geq f(B \join A) - f(A) . \]
It is \emph{supermodular} if it exhibits increasing marginal value: For all $A,A',B$ on the lattice with $A \meet B \preceq A' \preceq A$, the above inequality is reversed.
The sub- or super-modularity is \emph{strict} if, whenever $A$ and $B$ are incomparable on the lattice's ordering and $A' \neq A$, the inequality is strict.
} 
\newcommand{\defsubscompsweak}{
In the context of a decision problem $u$ and prior $P$, the signals $A_1,\ldots,A_n$ are \emph{weak substitutes} if $\V^{u,P}$ is submodular on the subsets signal lattice generated by $A_1,\ldots,A_n$.
They are \emph{weak complements} if $\V^{u,p}$ is supermodular on their subsets signal lattice.
The signals are \emph{strict} substitutes (complements) if the $\V^{u,p}$ is strictly submodular (supermodular).
} 
\newcommand{\defsubscompsmodstrong}{
In the context of a decision problem $u$ and prior $P$, the signals $A_1,\ldots,A_n$ are \emph{moderate (respectively, strong) substitutes} if, for all $A,B$ on the subsets signal lattice, and for all $A'$ on the discrete (respectively, continuous) signal lattice with $A \meet B \preceq A' \preceq A$,
  \[ \V^{u,P}(B \join A') - \V^{u,P}(A') \geq \V^{u,P}(B \join A) - \V^{u,P}(A) . \]
They are \emph{moderate (strong) complements} if, for all $A,B$ on the subsets signal lattice and all $A'$ on their discrete (continuous) signal lattice with $A \meet B \preceq A' \preceq A$, the above inequality is reversed.

The signals are \emph{strict} substitutes (complements) if whenever $A$ and $B$ are incomparable and $A' \neq A$, the respective inequality is strict.
} 
\newcommand{\theoremrevelation}{
For any decision problem $u$, there exists a proper scoring rule $S: \Delta_E \times E \to \mathbb{R}$ that is equivalent to the original decision problem in that for all information structures $P$ and signals $A$, $\V^{S,P}(A) = \V^{u,P}(A)$.
} 
\newcommand{\cordecisioncapturedbyG}{
 For any decision problem $u$ there exists a corresponding convex function $G: \Delta_E \to \R$, and for every such $G$ there exists a decision problem $u$, such that $G(q)$ is the expected utility for acting optimally when the agent's posterior belief on $E$ is $q$.

Hence for instance $\V(A) = \E_a G(p_a)$, where $p_a$ is the posterior on $E$ given $A=a$.
} 
\newcommand{\defsubsG}{
For any decision problem, letting $G$ be the associated expected score function, signals are respectively \emph{(weak, moderate, strong) substitutes} for that decision problem if and only if, for all $A,B$ on the subsets lattice and all $A'$ on the respectively (subsets, discrete, continuous) lattice with $A \meet B \preceq A' \preceq A$,
 \[ \E_{a',b} G(p_{a'b}) - \E_{a'} G(p_{a'}) \geq \E_{a,b} G(p_{ab}) - \E_a G(p_a) . \]
} 
\newcommand{\lemmamarginalcontributionbregman}{
In a decision problem, for any signals $A,B$, the marginal value of $B$ given $A$ is
 \[ \V(A \join B) - \V(A) = \E_{a,b} D_G(p_{ab}, p_a)  \]
where $G$ is the associated expected score function and $p_{ab}$ is the distribution of $E$ given $A=a,B=b$.
} 
\newcommand{\defsubsbregman}{
For any decision problem $u$, let $D_G$ be the Bregman divergence of the corresponding expected score function $G$.
Signals are \emph{(weak, moderate, strong) substitutes} if and only if, for all $A,B$ on the subsets lattice and all $A'$ on the (subsets, discrete, continuous) lattice with $A \meet B \preceq A' \preceq A$,
 \[ \E_{a',b} D_G(p_{a'b}, p_{a'}) \geq \E_{a,b} D_G(p_{ab}, p_a) . \]
} 
\newcommand{\defsubsentropy}{
For any decision problem $u$, let the generalized entropy function $h = -G$ where $G$ is the corresponding expected score function.
Signals are \emph{(weak, moderate, strong) substitutes} for $u$ if and only if, for all $A,B$ on the subsets lattice and all $A'$ on the (subsets, discrete, continuous) lattice with $A \meet B \preceq A' \preceq A$,
 \[ h(E | A') - h(E | A' \join B) \geq h(E | A) - h(E | A \join B) . \]
} 
\newcommand{\thmsubsrushequilibrium}{
If signals are distinguishable and are strict, strong substitutes, then for every set of traders and trading orders, every Bayes-Nash equilibrium is all-rush.
} 
\newcommand{\thmsubsnonrush}{
If signals are distinguishable and are not strong substitutes, then there is a trading order such that no perfect Bayesian equilibrium is all-rush.
} 
\newcommand{\thmcompsdelayequilibrium}{
If signals are distinguishable and are strict, strong complements, then for every set of traders and trading order, every perfect Bayesian equilibrium is an all-delay equilibrium.
} 
\newcommand{\thmcompsnondelay}{
If signals are distinguishable and are not strong complements, then there is a set of traders and trading order such that no perfect Bayesian equilibrium is all-delay.
} 
\newcommand{\thmalgsubspos}{
If signals are weak substitutes, there are polynomial-time $1-1/e - o(1)$ approximations for \sigsel/ under a variety of constraint families in the oracle model or with succinct input as in Propositions \ref{prop:compute-V-small-sparse} or \ref{prop:compute-V-small-oracle}.
These include cardinality constraints (select at most $k$ signals), budget constraints (given a price associated with each signal, select a subset with total price at most $B$), and \emph{matroid} constraints.
} 
\newcommand{\thmftosigsel}{
For every monotone increasing set function $f: 2^N \to \mathbb{R}$, with $|N| = n$, there exists a decision problem and a set of signals $A_1,\dots,A_n$ such that $f(S) = \V(\bigjoin_{i \in S} A_i)$ for all $S \subseteq N$.

Furthermore, in the construction, it is trivial to compute and represent any posterior $p_{a_i : i \in S}$ conditioned on any realizations of any subset $S$ of signals; and trivial (given a single evaluation of $f(S)$) to compute the expected optimal utility $G(p_{a_i : i \in S})$.
} 
\newcommand{\corcomplementsalghard}{
No algorithm for \sigsel/ accepting the oracle model of inputs, or any of the other models discussed, can guarantee a nonzero approximation in general even for cardinality constraints and even if signals are assumed to be weak complements, unless it makes an exponential number of queries to the oracles.
} 
\begin{document}
\pagenumbering{roman}

\title{Informational Substitutes}
\author{Yiling Chen and Bo Waggoner \\
 Harvard, University of Pennsylvania \\
 \texttt{yiling@seas.harvard.edu, bwag@seas.upenn.edu}} 
\date{Updated: March 2017}
\maketitle

\begin{abstract}
We propose definitions of substitutes and complements for pieces of information (``signals'') in the context of a decision or optimization problem, with game-theoretic and algorithmic applications.
In a game-theoretic context, substitutes capture diminishing marginal value of information to a rational decision maker.
We use the definitions to address the question of how and when information is aggregated in prediction markets.
Substitutes characterize ``best-possible'' equilibria with immediate information aggregation, while complements characterize ``worst-possible'', delayed aggregation.
Game-theoretic applications also include settings such as crowdsourcing contests and Q\&A forums.
In an algorithmic context, where substitutes capture diminishing marginal improvement of information to an optimization problem, substitutes imply efficient approximation algorithms for a very general class of (adaptive) information acquisition problems.

In tandem with these broad applications, we examine the structure and design of informational substitutes and complements.
They have equivalent, intuitive definitions from disparate perspectives: submodularity, geometry, and information theory.
We also consider the design of scoring rules or optimization problems so as to encourage substitutability or complementarity, with positive and negative results.
Taken as a whole, the results give some evidence that, in parallel with substitutable items, informational substitutes play a natural conceptual and formal role in game theory and algorithms.
\end{abstract}

%

\clearpage

\tableofcontents

\break

\pagenumbering{arabic}
\section{Introduction and Overview} \label{sec:intro}
\subsection{Motivation and challenge}
An agent living in an uncertain world wishes to make some decision (whether to bring an umbrella on her commute, how to design her company's website, \dots).  She can improve her expected utility by obtaining pieces of information about the world prior to acting (a weatherman's forecast or a barometer reading, market research or automated A/B testing, \dots).
This naturally leads her to assign \emph{value} to different pieces of information and combinations thereof.
The value of information arises as the expected improvement it imparts to her optimization problem.

We would like to generally understand, predict, or design algorithms to guide such agents in acquiring and using information.
Consider the analogous case where the agent has value for \emph{items} or goods, represented by a valuation function over subsets of items.
A set of items are substitutes if, intuitively, each's value decreases given more of the others; they are complements if it increases.
Here, we have rich game-theoretic and algorithmic theories leveraging the structure of substitutes and complements (S\&C).
For instance, in many settings, foundational work shows that substitutability captures positive results for existence of market equilibria, while complements capture negative results~\citep{kelso1982job,roth1984stability,gul1999walrasian,hatfield2005matching,ostrovsky2008stability}.
When substitutes are captured by submodular valuation functions~\citep{lehmann2001combinatorial}, algorithmic results show how to efficiently optimize (or approximately optimize) subject to constraints imposed by the environment (\emph{e.g.} \citet{calinescu2011maximizing}).
For example, an agent wishing to select from a set of costly items with a budget constraint has a $(1-\frac{1}{e})$-approximation algorithm if her valuation function is submodular~\citep{sviridenko2004note}.

Can we obtain similar structural and algorithmic results for information?
Here, a piece of information is modeled as a \emph{signal} or random variable that is correlated in some way with the state of the world that the agent cares about (whether it will rain, how profitable are different website designs, \dots).
Intuitively, one might often expect information to satisfy substitutable or complementary structure.
For instance, a barometer reading and an observation of whether the sky is cloudy both yield valuable information about whether it will rain to an umbrella-toting commuter; but these are substitutable observations for our commuter in that each is probably worth less once one has observed the other.
On the other hand, the dew point and the temperature tend to be complementary observations for our commuter:
Rain may be only somewhat correlated with dew point and only somewhat correlated with temperature, but is highly correlated with cases where temperature and dew point are close (\emph{i.e.} the relative humidity is high).

Despite this appealing intuition, there are significant challenges to overcome in defining informational S\&C.
Pieces of information, unlike items, may have complex probabilistic structure and relationships.
But on the other hand, this structure alone cannot capture the value of that information, which (again unlike items) seemingly must arise from the context in which it is used.
Next, even given a measure of value, it is unclear how to formalize an intuition such as ``diminishing marginal value''.
Finally, it remains to demonstrate that the definitions are tractable and have game-theoretic and/or algorithmic applications.
These challenges seem to have prevented a successful theory of informational S\&C thus far.

\vfill
\break

\subsection{This paper: summary and contributions}
This paper has four components.

\vspace{1em}
\noindent$1$.~~
       We propose a definition of informational substitutes and complements (S\&C).
       Beginning from the very general notion of \emph{value of information} in the context of any specific decision or optimization problem, we define S\&C in terms of diminishing (increasing) marginal value for that problem.
       This requires a definition of ``marginal unit'' of information.
       We consider a hierarchy of three kinds of marginal information: learning another signal, learning some deterministic function of another signal, and learning some randomized function (``garbling'') of another signal.
       These correspond respectively to \emph{weak}, \emph{moderate}, and \emph{strong} versions of the definitions, formalized by strengthenings of \emph{submodularity} and \emph{supermoduarity}.

       We also investigate some useful tools and equivalent definitons.
       From an information-theoretic perspective, substitutes can be defined as signals that reveal a diminishing amount of information about a particular event, where the measure of information is some generalized entropy function.
       From a geometric perspective, substitutes can be defined using a measure of distance, namely some Bregman divergence, on the space of \emph{beliefs} about the event; signals are substitutes if the average change in one's belief about the event is dimininishing in the amount of information already known.

\vspace{1em}
\noindent$2$.~~
       We give game-theoretic applications of these definitions, primarily on information aggregation in prediction markets.
       When strategic agents have heterogeneous, valuable information, we would like to understand when and how their information is revealed and aggregated in an equilibrium of strategic play.
       Prediction markets, which are toy models of financial markets, are possibly the simplest setting capturing the essence of this question. 
       However, although the efficient market hypothesis states that information is quickly aggregated in financial markets~\citep{fama1970efficient}, despite much research on this question in economics (\emph{e.g.} \citet{kyle1985continuous,ostrovsky2012information}) and computer science (\emph{e.g.} \citet{chen2007bluffing,dimitrov2008non,gao2013jointly}), very little was previously known about how quickly information is aggregated in markets except in very special cases.

       We address the main open question regarding strategic play in prediction markets: When and how is information aggregated?
       We show that informational substitutes imply that all equilibria are of the ``best possible'' form where information is aggregated immediately, while complements imply ``worst possible'' equilibria where aggregation is delayed as long as possible.
       Furthermore, the respective converses hold as well; \emph{e.g.}, if an information structure guarantees the ``best possible'' equilibria, then it must satisfy substitutes.
       \begin{theorem*}[Informal] In a prediction where trader's signals are strict, strong substitutes with respect to the market scoring rule, in \emph{all} equilibria, traders rush to reveal and aggregate information immediately.
       Conversely, if signals are not strong substitutes, then there are arrival orders for traders where \emph{no} equilibrium has immediate aggregation.
       \end{theorem*}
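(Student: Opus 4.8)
The plan is to translate the strategic statement into the geometric language of the earlier lemmas and then combine a ``conservation of value'' identity with the strong-substitutes inequality. I model the market scoring rule game in the usual way: there is a public prediction, traders move in the given order (possibly several times each), and a trader who moves the public belief from $p$ to a reported belief $q$ is paid $S(q,E)-S(p,E)$. By properness of $S$ and the correspondence with the convex $G$, a trader who believes the true posterior is $q$ and reports it earns, in expectation under their own information, exactly the Bregman divergence $\E\, D_G(q,p)$; this is the single-move specialization of the marginal-value lemma. The one structural fact I need is a telescoping identity: because conditional posteriors form a martingale (the tower property), the sum of the per-move divergences along \emph{any} revelation path from the prior $p_\emptyset$ to the full posterior $p_{\text{all}}$ equals $\E\, G(p_{\text{all}}) - G(p_\emptyset) = \V(\bigjoin_i A_i)$, independent of the path. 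Hence, provided the market fully aggregates by the end (which is where \emph{distinguishability} is used), the total payment to all traders is fixed, so each trader's objective is simply to capture as much of this fixed pie, as early, as possible.

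For the forward direction I fix an arbitrary equilibrium and ask whether a trader can gain by not revealing everything at their first opportunity. I encode any such ``delay'' as: reveal a garbling $A'$ of one's signal $A$ now, let subsequent play reach the public belief $p_{a'b}$ (where $b$ summarizes everything revealed in the interim), then reveal the remainder, moving to $p_{ab}$. Using the telescoping identity (with $p_\emptyset$ reinterpreted as the public belief at the trader's first move, and $\V$ the corresponding conditional value), the deviator's total profit is $\V(A') + \big(\V(A\join B) - \V(A'\join B)\big)$, whereas revealing fully up front earns $\V(A)$; subtracting gives
\[
 \big(\V(B\join A) - \V(A)\big) - \big(\V(B\join A') - \V(A')\big),
\]
which is exactly the strong-substitutes defect and is therefore $\le 0$ (and $<0$ under \emph{strict} strong substitutes), since $A'$ ranges over the continuous (garbling) lattice with $A\meet B \preceq A' \preceq A$. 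Thus delaying never helps and strictly hurts, so in any equilibrium each trader rushes; a backward induction over trading opportunities upgrades this single-step comparison into the statement that the (unique) equilibrium is all-rush.

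The step I expect to be the main obstacle is that off-equilibrium reports can \emph{signal}: when a trader reports the garbling $A'$ rather than the full $A$, the continuation beliefs $p_{a'b}$ and the other traders' responses are governed by their equilibrium strategies, so the deviation is not a clean single-agent optimization. Two ingredients tame this. First, the strong-substitutes definition quantifies over \emph{all} garblings $A'$ on the continuous lattice, which is precisely rich enough to represent whatever partial revelation a deviator might attempt, so no clever signaling report escapes the inequality above. Second, the conservation identity makes the total path-independent, so I only ever compare the deviator's share against the all-rush benchmark and never need to track the exact continuation behavior of the others beyond the fact that aggregation eventually completes. Making the backward induction interact correctly with these continuation beliefs---and verifying that the comparison holds simultaneously for every trader and every order---is the part requiring the most care.

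For the converse, suppose strong substitutes fails: there are $A,B$ on the subsets lattice and a garbling $A'$ with $\V(B\join A') - \V(A') < \V(B\join A) - \V(A)$. I construct a two-signal instance and the order in which the $A$-trader moves, then the $B$-trader, then the $A$-trader again. By the identical computation, the $A$-trader's payoff from the delaying deviation minus the all-rush payoff equals the (now strictly positive) defect above, so the delay is a strictly profitable deviation and no all-rush profile can be an equilibrium for this order; the same computation rules out every candidate all-rush profile, giving the desired ``no all-rush equilibrium'' conclusion.
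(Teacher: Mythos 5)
Your proposal is correct and follows essentially the same route as the paper: the forward direction rests on the same comparison of the deviator's profit under partial revelation (a garbling $A'$ on the continuous lattice) versus immediate truthful revelation, reduced via the strong-substitutes inequality and organized by backward induction over trading opportunities, and your converse is the paper's exact Alice--Bob--Alice construction with the constant-sum (telescoping) argument and distinguishability. The one place you gesture rather than prove --- making the induction certify that the interim revelation $B$ is a subsets-lattice element and that continuations remain truthful off-path --- is precisely what the paper's $C^{(t)}$ construction and inductive claim supply, so the gap you flag is real but fillable by the same machinery you describe.
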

       \begin{theorem*}[Informal] In a prediction market where trader's signals are strict, strong complements with respect to the market scoring rule, in \emph{all} equilibria, traders delay revealing and aggregating information as long as possible.
       Conversely, if signals are not strong complements, then there are arrival orders where \emph{no} equilibrium exhibits full delaying.
       \end{theorem*}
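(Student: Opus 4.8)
The plan is to prove the two directions separately, in each case reducing the strategic analysis to the marginal-value characterization of the market scoring rule. The starting point is that in a market scoring rule a trader who moves the public belief from $p$ to $q$ is paid $S(q,e) - S(p,e)$ on outcome $e$; writing $G$ for the associated convex expected-score function and $r$ for the trader's true posterior, the expected payment equals $G(q) + \langle \nabla G(q), r - q\rangle - \bigl(G(p) + \langle \nabla G(p), r - p\rangle\bigr)$, which is maximized by the truthful report $q = r$ and then equals the Bregman divergence $D_G(r,p)$. Combining this with the Bregman form of the marginal-value lemma, $\V(A \join B) - \V(A) = \E_{a,b} D_G(p_{ab}, p_a)$, shows that the expected profit a trader extracts by revealing signal $B$ when the public information sits at a (possibly garbled) state $A'$ is exactly the marginal value $\V(A' \join B) - \V(A')$. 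The whole argument then turns on how this marginal value moves as the public state $A'$ grows along a play.

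For the forward direction (strict, strong complements imply that every perfect Bayesian equilibrium is all-delay) I would argue by a one-shot-deviation / backward-induction argument along the fixed trading order. Because revealed information only accumulates, the public belief is a martingale and the public state is monotone nondecreasing along every play, so the relevant comparison is always between revealing $B$ at a lower state and at a higher one. Strong complements is exactly the statement that the marginal value of $B$ is weakly larger at the higher state, and strictness makes it strictly larger whenever the informational content genuinely differs; hence a trader holding $B$ strictly prefers to postpone revelation to a more-informed later state. Running the induction from the last trade backward --- the final mover reports truthfully, and each earlier mover, anticipating the all-delay continuation, finds any present revelation dominated by waiting for its last opportunity --- pins down all-delay as the \emph{unique} equilibrium behavior, which is where the distinguishability hypothesis and strictness are used to exclude alternative equilibria.

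The reason the \emph{strong} (continuous-lattice) version of complements is the right hypothesis, and the step I expect to be the main obstacle, is the treatment of \emph{partial} revelations inside a perfect Bayesian equilibrium. A deviating trader is not confined to revealing their full signal or a deterministic coarsening: they may move the market to any intermediate belief, which corresponds to revealing an arbitrary garbling of their signal and can leave the public state at any $A'$ on the continuous lattice with $A \meet B \preceq A' \preceq A$. Ruling out \emph{all} such deviations requires the supermodular inequality for \emph{every} such garbling $A'$, which is precisely strong complements and explains why weak and moderate complements do not suffice. Making this rigorous also forces one to specify consistent off-path beliefs for the other traders and to verify that, under those beliefs, the intervening play still presents the deviator with a monotone-increasing marginal value; this belief-tracking is the delicate part of the formalization.

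For the converse (failure of strong complements yields an order admitting no all-delay equilibrium) I would begin from a witness to the failure of supermodularity on the continuous lattice: signals $A, B$ and a garbling $A'$ with $A \meet B \preceq A' \preceq A$ for which $\V(A' \join B) - \V(A') > \V(A \join B) - \V(A)$, i.e. a state at which $B$ is locally \emph{more} valuable before $A$ is fully out. The plan is to construct a small instance --- one trader (or group) carrying the $A$-information and one carrying $B$ --- and a trading order in which any all-delay play necessarily routes the public belief through the state $A'$ before $B$'s final opportunity. At that state the witness inequality makes an early revelation of $B$ strictly more profitable than waiting, so the $B$-trader has a profitable one-shot deviation and no all-delay equilibrium can survive. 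The main work in this direction is the \emph{realizability} of the counterexample: exhibiting actual signal distributions and an order so that the garbled state $A'$ genuinely occurs on the equilibrium path and the deviation is both feasible and strictly profitable.
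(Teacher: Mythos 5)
Your forward direction is essentially the paper's argument: backward induction within a perfect Bayesian equilibrium, a one-shot deviation to withholding information and revealing it at the last opportunity, the supermodular inequality applied to the garbled state a deviator could create (which is exactly why the continuous-lattice, ``strong'' version of complements is needed), and reliance on PBE subgame rationality so that the intervening traders still reveal truthfully off the equilibrium path. The genuine gap is in the converse. You take a witness $A,B,A'$ with $\V(A'\join B)-\V(A') > \V(A\join B)-\V(A)$ and propose an order ``in which any all-delay play necessarily routes the public belief through the state $A'$ before $B$'s final opportunity,'' so that the $B$-trader profitably deviates by revealing early. This cannot be realized: $A'$ is an arbitrary garbling of $A$ on the continuous lattice, whereas in all-delay play the public state only ever visits subsets-lattice states ($\bot$, then whole signals as traders truthfully reveal at their final opportunities). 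A strictly garbled state $A'$ never occurs on the all-delay path, so there is no moment at which the $B$-trader faces $A'$ and can compare ``reveal now at $A'$'' against ``reveal later at $A$.'' If $A'$ happened to lie on the subsets lattice you would only be witnessing a failure of \emph{weak} complements, and the theorem must cover precisely the case where weak complements holds but the garbling inequality fails.

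The paper resolves this realizability problem by making the $A$-holder, not the $B$-holder, the deviator. With the order Alice--Bob--Alice (Alice holding $A$, Bob holding $B$), in any PBE Bob reports truthfully at his single opportunity and Alice, by distinguishability, ends the market at the fully informed price; hence the two traders' payoffs sum to the constant $\V(A\join B)-\V(\bot)$ \emph{regardless} of Alice's first move. Under all-delay Bob earns $\V(B)-\V(\bot)$; if Alice instead reveals the garbling $A'$ at her first opportunity --- a state she can manufacture unilaterally from her own signal --- Bob earns at most $\V(A'\join B)-\V(A')$, and the constant-sum identity converts any loss of Bob's into a gain for Alice. The two ingredients your converse is missing are thus (i) the deviator \emph{creates} the garbled state rather than encountering it on the equilibrium path, and (ii) the constant-sum structure of the two-trader market, which is what turns ``$B$ is worth less to Bob at $A'$'' into ``Alice strictly prefers to deviate,'' killing the all-delay equilibrium.
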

       Informational S\&C thus seem as fundamental to equilibria of (informational) markets as substitutable items are in markets for goods.

       We believe that informational S\&C have the potential for broad applicability in other game-theoretic settings involving strategic information revelation, and toward this end, give some additional example applications.
       We show that S\&C characterize analogous ``rush/delay'' equilibria in some models of machine-learning or crowdsourcing contests~\citep{abernethy2011collaborative,waggoner2015market} and question-and-answer forums~\citep{jain2009designing}.

\vspace{1em}
\noindent$3$.~~
       We give algorithmic applications, focusing on the complexity of approximately-optimal information acquisition.
       Namely, we define a very broad class of problems, termed \sigsel/, in which a decision maker wishes to acquire information prior to making a decision, but has constraints on the acquisition process.
       For instance, a company wishes to purchase heterogeneous, pricey data sets subject to a budget constraint, or to place up to $k$ sensors in an environment.
       We show that substitutes imply efficient approximation algorithms in many cases such as a budget constraint; this extends to an adaptive version of the problem as well.
       We also show that the problem is hard in general and in the complements case, even when signals are independent uniform bits.
       \begin{theorem*}[Informal] For the \sigsel/ problem with e.g. cardinality or budget (``knapsack'') constraints, in the \emph{oracle model} of input: If signals are weak substitutes, then polynomial-time $1-1/e$ approximation algorithms exist; but in general, or even if signals are assumed to be complements, no algorithm can achieve nonzero approximation with subexponentially many oracle queries. 
       \end{theorem*}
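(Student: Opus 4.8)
The plan is to recast \sigsel/ as maximization of the set function $f(S) = \V(\bigjoin_{i \in S} A_i)$ and to prove the two halves separately, leaning on the standard submodular-optimization toolkit for the positive direction and on an explicit hard instance for the negative direction.

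For the positive direction, I would first observe that $f$ is \emph{monotone}: since $\V(A) = \E_a G(p_a)$ with $G$ convex, revealing an additional signal only refines the posterior, so Jensen's inequality gives $\V(A \join B) \geq \V(A)$, and hence adding signals to $S$ never decreases $f$. Second, the weak-substitutes hypothesis is by definition the statement that $\V$ is submodular on the subsets signal lattice, which is exactly submodularity of $f$ as a set function. Thus $f$ is monotone submodular, and the problem reduces to maximizing a monotone submodular set function subject to the stated constraint families. I would then invoke known results: the greedy algorithm for cardinality constraints and the continuous-greedy / multilinear-relaxation approach for matroid and knapsack constraints, each attaining a $1-1/e$ factor. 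The one wrinkle specific to the oracle model is that we cannot evaluate $f$ exactly; here I would estimate each $\V$-value by sampling signal realizations and applying a single evaluation of $G$ per sample, as afforded by the succinct-input propositions cited in the statement, and bound the estimation error by a concentration argument. Propagating a per-query additive error of $o(1)$ through the greedy analysis yields the claimed $1 - 1/e - o(1)$ guarantee.

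For the negative direction, the plan is to exhibit a single family of instances, built from independent uniform bits, that is simultaneously hard to optimize and provably in the weak-complements class. Let $A_1, \dots, A_n$ be i.i.d.\ fair bits, fix a hidden subset $T$ with $|T| = k$, let the event $E$ be the parity of the bits in $T$, and normalize $G$ so that the prior carries value $\V(\emptyset) = 0$. The key feature is that any collection of bits omitting even one element of $T$ is independent of $E$, so the posterior stays at the prior and $f(S) = 0$; only once $S \supseteq T$ can $E$ be learned, and $f$ jumps to its maximum. A short case check of the marginal-value inequality then shows $f$ is supermodular (every earlier bit of $T$ has zero marginal value while the completing bit has large marginal value), so the instance satisfies weak complements. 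It is also information-theoretically hard: every query $S \not\supseteq T$ returns exactly $0$, so an algorithm learns nothing about $T$ from such queries, and there are $\binom{n}{k}$ candidates for $T$. Hence any algorithm making subexponentially many queries fails, with high probability over a uniformly random $T$, to ever query a set containing $T$, and so returns a set of value $0$; this rules out any nonzero multiplicative approximation even under cardinality constraints. The same needle-in-a-haystack function, realized through the reduction that embeds an arbitrary monotone set function as a value-of-information function, handles the ``in general'' case.

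The main obstacle I anticipate is not the invocation of the submodular-optimization machinery but the careful bookkeeping at the two interfaces with the oracle model. On the positive side, I must verify that sampling-based estimates of $\V$ can be made accurate to within $o(1)$ using only polynomially many samples and a single $G$-evaluation each, and that this error degrades the approximation ratio by only an additive $o(1)$ rather than compounding through the iterations. On the negative side, the crux is confirming that the parity construction genuinely realizes a valid decision problem whose value function is \emph{both} supermodular and flat off the hidden set, since it is precisely the reversal of the marginal-value inequality everywhere on the subsets lattice that ties the impossibility to the complements case rather than merely to general monotone functions.
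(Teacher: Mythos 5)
Your positive direction is essentially the paper's argument (Theorem \ref{thm:alg-subs-pos} together with Propositions \ref{prop:compute-V-small}--\ref{prop:compute-V-small-oracle}): monotonicity via Jensen, submodularity by definition of weak substitutes, known $(1-1/e)$ algorithms, and a concentration argument to handle sampled evaluations of $\V$. That half is fine.

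The negative direction has a genuine gap. Your hard instance is $f(S) = c\cdot\I[S \supseteq T]$ for a hidden $T$ with $|T|=k$ (realized by $E = \bigoplus_{i\in T}A_i$). This $f$ is indeed monotone and supermodular, and it is realizable as a value-of-information function, but it is \emph{not} information-theoretically hard: the cardinality constraint restricts only the output, not the queries, so an algorithm may query $f$ on large sets. Querying $N\setminus\{i\}$ for each $i$ returns $0$ exactly when $i\in T$, so $n$ queries recover $T$ exactly and the algorithm then outputs $T$, which is feasible and optimal. Your claim that ``an algorithm learns nothing about $T$'' from queries is false precisely because the value of $f$ at a queried set depends on the identity of $T$ whenever the set is large. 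The standard needle-in-a-haystack argument requires that the responses to all queries be (almost surely) independent of the hidden needle, and your $f$ violates this. The paper's construction (Proposition \ref{prop:max-monotone-supermodular}) avoids the problem: it takes $f(S) = \max(0,\,|S|-k)$ with a bump $f(S^*)=0.5$ at a single hidden set $S^*$ of size $k$, so that $f$'s value at every set other than $S^*$ itself is a function of $|S|$ alone and reveals nothing about $S^*$; this $f$ is then embedded into a decision problem with independent uniform bits via the general reduction of Theorem \ref{thm:f-to-sigsel} (the recursively defined convex $F$ on the hypercube whose value on the interior of each face equals $f$ of the corresponding fixed coordinates). Your parity gadget would need to be replaced by such a construction, or by any supermodular $f$ whose off-needle values are needle-independent, for the exponential query lower bound to go through.
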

       These results offer a unifying perspective on a variety of similar ``submodularity-based'' solutions in the literature~\citep{krause2005optimal,guestrin2005near,krause2009optimal,golovin2011adaptive}.

\vspace{1em}
\noindent$4$.~~
       We investigate the structure of informational S\&C.
       We give a variety of tools and insights for both identifying substitutable structure and \emph{designing} for it.
       For instance, we provide natural geometric and information-theoretic definitions of S\&C and show they are equivalent to the submodularity-based definitions.

       We address two fundamental questions: Are there (nontrivial) signals that are substitutes for \emph{every} decision problem?
       Second, given a set of signals, can we always design a decision problem for which they are substitutes?
       In the game-theoretic settings above, this corresponds to design of mechanisms for immediate aggregation, somewhat of a holy grail for prediction markets.
       In algorithmic settings, it has relevance for the design of \emph{submodular surrogates}~\citep{chen2015submodular}.
       Unfortunately, we give quite general negative answers to both questions.
       Surprisingly, more positive results arise for complements.
       We give the geometric intuition behind these results and point toward heuristics for substitutable design in practice.

\vspace{1em}
In summary, the contributions of this paper are twofold: (a) in the definitions of informational S\&C, along with a body of evidence that they are natural, tractable, and useful; and (b) in the applications, in which we resolve a major open problem on strategic information revelation as well as give a unifying and general framework for a broad algorithmic problem.
Our results on structure and design of informational S\&C points to potential for these very general definitions and results to have concrete applications.

Taken all together, we believe these results give evidence that informational S\&C, in analogy with the successful theories of substitutable goods, have a natural and useful role to play in game theory, algorithms, and in connecting the two.

\subsection{Outline}
Two sections have been placed in the appendix for convenience.
Appendix \ref{sec:related} gives a detailed survey of related work in a variety of areas.
In Appendix \ref{sec:develop}, we overview and justify our general approach to defining informational S\&C, including historical context, tradeoffs, and intuition.
We particularly focus on a comparison to the proposed definitions of \citet{borgers2013signals}, which we build on in this paper.

In Section \ref{sec:defs}, we concisely and formally define informational S\&C.
We show that prediction problems, and the modern convex analysis understanding of them, can be used to analyze general decision problems.
Leveraging these tools, we give three equivalent definitions from seemingly-disparate perspectives.

In Section \ref{sec:game-theoretic}, we present game-theoretic applications.
Primarily, we show that informational substitutes (complements) characterize best-case (worst-case) information aggregation in prediction markets.

In Section \ref{sec:algorithmic}, we present algorithmic applications.
We define \sigsel/, a class of information acquisition problems, and show that substitutes correspond to efficient approximation algorithms while there are strong hardness results in general.

In Section \ref{sec:structural}, we investigate informational S\&C themselves with an eye toward the previous applications.
We give some results on general classes of S\&C and on the design of prediction or optimization problems for which a given information structure is substitutable, with both game-theoretic and algorithmic implications.

Section \ref{sec:conclusion} summarizes and discusses future work.

\clearpage

\section{Definitions and Foundations} \label{sec:defs}
\subsection{Setting: information structure and decision problems} \label{sec:setting}
We now formally present the setting and definitions.
Motivation for the choices made and relation to prior work, particularly \citet{borgers2013signals}, are described in depth in Section \ref{sec:develop}.

\subsubsection{Model of information and decision problems}
\paragraph{Information structure.}
We take a standard Bayesian model of probabilistic information.
There is a random event $E$ of interest to the decisionmaker, \emph{e.g.} $E \in \{\text{rain},\text{no rain}\}$.
There are also $n$ ``base signals'' $A_1,\dots,A_n$, modeled as random events.
These represent potential information obtained by a decision-maker, \emph{e.g.} $A_i \in \{\text{cloudy}, \text{sunny}\}$.
An \emph{information structure} is given by $E$, $A_1,\dots,A_n$, and a prior distribution $P$ on outcomes $(e,a_1,\dots,a_n)$.
For simplicity, we assume that all $A_i$ and $E$ have a finite set of possible outcomes.

In addition to the base signals, there will be other signals that intuitively represent combinations of base signals.
Formally, there is a set $\Lat$ of signals, with a generic signal usually denoted $A$ or $B$.
Any subscripted $A_i$ always refers to a base signal, while $A$ may in general be any signal in $\Lat$.
We will describe how $\Lat$ is generated from $A_1,\dots,A_n$ momentarily, in Section \ref{sec:lattices}.

We will use lower-case $p$ to refer to probability distributions on $E$, the event of interest.
The notation $p(e)$ refers to the probability that $E=e$, while $p(a_i,e) = \Pr[A_i=a_i, E=e]$, and so on.
The notation $p(e | a_i)$ refers to the probability that $E=e$ conditioned on $A_i=a_i$, obtained from the prior via a Bayesian update: $p(e | a_i) = P(e,a_i) / P(a_i)$.
We will sometimes use the shorthand notation $p_a$ to refer to the posterior distribution on $e$ conditioned on $A = a$, similarly for $p_{a,b}$ when $A=a$ and $B=b$, and so on.
We will abuse notation and write $E$ to represent a set of outcomes, so for instance we may write $e \in E$; similarly for signals.
We also sometimes write $\E_{a\sim A}\left[ \dots \right]$ for the expectation over outcomes $a$ of $A$.

\paragraph{Decision problems and value function.}
A single-agent \emph{decision problem} consists of a set of event outcomes $E$, a decision space $\D$, and a utility function $u: \D \times E \to \mathbb{R}$, where $u(d,e)$ is the utility for taking action $d$ when the event's outcome is $E=e$.
This decision problem, in the context of an information structure, will be how signals derive their value.

Specifically, given the prior $P$, the decision that maximizes expected utility is \mbox{$\arg\max_{d\in \D} \E_e u(d,e)$}.
But now suppose a Bayesian, rational agent knows $P$ and will first observe the signal $A$, then update to the posterior $p_{a}$ on $E$, and then choose a decision maximizing expected utility for this posterior belief.
In this case, her utility will be given by the following ``value'' function:
\begin{equation}
 \V^{u,P}(A) := \E_a \left[ ~ \max_{d \in \D} ~ \E_e \left[ u(d,e) \mid A=a\right] ~ \right]. \label{eqn:def-value}
\end{equation}
We will use $\bot$ to denote a null signal, so that $\V^{u,p}(\bot)$ is the expected utility for deciding based only on the prior distribution.
Where the decision problem and information structure are evident from context, we will omit the superscripts $u,P$.

Intuitively, $\V^{u,P}$ is analogous to a valuation function $v: 2^{\{1,\dots,n\}} \to \mathbb{R}$ over subsets of items.
However, inputs to $\V$ may not only represent subsets of $A_1,\dots,A_n$, but also signals that give partial information about them.

\subsubsection{Signal lattices} \label{sec:lattices}
We will consider three kinds of signal sets $\Lat$, leading to ``weak'', ``moderate'', and ``strong'' substitutes and complements.
In each case, the set of signals $\Lat$ will form a lattice.
\begin{definition} \label{def:lattice}
A \emph{lattice} $(U,\preceq)$ is a set $U$ together with a partial order $\preceq$ on it such that for all $A,B \in U$, there are a \emph{meet} $A \meet B$ and \emph{join} $A \join B$ in $U$ satisfying:
\begin{enumerate}
 \item $A \meet B \preceq A \preceq A \join B$ and $A \meet B \preceq B \preceq A \join B$; and
 \item the meet and join are the ``highest'' and ``lowest'' (respectively) elements in the order satisfying these inequalities.
\end{enumerate}
In a lattice, $\bot$ denotes the ``bottom'' element and $\top$ the ``top'' element, \emph{i.e.} $\bot \preceq A \preceq \top$ for all $A \in U$, if they exist.
\end{definition}

The following definition illustrates one very common lattice, that of subsets of a ground set.
\begin{definition} \label{def:subsets-lattice}
The \emph{subsets signal lattice} generated by $A_1,\dots,A_n$ consists of an element $A_S$ for each subset $S$ of $\{A_1,\dots,A_n\}$, where $A_S$ is the signal conveying all realizations $\{A_i = a_i : i \in S\}$.
Its partial order is $A_S \preceq A_{S'}$ if and only if $S \subseteq S'$.
Hence, its meet operation is given by set intersection and join by set union.
\end{definition}
The bottom element $\bot$ of the subsets lattice exists and is a null signal corresponding to the empty set (we will use this notation somewhat often), while the top element also exists and corresponds to observing all signals.
Also, the partial ordering $\preceq$ denotes \emph{less informative}.
These facts will continue to hold for the other two signal lattices we define.

For the other two lattices, we utilize the main idea from the classic model of information due to \citet{aumann1976agreeing}.
Let the set $\Gamma \subseteq A_1 \times \cdots \times A_n$ consist of all signal realizations $(a_1,\dots,a_n)$ in the support of the prior distribution.
Now, a \emph{partition} is a collection of subsets of $\Gamma$ such that each $\gamma \in \Gamma$ is in exactly one subset.
Each signal $A_i$ corresponds to a partition of $\Gamma$ with one subset for each outcome $a_i$, namely, the set of realizations $\gamma = (\cdots, a_i, \cdots)$.
Example \ref{ex:aumann-partition}, given after the definition of discrete signal lattice, illustrates the partition model.

As in Aumann's model, the partitions of $\Gamma$ form a lattice, each partition corresponding to a possible signal.
The partial ordering is that $A \preceq B$ if the partition of $A$ is ``coarser'' than that of $B$.
One partition is \emph{coarser} than another (which is \emph{finer}) if each element of the former is partitioned by elements of the latter.
The join of two partitions is the coarsest common refinement (the coarsest partition that is finer than each of the two), while the meet is the finest common coarsening.
Example \ref{ex:aumann-coarse}, given after the definition, illustrates coarsenings and refinements.

\begin{definition} \label{def:discrete-lattice}
The \emph{discrete signal lattice} generated by $A_1,\dots,A_n$ consists of all signals corresponding to partitions of $\Gamma$, where $\Gamma$ is the subset of $A_1 \times \cdots \times A_n$ with positive probability.
Its partial order has $A \preceq B$ if the partition associated to $A$ is coarser than that of $B$.
\end{definition}
\begin{example}[Signals modeled as partitions] \label{ex:aumann-partition}
\textit{(a)} We have two independent uniform bits $A_1$ and $A_2$.
In this case $\Gamma = \{(0,0), (0,1), (1,0), (1,1)\}$.
Here $A_1$ is modeled as the partition consisting of two elements: $\{(0,0),(0,1)\}$ and $\{(1,0),(1,1)\}$.
The first element of the partition is the set of realizations where $A_1 = 0$, while the second is the set of realizations where $A_1 = 1$.

\textit{(b)} Now modify the example so that $A_1$ and $A_2$ are perfectly correlated: with probability $0.5$, $A_1 = A_2 = 0$, and with probability $0.5$, $A_1 = A_2 = 1$.
Here, $\Gamma = \{(0,0), (1,1)\}$ and $A_1$ corresponds to the partition consisting of $\{(0,0)\}$ and $\{(1,1)\}$.

\textit{(c)} Now revisit the first case where $A_1$ and $A_2$ are independent.
Imagine an agent who observes both base signals and wishes to reveal only the XOR $A_1 \oplus A_2$ of the bits.
This new signal released by the agent, call it $A$, may also be modeled as a partition of $\Gamma$, where the elements of the partition are $\{(0,0),(1,1)\}$ and $\{(0,1),(1,0)\}$.
\end{example}
\begin{example}[The order given by coarsenings] \label{ex:aumann-coarse}
\textit{(a)} We have a single signal $A_1$ which is distributed uniformly on $\{1,2,3,4,5,6\}$.
Then $\Gamma = \{1, 2, 3, 4, 5, 6\}$ and $A_1$'s partition contains these six subsets: $\{1\},\{2\},\{3\},\{4\},\{5\},\{6\}$.

\textit{(b)} Given the above information structure, suppose that an agent holding $A_1$ will commit to releasing some deterministic function of $A_1$.
In terms of information revealed, the agent may map each realization $a_1 \in \{1,\dots,6\}$ to a different report -- this is the same as just revealing $a_1$ -- or she may map some realizations to the same report.
Suppose that she reports ``small'' whenever $a_1 \in \{1,2,3\}$ and reports ``large'' whenever $a_1 \in \{4,5,6\}$.
The information revealed by this report is captured by a binary signal $A$ corresponding to the partition with two elements: $\{1,2,3\}$ and $\{4,5,6\}$.
The partition of $A$ is coarser than that of $A_1$, so $A \preceq A_1$ on the discrete lattice.

\textit{(c)} Given the same information structure, imagine that the agent will commit to releasing ``even'' whenever $a_1 \in \{2,4,6\}$ and ``odd'' whenever $a_1 \in \{1,3,5\}$.
This corresponds to a signal $B$ whose partition has these two elements and is again coarser than that of $A_1$.

\textit{(d)} Consider the above two signals $A$ and $B$.
They are incomparable: Neither is coarser nor finer than the other.
The meet $A \meet B$ will be the null signal\footnote{Formally, this is the signal whose partition contains a single element: all of $\Gamma$.} $\bot$, intuitively because given $A$, one cannot guarantee anything about the outcome of $B$.
The join $A \join B$ intuitively corresponds to observing both signals.
Let $C = A \join B$.
The partition corresponding to $C$ has the following four elements: $\{1,3\}$, $\{2\}$, $\{4,6\}$, $\{5\}$.
These each correspond to a realization of the signal $C$; call the realizations respectively $c_1,c_2,c_3,c_4$.
Here, when for example $A =$ ``small'' and $B =$ ``even'', then $C = c_2$ and an observer of $C$ would know that $A_1 = 2$.
When $A =$ ``large'' and $B =$ ``even'', then $C = c_3$ and an observer of $C$ would know that $A_1 \in \{4,6\}$, updating to a posterior on these possibilities.
\end{example}

\vspace{1em}
For the third and strongest notion, we extend the model by, intuitively, appending randomness to the signals on the discrete lattice.
Given any signal $A$ on the discrete lattice, a ``garbling'' of $A$ can be captured by a randomized function of $A$; but this may be modeled as a \emph{deterministic} function $s(A,r)$ where $r$ is a uniform $[0,1]$ random variable\footnote{In some applications, it may be more desirable to use an infinite string of independent uniform bits.}.
This observation allows us to ``reduce'' to the deterministic case, but where each possible signal carries extra information in the form of some independent randomness.

Specifically, let $\Gamma$ be defined as above (the subset of $A_1 \times \cdots \times A_n$ with positive probability) and, for each partition $\Pi$ of $\Gamma$, let $R_{\Pi} \in [0,1]$ drawn independently from the uniform distribution.
Let $\Gamma' = \Gamma \times \mathbf{R}$ where $\mathbf{R} = \bigtimes_{\Pi} R_{\Pi}$.
Now, we proceed as before, but using $\Gamma'$.\footnote{To be precise, we should restrict to measurable subsets using the Lebesgue measure on $[0,1]$.
  We will omit this discussion for simplicity; if concerned, the reader may alternatively assume that each $R_{\Pi}$ is drawn uniformly from a massive but finite set, with some tiny $\epsilon$ approximation carried through our results.}
\begin{definition} \label{def:continuous-lattice}
The \emph{continuous signal lattice} consists of a signal corresponding to each partition of $\Gamma'$.
Its partial order has $A \preceq B$ if the partition associated to $A$ is coarser than that of $B$.
\end{definition}

\begin{example}[Modeling garblings via the continuous lattice] \label{ex:continuous-strategy}
Consider a uniformly random bit $A_1$ as the only base signal; the resulting $\Gamma$ is $\{0, 1\}$.
Now consider the garbling where, if $A_1 = 0$, then output ``happy'' with probability $q_0$ and ``sad'' otherwise; if it equals $1$, then output ``happy'' with probability $q_1$ and ``sad'' otherwise.
Call the output of the garbling $A$.
Then $A$ can be modeled as a partition of $\Gamma \times [0,1]$ with the following two subsets: $\{(0,x): 0 \leq x \leq q_0\} \cup \{(1,x): 0 \leq x \leq q_1\}$, and $\{(0,x) : q_0 < x \leq 1\} \cup \{(1,x) : q_1 < x \leq 1\}$.
Here the first realization of $A$ corresponds to the output ``happy'', while the second corresponds to output ``sad''.
To see this, note for instance that the first realization contains all the elements of $\Gamma \times [0,1]$ where $A_1 = 0$ and the randomness variable $x \leq q_0$.
So when $A_1 = 0$, assuming $x$ is drawn uniformly and independently from $[0,1]$, then the outcome of $A$ is ``happy'' with probability $q_0$.
On the continuous lattice, $A_1$ corresponds to the partition of singletons such as $\{(0,0.35142)\}$, $\{(1,0.92241)\}$, and so on.
That is, it corresponds to observing both the original binary bit as well as the random real number $x$.
Because this partition is finer than that corresponding to $A$, we have $A \preceq A_1$ on the continuous lattice.
\end{example}

The use of ``happy'' and ``sad'' for the outputs in the above example illustrates that it is not important, when considering the information conveyed by signal $A$, to consider what its realizations were \emph{named}.
All that matters is their distributions, e.g. the partitions they represent.

\begin{example}[Modeling garblings, continued] \label{ex:continuous-strategy-2}
Again let $A_1$ be a uniformly random bit, and now suppose $A$ is obtained by adding to $A_1$ independent Gaussian noise with mean $0$ and variance $1$.
In this case, intuitively, each outcome of $A$ (say $A=0.3$) represents two possibilities (such as $A_1 = 0$ and the Gaussian is $0.3$, or $A_1 = 1$ and the Gaussian is $-0.7$).
$A$ can be modeled as a partition of $\Gamma \times [0,1]$ where $x \in [0,1]$ is interpreted as the quantile of the outcome of the Gaussian.
Each member of the partition has two elements.
These can be written $(0,x_0)$ and $(1,x_1)$ with $x_0 = \Phi(A)$ and $x_1 = \Phi(A-1)$, using the standard normal CDF $\Phi$.
Given the realization $A=0$, the posterior distribution on $A_1$ is given by a Bayesian update depending on the probability density of the Gaussian at $0$ and at $-1$.
\end{example}

\subsection{The definitions of substitutes and complements} \label{sec:defs-subs-comps}
We utilize a common notion of diminishing and increasing marginal value.
For example, the idea of submodularity is that a lattice element $B$'s marginal contribution to $A$ should be smaller that to some $A' \preceq A$.
\begin{definition} \label{def:sub-super-modular}
\defsubsupermodular
\end{definition}
\begin{definition}[Weak Informational S\&C] \label{def:subs-comps}
\defsubscompsweak
\end{definition}
For moderate and strong substitutes, we will use a strengthening of submodularity by requiring diminishing marginal value with respect to, respectively, all deterministic and randomized functions of a signal.\footnote{Previous versions of this paper used submodularity on the discrete and continuous signal lattices, which is a more restrictive definition. We only need this weaker definition for results in this paper, but in general both are interesting and the right choice may be context-dependent; or there could be other interesting variations.}
\begin{definition}[Moderate and Strong Informational S\&C] \label{def:mod-strong-subs-comps}
\defsubscompsmodstrong
\end{definition}
Verbally, S\&C capture that the more pieces of information one has, the less valuable (respectively, more valuable) $B$ becomes.
The levels of weak, moderate, and strong capture the senses in which ``pieces of information'' is interpreted.
Weak substitutes satisfy diminishing marginal value whenever a whole signal is added to a subset of signals.
However, they do not give guarantees about marginal value with respect to partial information about signals.
Moderate and strong substitutes respectively satisfy diminishing marginal value when deterministic (randomized) partial information about a signal is revealed.

\begin{observation}
Strong substitutes imply moderate substitutes, which imply weak substitutes.
The same holds for complements.
\end{observation}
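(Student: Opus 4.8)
The plan is to prove both chains of implications at once by noting that all three notions are defined by \emph{the same} inequality, quantified over triples $(A,B,A')$ in which $A,B$ always range over the subsets lattice and only the location of $A'$ changes: the subsets, discrete, or continuous lattice for weak, moderate, and strong respectively. Since these three lattices sit inside one another, I would show that the family of triples for the stronger notion contains that of the weaker notion, so that the stronger universally-quantified inequality specializes to the weaker one. Concretely, I would exhibit an order-preserving embedding $\iota_1$ of the subsets lattice into the discrete lattice and $\iota_2$ of the discrete lattice into the continuous lattice, and argue that each weak triple, viewed through $\iota_1$, is a legal moderate triple, and each moderate triple, viewed through $\iota_2$, is a legal strong triple.

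The content is in checking that these embeddings preserve exactly the quantities appearing in the inequality: the order $\preceq$, the join $B \join A'$ with a subsets signal $B$, and the value $\V$. For $\iota_2$, a discrete signal is a partition $\Pi$ of $\Gamma$, and I would map it to the partition of $\Gamma' = \Gamma \times \mathbf{R}$ whose blocks are $\{P \times \mathbf{R} : P \in \Pi\}$, i.e.\ ``ignore the appended randomness.'' This map is order-preserving and sends joins to joins (the coarsest common refinement ignores the $\mathbf{R}$ coordinate whenever both arguments do), and since $\mathbf{R}$ is independent of $(E,A_1,\dots,A_n)$, conditioning on a block $P \times \mathbf{R}$ yields the same posterior on $E$ as conditioning on $P$; hence $\V$ is unchanged. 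For $\iota_1$, I would send $A_S$ to the partition of $\Gamma$ grouping realizations that agree on the coordinates in $S$; then $S \subseteq T$ gives $\iota_1(A_S) \preceq \iota_1(A_T)$, the join $A_S \join A_T = A_{S \cup T}$ is preserved, and $\V$ is literally unchanged because the underlying signal is the same.

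With these facts the implications are immediate. Given a weak triple $(A,B,A')$ with $A \meet B \preceq A' \preceq A$ on the subsets lattice, applying $\iota_1$ makes $A'$ a discrete signal; order-preservation gives $A \meet B \preceq A' \preceq A$ on the discrete lattice, and join- and value-preservation make the two inequalities $\V(B \join A') - \V(A') \geq \V(B \join A) - \V(A)$ literally identical, so moderate substitutes (which quantify over all discrete $A'$) imply the weak inequality; the step from moderate to strong is identical using $\iota_2$. Reversing every inequality gives the complements statements verbatim. If one also tracks the strictness clause, the hypotheses ``$A,B$ incomparable'' and ``$A' \neq A$'' are stated on the subsets lattice, where $A,B$ always live, so they are unaffected by the embeddings and strictness transfers as well.

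The one thing to be careful about --- and the only real obstacle --- is the meet. The subsets lattice is \emph{not} a sublattice of the discrete lattice in general: for perfectly correlated $A_1,A_2$ the subsets meet is $A_{\{1\}} \meet A_{\{2\}} = \bot$, yet the two induced partitions coincide, so the discrete meet is strictly finer. The argument must therefore never invoke meet-preservation; it suffices to read $A \meet B$ in all three definitions as the subsets-lattice meet, which is legitimate since $A,B$ always lie on the subsets lattice, so that the betweenness constraint $A \meet B \preceq A' \preceq A$ is the same expression in each definition and the triple-containment goes through. The remaining checks --- order-, join-, and $\V$-invariance under $\iota_1,\iota_2$, in particular the invariance of posteriors on $E$ under appending the independent randomness $\mathbf{R}$ --- are routine.
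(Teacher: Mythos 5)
Your proof is correct and takes essentially the same approach as the paper's, which simply observes that the three lattices are nested with agreeing partial orders, so that each stronger definition imposes a superset of the required inequalities. Your extra care about the meet --- noting that the subsets lattice need not be a sublattice of the discrete lattice (e.g.\ for perfectly correlated signals), so that $A \meet B$ must be read as the subsets-lattice meet throughout --- is a genuine subtlety that the paper's one-line proof glosses over, and your resolution is the intended reading.
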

\begin{proof}
The respective lattices are supersets, i.e. the continuous signal lattice is a superset of the discrete lattice which is a superset of the subsets lattice, and the partial orderings agree.
So each substitutes definition requires $\V^{u,P}$ to satisfy a set of inequalities at various points, and in going from weak to moderate to strong substitutes, we simply increase the set of required inequalities that signals must satisfy.
\end{proof}

\begin{example}[Substitutes] \label{ex:one-bit-subs}
The event $E$ is a uniformly random bit and the two signals $A_1 = E$ and $A_2 = E$.
That is, both signals are always equal to $E$.
The decision problem is to predict the outcome of $E$ by deciding either $0$ or $1$, with a payoff of $1$ for correctness and $0$ otherwise.
In this case, one can immediately see that $A_1$ and $A_2$ are \emph{e.g.} weak substitutes, as a second signal never gives marginal benefit over the first.
\end{example}

\begin{example}[Complements] \label{ex:one-bit-comps}
The event $E$ and decision problem are the same as in Example \ref{ex:one-bit-subs}, but this time $A_1$ and $A_2$ are uniformly random bits with $E = A_1 \oplus A_2$, the XOR of $A_1$ and $A_2$.
In this case, $A_1$ and $A_2$ are immediately seen to be \emph{e.g.} weak complements, as a first signal never gives marginal benefit over the prior.
\end{example}

\begin{example}[Weak vs moderate] \label{ex:weak-but-not-moderate}
Here is an example of weak substitutes that are not moderate substitutes.
Intuitively, we will pair the previous two examples.
The event $E$ consists of a pair $(E_b,E_c)$ of independent uniformly random bits.
The decision problem is to predict both components of $E$, getting one point for each correct answer.
Let the random variable $B_1 = E_b$ and $B_2 = E_b$.
Let the random variables $C_1$ and $C_2$ be uniformly random bits such that $E_c = C_1 \oplus C_2$.

Now, consider the signals $A_1 = (B_1,C_1)$ and $A_2 = (B_2,C_2)$.
Intuitively, the first component of each signal completely determines $E_b$, while the second component gives no information about $E_c$ until combined with the other signal.
Hence these signals intuitively have both substitutable and complementary internal structure.
Consider the subsets lattice $\{\emptyset, \{A_1\}, \{A_2\}, \{A_1,A_2\}\}$.
If we modify the decision problem such that predicting the first component of $E$ is worth $1+\epsilon$ points, then these signals are weak substitutes: Each alone is worth $1+\epsilon$ points, while together they are worth $2+\epsilon$ points.
On the other hand, if we modify the decision problem such that the second component of $E$ is worth $1+\epsilon$ points, then these signals become weak complements for analogous reasons.

On the other hand, these signals are neither moderate substitutes nor moderate complements.
One way to see this is to consider ``coarsening'' $A_1$ into the signal $B_1$; this has diminishing marginal value when added to $A_2$.
However, we could also coarsen $A_1$ into the signal $C_1$, which has increasing marginal value when added to $A_2$.
\end{example}

\subsection{Scoring rules and a revelation principle} \label{sec:revelation}
We now introduce proper scoring rules and prove a useful ``revelation principle''.

A \emph{scoring rule} for an event $E$ is a function $S: \Delta_E \times E \to \mathbb{R}$, so that $S(\hat{q},e)$ is the score assigned to a prediction (probability distribution) $\hat{q}$ when the true outcome realized is $E=e$.
Define the useful notation $S(\hat{q};q) = \E_{e\sim q} S(\hat{q},e)$ for the expected score under true belief $q$ for reporting $\hat{q}$ to the scoring rule.

The scoring rule is \emph{(strictly) proper} if for all $E,q$, setting $\hat{q} = q$ (uniquely) maximizes the expected score $S(\hat{q};q)$.
In other words, if $E$ is distributed according to $q$, then truthfully reporting $q$ to the scoring rule (uniquely) maximizes expected score.

A fundamental characterization of scoring rules is as follows:
\begin{fact}[\cite{mccarthy1956measures,savage1971elicitation,gneiting2007strictly}] \label{fact:scoring-char}
For every (strictly) proper scoring rule $S$, there exists a (strictly) convex function $G: \Delta_E \to \R$ with (1) $G(q) = S(q;q)$ and (2)
 \[ S(\hat{q},e) = G(\hat{q}) + \langle G'(\hat{q}), \delta_e - \hat{q} \rangle  \]
where $G'(\hat{q})$ is a subgradient of $G$ at $\hat{q}$ and $\delta_e$ is the probability distribution on $E$ putting probability $1$ on $e$ and $0$ elsewhere.

Furthermore, for every (strictly) convex function $G: \Delta_E \to \R$, there exists a (strictly) proper scoring rule $S$ such that (1) and (2) hold.
\end{fact}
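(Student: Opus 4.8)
The plan is to prove the two directions separately, unified by the observation that the candidate convex function should be the expected score under truthful reporting, $G(q) := S(q;q)$, and that properness forces this to be a supremum of affine functions of $q$.

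For the forward direction, I would first note that for each fixed report $\hat q$ the map $q \mapsto S(\hat q; q) = \sum_e q(e) S(\hat q, e)$ is affine in $q$. Properness says $S(\hat q; q) \le S(q; q) = G(q)$ for all $\hat q$, with equality at $\hat q = q$, so $G(q) = \sup_{\hat q} S(\hat q; q)$ is a pointwise supremum of affine functions and hence convex; this also establishes (1) immediately. To obtain the subgradient representation, fix $\hat q$ and consider the vector $v \in \R^{E}$ with $v(e) = S(\hat q, e)$. The affine function $q \mapsto \langle v, q\rangle = S(\hat q; q)$ lies weakly below $G$ everywhere and touches it at $\hat q$, so it is a supporting hyperplane; rearranging $G(q) \ge \langle v, q\rangle$ together with $G(\hat q) = \langle v, \hat q\rangle$ gives $G(q) \ge G(\hat q) + \langle v, q - \hat q\rangle$, i.e. $v$ is a subgradient of $G$ at $\hat q$. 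Setting $G'(\hat q) := v$ and using $\langle v, \hat q\rangle = S(\hat q;\hat q) = G(\hat q)$, a one-line computation verifies (2): the right-hand side is $G(\hat q) + v(e) - \langle v,\hat q\rangle = v(e) = S(\hat q, e)$. Strict properness makes the maximizer $\hat q = q$ unique, which is precisely the statement that the supporting affine function touches $G$ only at $\hat q$, i.e. that $G$ is strictly convex.

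For the converse, given a (strictly) convex $G$ I would select a subgradient $G'(\hat q)$ at each $\hat q$ and simply \emph{define} $S$ by formula (2). Property (1) is then immediate since the inner-product term vanishes at $\hat q = q$. To check properness, compute the expected score $S(\hat q; q) = \E_{e\sim q} S(\hat q, e) = G(\hat q) + \langle G'(\hat q), q - \hat q\rangle$, using $\E_{e\sim q}[\delta_e] = q$. The subgradient inequality then gives $G(q) \ge G(\hat q) + \langle G'(\hat q), q - \hat q\rangle = S(\hat q; q)$ with equality at $\hat q = q$, so truthful reporting maximizes expected score; when $G$ is strictly convex the subgradient inequality is strict for $\hat q \ne q$, making the maximizer unique and $S$ strictly proper.

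The main obstacle is the careful handling of subgradients, because $\Delta_E$ is a lower-dimensional subset of $\R^E$ (its points sum to $1$). First, a subgradient is only determined up to a vector orthogonal to the affine hull of the simplex; this causes no problem because $\delta_e - \hat q$ always lies in the tangent space (its coordinates sum to zero), so the value $\langle G'(\hat q), \delta_e - \hat q\rangle$ appearing in (2) is independent of the chosen representative. Second, in the converse direction one must ensure a subgradient exists at every $\hat q$: this holds at all relative-interior points of $\Delta_E$ by standard convex analysis, and at boundary points one either appeals to the extension of the subdifferential or restricts to the relative interior (consistent with the measure-theoretic care taken elsewhere in the paper). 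Neither issue affects the algebra above; they only require stating the correct ambient space and domain for $G'$.
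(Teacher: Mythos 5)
Your proposal is correct and follows essentially the same route as the paper's proof: the converse direction defines $S$ from a chosen subgradient and verifies properness via the subgradient inequality, and the forward direction obtains convexity of $G(q) = S(q;q)$ as a pointwise supremum of the affine maps $q \mapsto S(\hat q; q)$ and identifies the vector $\bigl(S(\hat q,e)\bigr)_{e}$ as a subgradient. Your added remarks on the affine hull of $\Delta_E$ and on subgradient existence at boundary points are a welcome bit of extra care that the paper elides, but they do not change the argument.
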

\begin{proof}
Given any (strictly) convex $G$, we first check that the induced $S$ is (strictly) proper.
Select a subgradient $G'(p)$ at each point $p$.
The expected score for reporting $\hat{q}$ when $E$ is distributed according to $q$ is
\begin{align*}
 S(\hat{q};q)
  &= \E_{e\sim q} S(\hat{q},e)  \\
  &= G(\hat{q}) + \langle G'(\hat{q}), q - \hat{q}\rangle  \\
  &\leq G(q)  & \text{by convexity of $G$}  \\
  &= S(q;q) .
\end{align*}
Note that the inequality follows simply because, for any convex $G$, if we take the linear approximation at some point $\hat{q}$ and evaluate it at a different point $q$, this lies below $G(q)$.
Furthermore, if $G$ is strictly convex, then this inequality is strict, implying strict properness.

Now, given a (strictly) proper $S$, we show that it has the stated form.
Define $G(q) = S(q;q)$.
Note that $S(\hat{q};q) = \E_{e\sim q} S(\hat{q},e)$ is a linear function of $q$.
By properness, each $G(q) = S(q;q) = \max_{\hat{q}} S(\hat{q};q)$.
Since $G(q)$ is a pointwise maximum over a set of linear functions of $q$, $G$ is convex.
If $S$ was strictly proper, then $G(q)$ was the unique maximum at every point, implying that $G$ is strictly convex.

Now we claim that $S(q; \cdot)$ is a subtangent of $G$ at $q$: it is linear, equal to $G$ at $q$, and everywhere below $G$ by definition of $G$.
So in particular $S(q,e) = S(q;\delta_e) = G(q) + \langle G'(\hat{q}), \delta_e - q\rangle$, as promised.
\end{proof}

\begin{figure}[ht]
\centering
\includegraphics[width=0.8\textwidth]{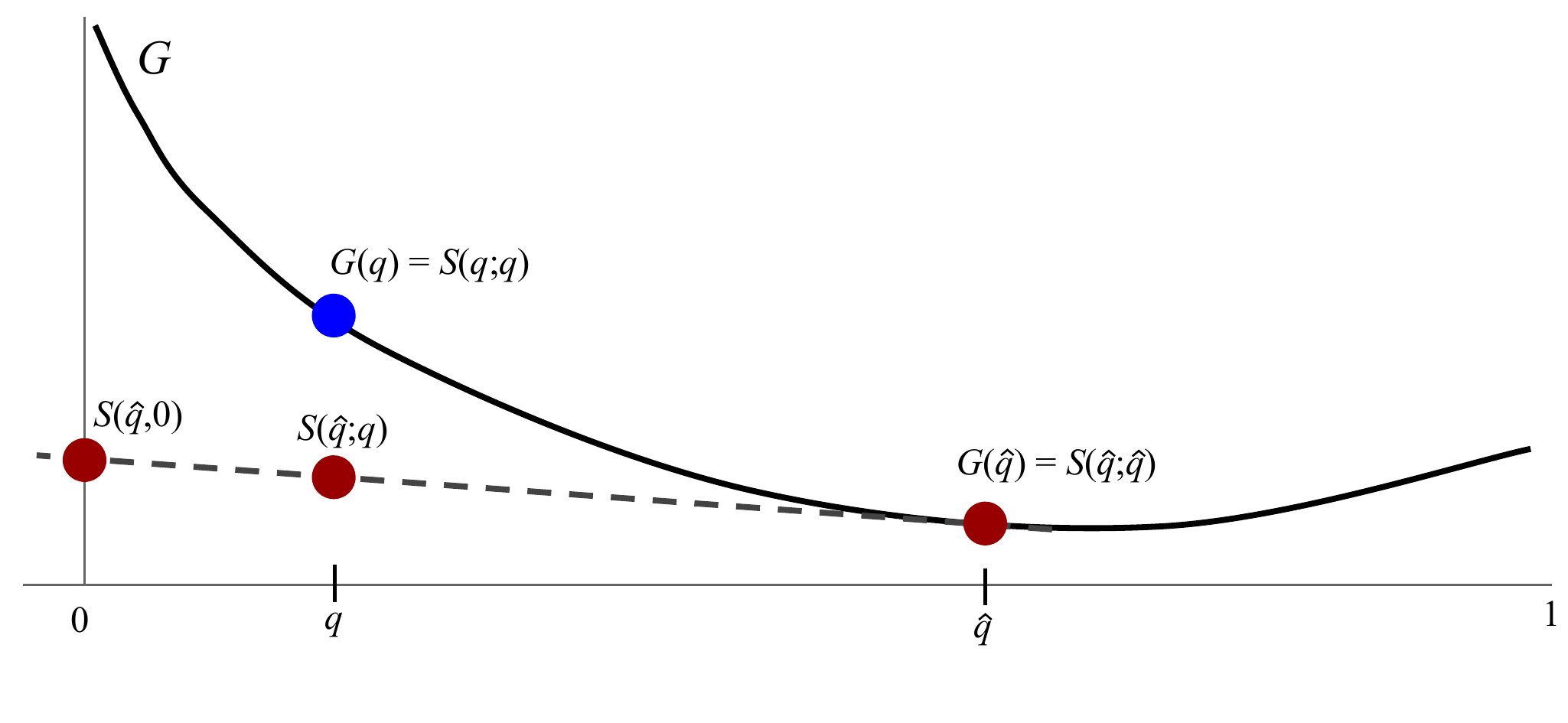}
\caption{Illustration of the connection between a scoring rule $S$ and $G$, its associated convex expected score function, for a binary event $E$ (Fact \ref{fact:scoring-char}).
  The $x$-axis is the probability that $E=1$.
  $S(\hat{q},e)$ is the score obtained for predicting $\hat{q}$ when $E=e$, while $S(\hat{q};q)$ is the expected score for predicting $\hat{q}$ when the believed distribution over $E$ is $q$.
  By the characterization, $S(\hat{q};q) = G(\hat{q}) + \langle G'(\hat{q}), q-\hat{q}\rangle$, which is pictured by taking the linear approximation to $G$ at $\hat{q}$ and evaluating it at $q$.
  Convexity of $G$ implies that this linear approximation is always below $G$, hence reporting truthfully is optimal.
  We return to this picture when discussing the relationship to Bregman divergences.}
\label{fig:scoringrule}
\end{figure}

\begin{example} \label{ex:def-log-scoring-rule}
The $\log$ scoring rule is $S(p,e) = \log p(e)$, \emph{i.e.} the logarithm (usually base $2$) of the probability assigned to the realized event.
The expected score function is $\sum_e p(e) \log p(e) = -H(p)$, where $H$ is the Shannon entropy function.
\end{example}

Notice that a scoring rule is a special case of a decision problem: The utility function is the scoring rule $S$, $E$ is the event picked by nature, and the decision space $\D = \Delta_E$.
We now show that in a sense, scoring rules capture \emph{all} decision problems.
This is not surprising or difficult, and may have been observed prior to this work; but we formalize it because it captures a very nice and useful intuition.
\begin{theorem}[Revelation principle] \label{thm:revelation}
\theoremrevelation
\end{theorem}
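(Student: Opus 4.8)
The plan is to exhibit the right scoring rule by observing that $\V^{u,P}(A)$ depends on the signal $A$ only through the posteriors it induces. Concretely, I would define $G : \Delta_E \to \R$ by $G(q) = \max_{d \in \D} \E_{e \sim q}\, u(d,e)$, the optimal expected utility achievable when one's belief about $E$ is $q$. Since $q \mapsto \E_{e\sim q} u(d,e)$ is linear in $q$ for each fixed $d$, the function $G$ is a pointwise maximum of linear functions and hence convex (and finite, since $E$ is finite and $\Delta_E$ compact). Rewriting the definition of value in terms of posteriors, $\V^{u,P}(A) = \E_a\, G(p_a)$, where $p_a$ is the posterior on $E$ given $A = a$.

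First I would apply Fact~\ref{fact:scoring-char} to this convex $G$ to obtain a proper scoring rule $S : \Delta_E \times E \to \R$ satisfying $S(q;q) = G(q)$ for all $q$. This $S$ is the candidate, and it remains only to verify that it reproduces the same value function for every information structure and signal.

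Next I would compute $\V^{S,P}(A)$ directly. Viewing the scoring rule as a decision problem with decision space $\D = \Delta_E$ and utility $S(\hat q, e)$, an agent observing $A = a$ chooses $\hat q$ to maximize $\E_e[S(\hat q, e) \mid A = a] = S(\hat q; p_a)$. By properness of $S$ this is maximized at $\hat q = p_a$, attaining $S(p_a; p_a) = G(p_a)$. Averaging over realizations $a$ yields $\V^{S,P}(A) = \E_a\, G(p_a) = \V^{u,P}(A)$, and since this holds for every $P$ and $A$, the equivalence follows.

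There is no deep obstacle here; the essential content is simply the observation that, for the purpose of computing value, any decision problem collapses to the single convex function $G$ of the posterior, and that properness is exactly the condition ensuring a scoring rule recovers $G(q)$ as the optimal attainable expected score at belief $q$. The only points needing a word of care are that the maximum defining $G$ is attained (immediate, as $\Delta_E$ is compact and the objective continuous) and, if the original $\D$ is infinite, that $G$ remains a well-defined convex function as a supremum of linear functions; both are routine given the finiteness of $E$.
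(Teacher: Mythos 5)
Your proposal is correct and is essentially the paper's own argument: the value of a signal collapses to $\E_a G(p_a)$ for the convex optimal-expected-utility function $G$, and properness is exactly what makes a scoring rule attain $G(q)$ at belief $q$. The only cosmetic difference is that you invoke the existence direction of Fact~\ref{fact:scoring-char} to produce $S$ from $G$, whereas the paper writes the scoring rule explicitly as $S(\hat q,e)=u(d^*_{\hat q},e)$ and checks properness by hand --- but that explicit rule is precisely the one the characterization yields for $G$ with the subgradient induced by the optimal decision, so the two constructions coincide.
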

\begin{proof}
The idea of the proof, as suggested by the name, is simply for the agent to report her belief $q$ about $E$ to the scoring rule and for the scoring rule to simulate the optimal decision for this belief, paying the agent according to the utility derived from that decision.
For a given distribution (``belief'') $q$ on $E$, let $d^*_q$ be the optimal decision, \emph{i.e.} $d^*_q = \arg\max_{d\in\D} \E_{e\sim q} u(d,e)$.
Now, given $u,\D,E$, let
 \[ S(\hat{q},e) = u(d^*_{\hat{q}}, e).  \]
First let us show properness, \emph{i.e.} $S(\hat{q};q) \leq S(q;q)$.
We have
\begin{align*}
 S(\hat{q};q)
  &= \E_{e\sim q} S(\hat{q},e)  \\
  &= \E_{e\sim q} u(d^*_{\hat{q}}, e)  \\
  &\leq \E_{e\sim q} u(d^*_q, e)  \\
  &= \E_{e\sim q} S(q,e)  \\
  &= S(q;q)
\end{align*}
using the definition of $d^*_q$.

Now let us check equivalence to the original problem. Let $q_a$ be the distribution on $E$ conditioned on $A=a$.
We have
\begin{align*}
 \V^{u,P}(A) &= \E_a \max_{d \in \D} \E_{e\sim q_a} u(d,e)  \\
  &= \E_a \E_{e\sim q_a} u(d^*_{q_a}, e)  \\
  &= \E_a \E_{e\sim q_a} S(q_a, e)  \\
  &= \E_a \max_{\hat{q}} \E_{e \sim q_a} S(\hat{q}, e)  & \text{by properness}  \\
  &= \V^{S,P}(A) .
\end{align*}
\end{proof}
This reduction is not necessarily computationally efficient, because the input $\hat{q}$ to the scoring rule is a probability distribution over $E$ which may have a large number of outcomes.
We note two positives, however.
First, the reduction does not necessarily need to be computationally efficient to be useful for proofs and analysis.
Second, in any case where it seems reasonable to assume that the agent can solve her decision problem, which involves an expectation over possible outcomes of $E$, it seems reasonable to suppose that she can efficiently represent or query her beliefs.
In this case we may often expect a computationally efficient reduction and construction of $S$.
This is a direction for future work.

The revelation principle (Theorem \ref{thm:revelation}) and scoring rule characterization (Fact \ref{fact:scoring-char}) together imply the following extremely useful fact about general decision problems.
We do not claim originality for it; the idea can be found in \citet{savage1971elicitation} and similar ideas or statements are present in \emph{e.g.} \citet{frongillo2014general} and \citet{babaioff2012optimal}.
But it is worth emphasizing because we will put it to extensive use in this paper.
\begin{corollary} \label{cor:decision-captured-by-G}
\cordecisioncapturedbyG
\end{corollary}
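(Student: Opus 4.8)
The plan is to derive both directions of the equivalence by stitching together the revelation principle (Theorem~\ref{thm:revelation}) and the scoring-rule characterization (Fact~\ref{fact:scoring-char}), which is precisely why the statement is phrased as a corollary of the two.

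For the first direction, starting from an arbitrary decision problem $u$, I would invoke Theorem~\ref{thm:revelation} to obtain a proper scoring rule $S$ with $\V^{S,P}(A) = \V^{u,P}(A)$ for all $P,A$. Applying Fact~\ref{fact:scoring-char} to $S$ then yields a convex $G:\Delta_E\to\R$ with $G(q) = S(q;q)$. It remains to check that $S(q;q)$ equals the expected optimal utility under posterior $q$; this is immediate from properness, since $S(q;q) = \max_{\hat q} S(\hat q; q)$, and the revelation principle identifies this maximum with $\max_{d\in\D}\E_{e\sim q} u(d,e)$. Alternatively, and more directly, one can simply set $G(q) = \max_{d\in\D}\E_{e\sim q}u(d,e)$ and observe that, as a pointwise maximum of the affine-in-$q$ maps $q\mapsto \E_{e\sim q}u(d,e) = \sum_e q(e)u(d,e)$, it is automatically convex.

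For the converse, I would start from any convex $G$ and apply the second half of Fact~\ref{fact:scoring-char} to obtain a proper scoring rule $S$ with $G(q) = S(q;q)$. As noted just before Theorem~\ref{thm:revelation}, a scoring rule is itself a decision problem, with decision space $\D = \Delta_E$ and utility $u(\hat q, e) = S(\hat q, e)$. Properness guarantees that reporting $\hat q = q$ is optimal under posterior $q$, so the expected optimal utility equals exactly $S(q;q) = G(q)$, as required. Finally, the ``hence'' clause follows by unwinding definitions: writing $\V(A) = \E_a \max_{d\in\D}\E_{e\sim p_a}u(d,e)$ from \eqref{eqn:def-value} and substituting the just-established identity $\max_{d\in\D}\E_{e\sim p_a}u(d,e) = G(p_a)$ gives $\V(A) = \E_a G(p_a)$.

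Since each ingredient is already available, I do not anticipate a substantive obstacle; the one point that warrants care is verifying that the convex function produced by the scoring-rule machinery genuinely coincides with the expected-optimal-utility function, rather than merely being some convex function that happens to share the same values on the diagonal. This is exactly what the identity $G(q) = S(q;q)$ combined with properness delivers, so I expect the work to be bookkeeping rather than a genuine difficulty.
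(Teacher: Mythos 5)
Your proposal is correct and follows exactly the route the paper intends: the corollary is stated as an immediate consequence of combining Theorem~\ref{thm:revelation} with Fact~\ref{fact:scoring-char}, and your two directions (plus the observation that $G(q)=\max_{d\in\D}\E_{e\sim q}u(d,e)$ is convex as a pointwise maximum of linear functions, which is the same argument used inside the proof of Fact~\ref{fact:scoring-char}) spell out precisely that combination. The ``hence'' clause is likewise just the substitution into \eqref{eqn:def-value} that you describe, so there is nothing further to add.
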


As an example of usefulness, we provide a concise proof of the following classic theorem.
\begin{fact}[More information always helps] \label{fact:more-info-helps}
In any decision problem, for any signals $A,B$, $\V(A \join B) \geq \V(A)$.
In other words, more information always improves the expected utility of a decision problem.
In other words, $\V$ is a monotone increasing function on the signal lattices.
\end{fact}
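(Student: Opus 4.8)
The plan is to reduce the statement to a convexity fact about the expected-score function $G$ via Corollary~\ref{cor:decision-captured-by-G}, and then finish with Jensen's inequality. By the corollary we may write $\V(A) = \E_a G(p_a)$ and $\V(A \join B) = \E_{a,b} G(p_{ab})$, where $p_a$ is the posterior on $E$ given $A=a$, $p_{ab}$ is the posterior given $A=a,B=b$, and the second expectation ranges over realizations of the join $A \join B$. (Here I use that $A \join B$ is finer than $A$, so each of its realizations determines a pair $(a,b)$; zero-probability pairs simply drop out of the expectation.) It therefore suffices to show $\E_{a,b} G(p_{ab}) \geq \E_a G(p_a)$.

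The key observation is the \emph{martingale property of beliefs}: for each fixed realization $a$ of $A$, the coarse posterior $p_a$ is the average of the refined posteriors $p_{ab}$, weighted by the conditional probability of $B=b$ given $A=a$. That is, $p_a = \E_{b \mid a}[p_{ab}]$. This is a one-line consequence of the law of total probability, since summing $\Pr[B=b \mid A=a]\, p_{ab}(e)$ over $b$ telescopes to $\Pr[E=e \mid A=a] = p_a(e)$ for every outcome $e$.

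With this identity in hand, the inequality is immediate from convexity of $G$ (guaranteed by Corollary~\ref{cor:decision-captured-by-G}, equivalently Fact~\ref{fact:scoring-char}). For each fixed $a$, Jensen's inequality gives
\[ G(p_a) = G\!\left(\E_{b \mid a}\, p_{ab}\right) \leq \E_{b \mid a}\, G(p_{ab}). \]
Taking the expectation over $a$ yields $\E_a G(p_a) \leq \E_{a,b} G(p_{ab})$, i.e. $\V(A) \leq \V(A \join B)$. The full monotonicity claim then follows at once: if $A \preceq A'$ then $A \join A' = A'$, so $\V(A') = \V(A \join A') \geq \V(A)$.

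I expect no genuine obstacle here; the only point requiring a moment's care is the bookkeeping in the martingale identity $p_a = \E_{b \mid a} p_{ab}$ — in particular confirming that the realizations of the join group correctly into the coarser cells of $A$ — but this is routine. The entire conceptual content is captured by ``convexity of $G$ plus Jensen,'' which formalizes the intuition that additional information only lets the agent tailor her decision to a sharper belief, and that expected optimal utility is a convex function of the belief.
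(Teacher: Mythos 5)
Your proof is correct and follows essentially the same route as the paper's: both reduce to the convex expected-score function $G$ via the revelation principle, invoke the martingale identity $p_a = \E_{b \mid a}\, p_{ab}$ (the law of total probability), and conclude by Jensen's inequality applied pointwise in $a$. No substantive differences.
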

\begin{proof}
Recall that we are using the notation $p_{a_1}$ for the distribution on $E$ conditioned on $A_1=a_1$, and so on.
In particular, $p_{a_1}$ is a vector, \emph{i.e.} $p_{a_1} = (p(e_1 | A_1=a_1), \dots)$.
By the revelation principle, for some convex $G$ we have $\V(A_1) = \E_{a_1} G(p_{a_1})$, and
\begin{align*}
 V(A_1 \join A_2)
  &= \E_{a_1} \left[ \sum_{a_2} p(a_2|a_1) G(p_{a_1a_2}) \right]  \\
  &\geq \E_{a_1} G\left(\sum_{a_2} p(a_2|a_1) p_{a_1a_2}\right)  & \text{by Jensen's inequality}  \\
  &= \E_{a_1} G(p_{a_1})  \\
  &= \V(A_1) .
\end{align*}
To obtain the last equality: Each term in the sum consists of the scalar $p(a_2 | a_1)$ multiplied by the vector $p_{a_1a_2}$, and for each coordinate $e$ of the vector, we have $p(a_2|a_1)p(e|a_1,a_2) = p(e,a_2|a_1)$.
Then $\sum_{a_2} p(e,a_2|a_1) = p(e|a_1)$.
\end{proof}

\subsection{Characterizations} \label{sec:characterizations}
In this section, we show how the substitutes and complements conditions can be phrased using the convexity connection just derived.
We will leverage this structure to identify characterizations or alternative definitions of substitutes and complements.
For brevity, we will focus on substitutes, but in all cases the extension to complements is immediate.

From Corollary \ref{cor:decision-captured-by-G}, we also get immediately the following characterization:
\begin{definition}[Substitutes via convex functions] \label{def:subs-G}
\defsubsG
\end{definition}
To be clear, our use of parentheses here means that we are giving three definitions: one for weak substitutes, where the condition must hold for all $A'$ on the subsets lattice; and analogously for moderate substitutes with the discrete lattice and strong substitutes with the continuous lattice.

We view Definition \ref{def:subs-G} mostly as a tool, although it may convey some intuition on its own as well.
Definition \ref{def:subs-G} will be pictured in Figure \ref{fig:divergencesubs} along with the final characterization.

\subsubsection{Generalized entropies} \label{sec:defs-entropy}
Here, we seek an alternative interpretation of the definitions of S\&C in terms of information and uncertainty.
To this end, for any decision problem, consider the convex expected score function $G$ and define $h = -G$.
Then $h$ is concave, and we interpret $h$ as a \emph{generalized entropy} or measure of information.
The justification for this is as follows: Define the notation $h(E | A) = \E_{a\sim A} h(p_a)$, where $p_a$ is the distribution on $E$ conditioned on $A=a$.
Then concavity of $h$ implies via Jensen's inequality that for all $E,A$, we have $h(E) \geq h(E|A)$.
In other words, more information always decreases uncertainty/entropy.

We propose that this is the critical axiom a generalized entropy must satisfy: If more information always decreases $h$, then in a sense it measures uncertainty, while if more information sometimes increases $h$, then it should not be considered a measure of uncertainty.
However, admittedly, the appeal of this definition may increase by adding additional axioms as are common in the literature, such as maximization at the uniform distribution and value zero at degenerate distributions.
Another very intriguing axiom would be a relaxation of the ``chain rule'' in either direction: $h(E | A)$ is restricted to be either greater than or less than $h(E,A) - h(A)$.
(Note that the chain rule itself, $h(E|A) = h(E,A) - h(A)$, along with concavity, uniquely characterizes Shannon entropy.)
Such axioms may have interesting consequences for informational S\&C.
Examining the structure of S\&C under such axioms represents an intriguing direction for future work.

\begin{figure}[ht]
\centering
\includegraphics[width=0.8\textwidth]{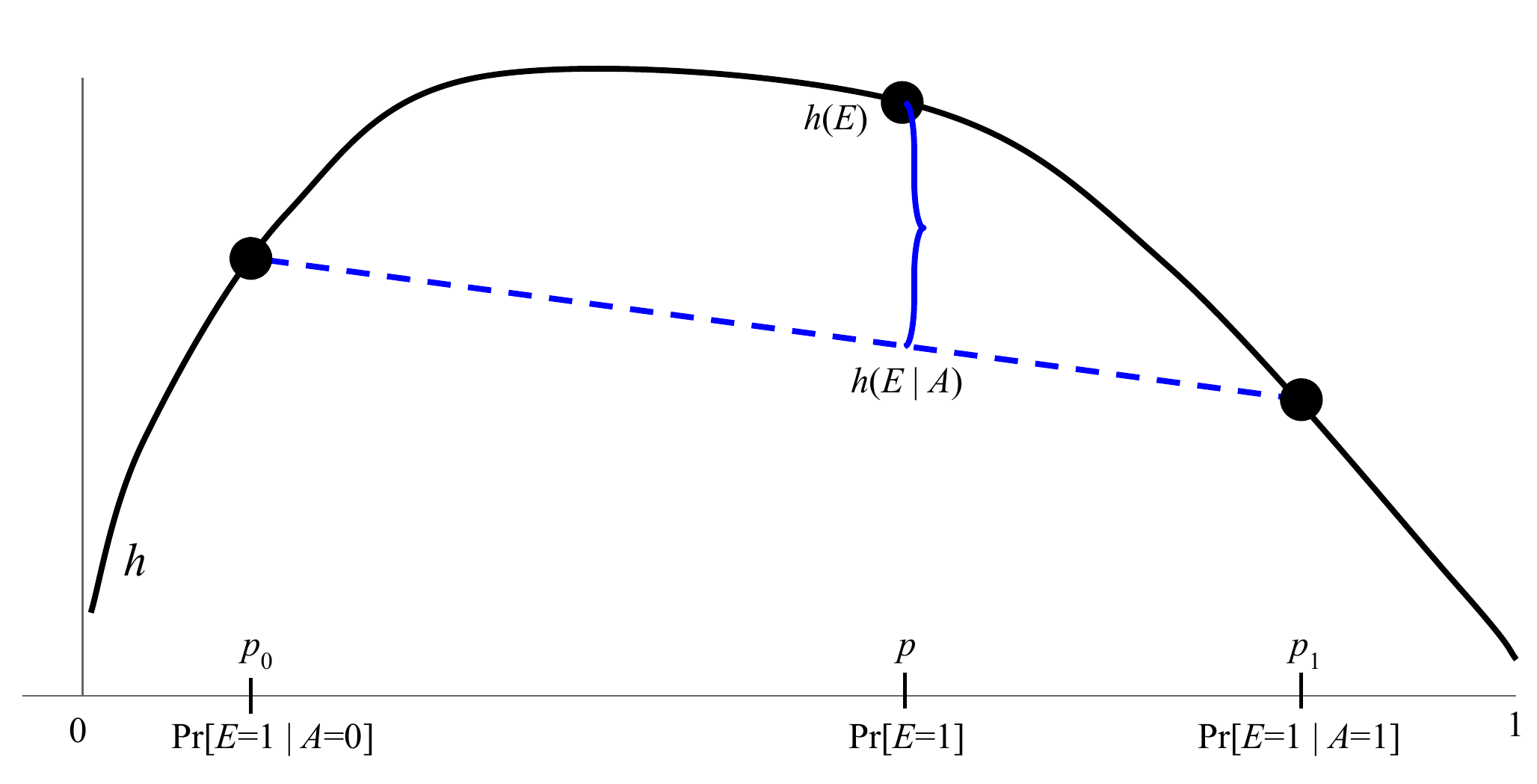}
\caption{Illustration of marginal improvement of signal $A$ over the prior, $\V(A) - \V(\bot)$, via the \emph{generalized entropy} definition used to characterize S\&C in Definition \ref{def:subs-entropy}.
  Here, the generalized entropy function $h$ captures a measure of uncertainty in a distribution over the binary event $E$.
  The marginal value of $A$ is $\V(A) - \V(\bot) = h(E) - h(E \mid A)$, the expected amount of information revealed about $E$ by $A$ (illustrated by the curly brace).}
\label{fig:entropysubs}
\end{figure}

Under this interpretation, Definition \ref{def:subs-G} can be restated:
\begin{definition}[Substitutes via generalized entropies] \label{def:subs-entropy}
\defsubsentropy
\end{definition}
Intuitively, Definition \ref{def:subs-entropy} says this:
Consider the expected amount of information about $E$ that is revealed upon learning $B$, given that some information will already be known.
Use the generalized entropy $h$ to measure this information gain.
Then substitutes imply that, the more information one has, the less information $B$ reveals.
On the other hand, complements imply that, the more information one has, the \emph{more} information $B$ reveals.

\begin{example} \label{ex:def-subs-entropy}
Revisiting Example \ref{ex:one-bit-subs}, where $E$ was a uniform bit and $A_1=A_2=E$, imagine predicting $E$ against the $\log$ scoring rule.
Our previous observations imply that here the generalized entropy function is Shannon entropy $H(q) = \sum_e q(e) \log \frac{1}{q(e)}$.
We have $H(p) = 1$ and $H(E|A_1) = H(E|A_2) = H(E|A_1,A_2) = 0$, which already shows that $A_1$ and $A_2$ are weak substitutes.

If we instead revisit Example \ref{ex:one-bit-comps}, where $A_1 \oplus A_2 = E$ with $A_1,A_2$ uniformly random bits, and again consider predicting $E$ according to the $\log$ scoring rule, then we see that $H(E) = H(E|A_1) = H(E|A_2) = 1$, while $H(E|A_1,A_2) = 0$, already proving that $A_1$ and $A_2$ are weak complements.
\end{example}

\subsubsection{Bregman divergences} \label{sec:defs-bregman}
Given a convex function $G$, the \emph{Bregman divergence} of $G$ is defined as
  \[ D_G(p,q) = G(p) - \left(G(q) + \langle G'(q), p-q\rangle\right).  \]
In other words, it is the difference between $G(p)$ and the linear approximation of $G$ at $q$, evaluated at $p$.
(See Figure \ref{fig:divergencesubs}.)
Another interpretation is to consider the proper scoring rule $S$ associated with $G$, by Fact \ref{fact:scoring-char}, and note that $D_G(p,q) = S(p;p) - S(q;p)$, the difference in expected score when reporting one's true belief $p$ versus lying and reporting $q$.
The defining property of a convex function is that this quantity is always nonnegative.
This can be observed geometrically in Figure \ref{fig:scoringrule} as well; there $D_G(q,\hat{q}) = G(q) - S(\hat{q};q)$.

This notion is useful to us because, it turns out, all marginal values of information can be exactly characterized as Bregman divergences between beliefs.
\begin{lemma} \label{lemma:marginal-contribution-bregman}
\lemmamarginalcontributionbregman
\end{lemma}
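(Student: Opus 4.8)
The plan is to expand both sides using the representation $\V(A) = \E_a G(p_a)$ from Corollary \ref{cor:decision-captured-by-G} together with the definition of the Bregman divergence, and then show that the only discrepancy between them—a single linear term—integrates to zero by the martingale property of Bayesian beliefs.

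First I would apply the corollary twice to rewrite the marginal value as
\[ \V(A \join B) - \V(A) = \E_{a,b} G(p_{ab}) - \E_a G(p_a), \]
using the fact that the realizations of $A \join B$ are the pairs $(a,b)$. Next I would unfold the right-hand side of the claimed identity by substituting $D_G(p_{ab}, p_a) = G(p_{ab}) - G(p_a) - \langle G'(p_a), p_{ab} - p_a \rangle$ and taking the expectation over $(a,b)$. The term $\E_{a,b} G(p_{ab})$ already matches $\V(A \join B)$, and since $G(p_a)$ does not depend on $b$, we have $\E_{a,b} G(p_a) = \E_a G(p_a) = \V(A)$. Matching the two sides therefore reduces to showing that the linear remainder $\E_{a,b} \langle G'(p_a), p_{ab} - p_a \rangle$ vanishes.

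The crux is exactly this linear term. Conditioning on $a$ and pulling the subgradient $G'(p_a)$—which is constant in $b$—outside the inner expectation gives $\E_a \langle G'(p_a), \E_{b \mid a}[p_{ab}] - p_a \rangle$. The key computation is that $\E_{b \mid a}[p_{ab}] = p_a$ coordinatewise: for each outcome $e$, $\sum_b p(b \mid a)\, p(e \mid a,b) = \sum_b p(e, b \mid a) = p(e \mid a)$, which is the law of total probability (equivalently, the statement that one's posterior is a martingale—the expected posterior after additionally learning $B$ equals the current posterior $p_a$). Hence $\E_{b \mid a}[p_{ab}] - p_a$ is the zero vector, the inner product is zero, and the linear term disappears, establishing the identity.

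I expect the only subtlety to be bookkeeping rather than a genuine obstacle. One must keep in mind that $G$ need only be convex, not differentiable, so $G'(p_a)$ denotes a fixed but arbitrary subgradient; the argument is unaffected because we use nothing about $G'(p_a)$ beyond linearity of the inner product, and the same subgradient appears on both sides. For strong (continuous-lattice) signals the same proof applies verbatim, with the expectations interpreted as integrals against the Lebesgue measure appended in Definition \ref{def:continuous-lattice}.
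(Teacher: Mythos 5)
Your proposal is correct and follows essentially the same route as the paper's proof: expand both values via $\V(A)=\E_a G(p_a)$, substitute the definition of $D_G$, and observe that the leftover linear term $\E_{a,b}\langle G'(p_a), p_{ab}-p_a\rangle$ vanishes because $\sum_b p(b\mid a)\,p_{ab}=p_a$ (the martingale property of posteriors). Your remarks on subgradients and on the continuous-lattice case are accurate but not needed beyond what the paper already does.
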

\begin{proof}
\begin{align*}
 \V(A \join B) - \V(A)
  &= \E_{a,b} G(p_{ab}) ~ - ~ \E_a G(p_a)  \\
  &= \E_{a,b} \left(G(p_{ab}) - G(p_a)\right)  \\
  &= \E_{a,b} \left(D_G(p_{ab}, p_a) - \langle G'(p_a), p_{ab} - p_a \rangle \right)  \\
  &= \E_{a,b} D_G(p_{ab}, p_a) ~ + ~ \E_a \langle G'(p_a), \sum_b p(b|a) \left(p_{ab} - p_a \right) \rangle  \\
  &= \E_{a,b} D_G(p_{ab}, p_a)
\end{align*}
because $\sum_b p(e|a,b) p(b|a) = p(e|a)$, so $\sum_b p(b|a) p_{ab} = p_a$.
\end{proof}

\begin{figure}[ht]
\centering
\includegraphics[width=0.8\textwidth]{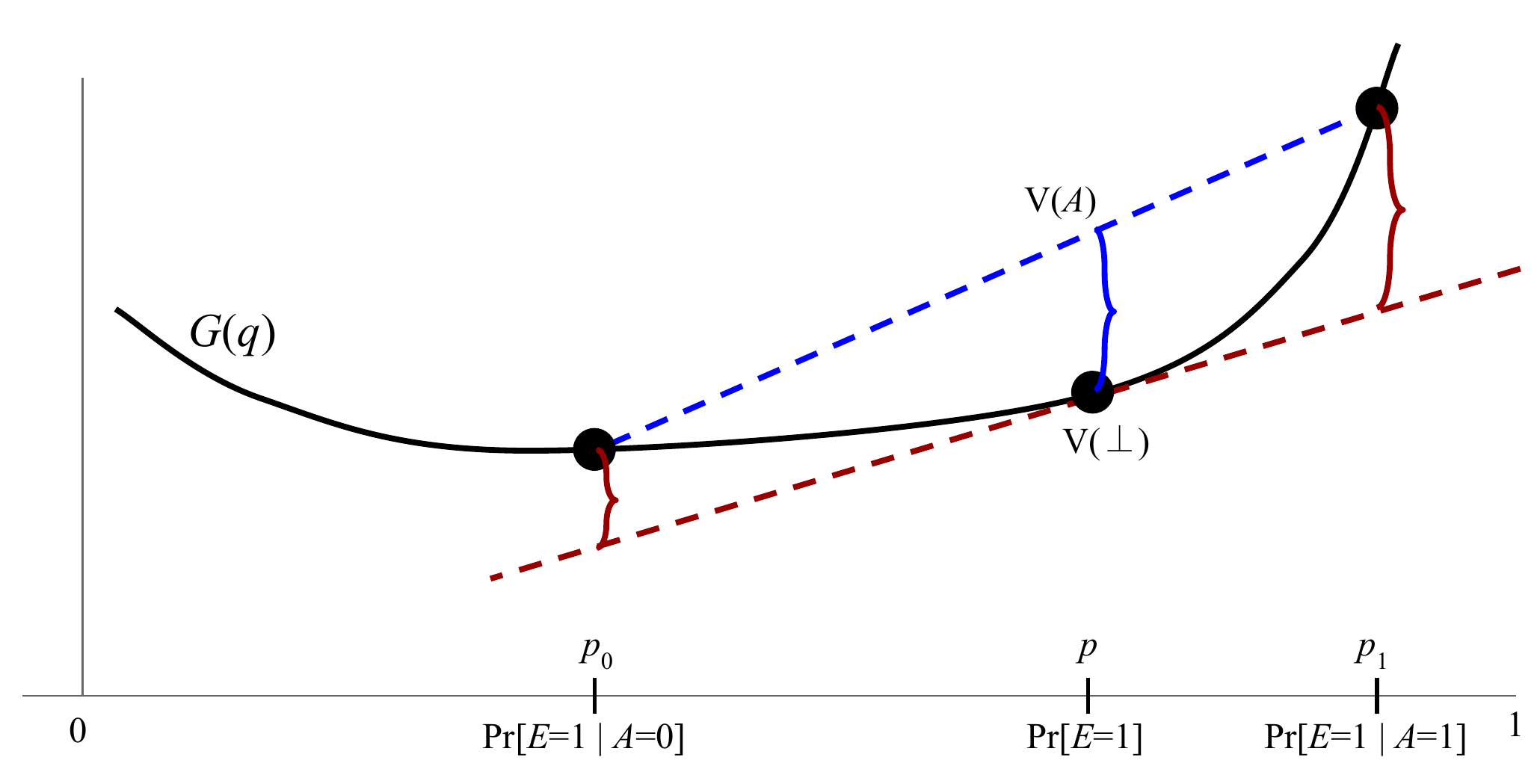}
\caption{Illustration of marginal improvement of signal $A$ over the prior, $\V(A) - \V(\bot)$, via two equivalent definitions used to characterize S\&C.
  Definition \ref{def:subs-G} is that $\V(A) - \V(\bot) = \E_a G(p_a) - G(p)$.
  Here, the blue curly brace measures the distance between $\E_a G(p_a)$ and $G(p)$.
  Definition \ref{def:subs-bregman} is that $\V(A) - \V(\bot) = \E_a D_G(p_a,p)$ where $D_G(p_a,p)$ is the Bregman divergence.
  The Bregman divergences $D_G(p_0,p)$ and $D_G(p_1,p)$ are measured by the two red curly braces.
  Another way of stating the equivalence of (1) and (2) is that the average size of the red braces is equal to the size of the blue brace (where the average is weighted by the probabilities of $A=0,1$).}
\label{fig:divergencesubs}
\end{figure}

\begin{definition}[Substitutes via divergences] \label{def:subs-bregman}
\defsubsbregman
\end{definition}
This can be interpreted as a characterization of S\&C where $D_G$ serves as a \emph{distance measure} of sorts (although it is not in general a distance metric).
The characterization says that, if we look at how ``far'' the agent's beliefs move upon learning $B$, on average, then for substitutes this distance is decreasing in how much other information is available to the agent.
But for complements, the more information the agent already has, the \emph{farther} she expects her beliefs to move on average upon learning $B$.

\begin{example} \label{ex:def-subs-divergence}
For the $\log$ scoring rule, $D_G(p,q)$ is exactly the \emph{KL-divergence} or \emph{relative entropy} $KL(p,q)$ between distributions on $p$ and $q$.
If we recall Example \ref{ex:one-bit-subs}, in which $E$ was a random bit and $A_1 = A_2 = E$, we can consider the decision problem of prediction $E$ against the $\log$ scoring rule.
In this case, the prior $p = \left(\frac{1}{2},\frac{1}{2}\right)$, while the posteriors $p_{a_1=0} = (1,0)$, $p_{a_1=1} = (0,1)$, and the same for $A_2$.
Hence $\E_{a_1} KL(p_{a_1},p) = 1$.
But the posteriors conditioned on both signals are the same, \emph{e.g.} $p_{a_0=a_2=0} = (1,0) = p_{a_0=0}$.
Hence $\E_{a_1,a_2} KL(p_{a_1,a_2},p_{a_1}) = 0$.

This already shows that $A_1$ and $A_2$ are weak substitutes.
And in fact, if $A_1 = A_2$, then this argument extends to show that $A_1$ and $A_2$ are substitutes in \emph{any} decision problem (as they should be), because given $A_1$, an update on $A_2$ moves the posterior belief a distance $0$.
\end{example}

\clearpage

\section{Game-Theoretic Applications} \label{sec:game-theoretic}
\subsection{Prediction markets} \label{sec:markets}
A prediction market is modeled as a Bayesian extensive-form game.
The market's setting is specified by a strictly proper scoring rule $S$ and an information structure with prior $P$ and event $E$, and set of signals $A_1,\dots,A_m$.
We assume these signals are ``nontrivial'' in that, given all signals but $A_i$, the distribution of $E$ changes conditioned on $A_i$.

An instantiation of the market is specified by a set of $n$ traders, each trader $i$ observing some subset of the signals (call the resulting signal $B_i$), and an order of trading $i_1,\dots,i_T$, where at each time step $t=1,\dots,T$, it is the turn of agent $i_t \in \{1,\dots,n\}$ to trade.
We assume that no trader participates twice in a row (if they do, it is without loss to delete one of these trading opportunities).

The market proceeds as follows.
First, each trader $i$ simultaneously and privately observes $B_i$, updating to a posterior belief.
Then the market sets the initial prediction $p^{(0)} \in \Delta_E$, which we assume to be the prior distribution $p$ on $E$.
We will also refer to a market prediction as the market prices.
Then, for each $t=1,\dots,T$, trader $i_t$ arrives, observes the current market prediction $p^{(t-1)}$, and may update it to (``report'') any $p^{(t)} \in \Delta_E$.

After the last trade step $T$, the true outcome $e$ of $E$ is observed and each trader $i$ receives payoff $\sum_{t: i=i_t} S(p^{(t)}, e) - S(p^{(t-1)}, e)$.
Thus, at each time $t$, trader $i_t$ is paid according to the scoring rule applied to $p^{(t)}$, but must pay the previous trader according to the scoring rule applied to $p^{(t-1)}$.
The total payment made by market ``telescopes'' into $S(p^{(T)},e) - S(p^{(0)},e)$.

At any given time step $t$, trader $i_t$ is said to be \emph{reporting truthfully} if she moves the market prediction to her current posterior belief on $E$.
In other words, she makes the myopically optimal trade.

The natural solution concept for Bayesian games is that they be in \emph{Bayes-Nash equilibrium}, where for every player, her (randomized) strategy --- specifying how to trade at each time step as a function of her signal and all past history of play --- maximizes expected utility given the prior and others' strategies.

Because this is a broad class of equilibria and can in general include undesirable equilibria involving ``non-credible threats'', it is often of interest in extensive-form games to consider the refinement of \emph{perfect Bayesian} equilibrium. Here, at each time step and for each past history, a player's strategy is required to maximize expected utility given her beliefs at that time and the strategies of the other players. (Note the difference to Bayes-Nash equilibrium in which this optimality is only required \emph{a priori} rather than for every time step.) Here, at any time step and history of play, players' beliefs are required to be consistent with Bayesian updating wherever possible. (It may be that one player deviates to an action not in the support of her strategy; in this case other players may have arbitrary beliefs about the deviator's signal.)

To be clear, every perfect Bayesian equilibrium is also a Bayes-Nash equilibrium.
Hence, we note that an existence result is strongest if it guarantees existence of perfect Bayesian equilibrium.
Meanwhile, a uniqueness or nonexistence result is strongest if it refers to Bayes-Nash equilibrium.

\paragraph{Distinguishability criterion.}
For most of our results, we will need a condition on signals equivalent or similar to those used in prior works \citep{chen2010gaming,ostrovsky2012information,gao2013jointly} in order to ensure that traders can correctly interpret others' reports.
Formally, we say that signals are \emph{distinguishable} if for all subsets $S \subseteq \{1,\dots,m\}$ and realizations $\{a_i : i \in S\}$, $\{a_i' : i \in S\}$ of the signals $\{A_i : i \in S\}$ such that, for some $i \in S$, $a_i \neq a_i'$,
 \[ \Pr[e \mid a_i : i \in S] \neq \Pr[e \mid a_i' : i \in S] . \]
We believe it may be possible to relax this criterion and/or interpret such criteria within the S\&C framework, and this is a direction for future work.

\paragraph{Our notation in prediction markets.}
Note that, for the proper scoring rule $S$ with associated convex $G$, along with the prior $P$, we have the associated ``signal value'' function $\V$:
 \[ \V(A) = \E_{a,e} S(p_{a},e) = \E_{a} G(p_{a}) . \]
In other words, $\V(A)$ is the expected score for reporting the posterior distribution conditioned on the realization of $A$.

A second key point is that, to a trader whose current information is captured by a signal $A$, the set of strategies available to that trader can be captured by the space of signals $A' \preceq A$ on the continuous lattice.
This follows because any strategy is a randomized function of her information, so the outcomes of the strategy can be labeled as outcomes of a signal $A'$.

\subsubsection{Substitutes and ``all-rush''}
We now formally define an ``all-rush'' equilibrium and show that it corresponds to informational substitutes.
The naive definition would be that each trader reports truthfully at their first opportunity, or (hence) at every opportunity.
This turns out to be correct except for one subtlety.
Consider, for example, the final trader to enter the market.
Because all others have already revealed all information, this last trader will be indifferent between revealing immediately or delaying.
Similarly, consider three traders $i,j,k$ and the order of trading $i,j,i,j,k$.
If trader $i$ truthfully reports at time $1$, then trader $j$ is not strictly incentivized to report truthfully at time $2$.
She could also delay information revelation until time $4$.
\begin{definition}
An \emph{all-rush} strategy profile in a prediction market is one where, if the traders are numbered $1,2,\dots$ in order of the first trading opportunity, then each trader $i$ reports truthfully at some time prior to $i+1$'s first trading opportunity (with the final trader reporting truthfully prior to the close of the market).
\end{definition}

Before presenting the main theorem of this section, we give the following useful lemma (which is quite well known):
\begin{lemma} \label{lemma:bne-last-truthful}
In every Bayes-Nash equilibrium, every trader reports truthfully at her final trading opportunity.
\end{lemma}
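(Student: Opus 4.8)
The plan is to prove Lemma \ref{lemma:bne-last-truthful} by a pointwise optimality argument that reduces each trader's final trade to a single-shot scoring-rule report. Fix any Bayes-Nash equilibrium and any trader $i$, and consider her final trading opportunity at some time step $t$ (so $i = i_t$ and $i$ never trades again after $t$). The key observation is that the market payment telescopes: trader $i$'s total payoff is $\sum_{t': i = i_{t'}} \left( S(p^{(t')}, e) - S(p^{(t'-1)}, e) \right)$, and the term corresponding to her final trade is $S(p^{(t)}, e) - S(p^{(t-1)}, e)$. The crucial point is that the report $p^{(t)}$ chosen at this last step does \emph{not} appear in any other term of her payoff, because she never trades again and no future price she sets can affect her. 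Hence, when optimizing her choice of $p^{(t)}$ holding everything else fixed, she is maximizing exactly $\E\left[ S(p^{(t)}, e) \mid \mathcal{H}_t \right]$, where $\mathcal{H}_t$ is all information available to her at time $t$, since $S(p^{(t-1)}, e)$ and all other terms are constants with respect to her decision.

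The second step is to identify the belief against which she is scoring. At time $t$, trader $i$ conditions on her private signal $B_i$ together with the entire observed history of market prices $p^{(0)}, \ldots, p^{(t-1)}$; let $q$ denote her resulting posterior distribution on $E$. Because the scoring rule $S$ is strictly proper (Fact \ref{fact:scoring-char}), the unique maximizer of $\E_{e \sim q} S(\hat{q}, e) = S(\hat{q}; q)$ over $\hat{q} \in \Delta_E$ is $\hat{q} = q$. Therefore her optimal final report is precisely $p^{(t)} = q$, which is exactly her current posterior belief on $E$ --- that is, reporting truthfully. Since this holds in any Bayes-Nash equilibrium (where her strategy must maximize expected utility given the prior and others' strategies, and a profitable deviation at the final step would be available otherwise), every trader reports truthfully at her final opportunity.

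I would take care with one subtlety: a Bayes-Nash equilibrium only requires optimality \emph{a priori}, not at every history, so strictly speaking I must argue that deviating to report truthfully at the last step (conditioned on reaching it) is a valid deviation that cannot decrease expected payoff. Since the final report enters the payoff additively and separably from all earlier decisions, any strategy that does not report truthfully at some reachable final history can be strictly improved by switching to a truthful report there while leaving all earlier behavior unchanged; this weakly increases expected payoff, with strict increase whenever the non-truthful report is played with positive probability at a reachable history with posterior $q$. Thus truthful final reporting is forced (up to measure-zero histories) in equilibrium.

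The main obstacle I anticipate is not the scoring-rule optimization, which is immediate from strict properness, but rather the clean formalization of the separability claim --- namely that the final report affects only the telescoped term $S(p^{(t)}, e)$ and nothing else in the trader's payoff or in the induced beliefs of other players \emph{relevant to her own utility}. One must verify that no future trader's response to $p^{(t)}$ can feed back into trader $i$'s payoff, which holds precisely because $i$ has no subsequent trading opportunity; the payments she makes and receives are all fixed once we reach her last trade. Once this independence is pinned down, the argument is short.
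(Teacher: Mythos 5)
Your proof is correct and follows essentially the same route as the paper's: fix the history up to the final trade, note that all other payoff terms are constants with respect to the final report, and invoke strict properness to conclude the unique optimal report is the posterior, with the Bayes-Nash subtlety handled by the same "copy the strategy and modify only the last report" deviation. The extra care you take about separability and positive-probability histories is implicit in the paper's terser argument but adds nothing substantively different.
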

\begin{proof}
Consider a time $t$ at which trader $i = i_t$ makes her final trade.
Fix all strategies and any history of trades until time $t$; then $i$'s total expected payoff from all previous time steps is fixed as well and cannot be changed by any subsequent activity.
Meanwhile, $i$'s unique utility-maximizing action at time $t$ is to report truthfully, by the strict properness of the scoring rule.
If $i$ does not take this action, then her entire strategy is not a best response: She could take the same strategy until time $t$ and modify this last report to obtain higher expected utility.
Therefore, in Bayes-Nash equilibrium, $i$ reports her posterior on her final trading opportunity.
\end{proof}

\begin{theorem} \label{thm:subs-rush-equilibrium}
\thmsubsrushequilibrium
\end{theorem}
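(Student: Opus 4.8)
The plan is to show that any delay is strictly unprofitable, so that in every Bayes--Nash equilibrium each trader reveals all of her information before the next trader's first opportunity. First I would assemble three tools. By Lemma~\ref{lemma:bne-last-truthful} every trader trades truthfully at her last opportunity; together with distinguishability this forces the closing price to reflect the join of all base signals, so over the whole game each trader reveals her entire signal and only the \emph{timing} of revelation is in question. Next I would record a ``payoff $=$ marginal value'' identity: if a trader truthfully moves the price from the posterior for public information $M$ to her own posterior, which reflects $M \join B$ for the increment $B$ she adds, then by strict properness her conditional expected gain from that trade is the Bregman divergence $D_G$ between her posterior and the public posterior, which by Lemma~\ref{lemma:marginal-contribution-bregman} averages to $\V(M \join B) - \V(M)$. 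Finally I would note that within the window between a trader's first opportunity and the next trader's arrival no other trader can add information (earlier traders have fully revealed, later traders have not yet arrived), so this trader's trades in the window telescope.

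The heart of the argument is a single application of strong substitutes, with the roles assigned so that the \emph{withheld} partial information lands in the coarser (continuous-lattice) state. Fix the trader in question, call her signal $B_k$, let $X$ be the subsets-lattice signal already public at her first opportunity, and suppose toward a contradiction that she reveals only a strict garbling $C$ (with $C \preceq B_k$ but $C \neq B_k$) now and the remainder later, after the intervening traders have injected a subsets-lattice signal $W$. Using the telescoping identity, I would compute that her expected payoff from revealing everything now, minus her payoff from this delay, equals
\[ \bigl(\V((X \join C) \join W) - \V(X \join C)\bigr) - \bigl(\V((X \join B_k) \join W) - \V(X \join B_k)\bigr), \]
that is, the marginal value of the \emph{intervening traders' information} $W$ at the coarser public state $X \join C$ minus its marginal value at the finer state $X \join B_k$.

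This difference is exactly the strong-substitutes inequality with added signal $B := W$ on the subsets lattice, finer state $A := X \join B_k$ on the subsets lattice, and coarser state $A' := X \join C$ on the continuous lattice: since $X \join C \preceq X \join B_k$, substitutes make the difference nonnegative, and strict strong substitutes (the garbling gives $A' \neq A$, while distinguishability and nontriviality make $W$ incomparable to $X \join B_k$) make it strictly positive. Hence revealing everything immediately strictly beats delaying; the choice $C = \bot$ covers the case of delaying the whole signal, and a strict garbling covers partial revelation. It is crucial that the \emph{other} traders' information plays the role of $B$ while the trader's own partial revelation is absorbed into the coarser state $A'$: trying instead to treat the withheld piece as the added signal fails, since a garbling is not on the subsets lattice — which is precisely why the continuous-lattice (strong) version is needed here. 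Because the gap is strict, no equilibrium can involve delay, giving all-rush.

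The main obstacle I anticipate is the game-theoretic bookkeeping that makes this clean comparison legitimate. I must justify that in the relevant comparison the information $W$ contributed by the intervening traders is a genuine subsets-lattice signal, which calls for an inductive argument (backward in time, or on the first trader to deviate from rushing) showing that everyone else reveals full signals in the continuation; and I must verify the lattice side-condition $A \meet B \preceq A' \preceq A$, where distinguishability controls the meet. Subtler still, the theorem claims that \emph{every} Bayes--Nash equilibrium is all-rush, so the deviating trader's profit must be robust to how the other traders' fixed strategies respond, possibly off-path, to her earlier revelation; this is where distinguishability does the real work, guaranteeing that prices are correctly interpreted so the continuation unfolds as analyzed. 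I expect reconciling the deviation argument with the weaker Bayes--Nash (rather than perfect Bayesian) solution concept to be the most delicate step.
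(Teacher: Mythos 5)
Your proposal is correct and follows essentially the same route as the paper's proof: a backward induction establishing that the continuation stays on the subsets lattice, a deviation to immediate full revelation followed by exiting the market (which is what lets the argument go through under Bayes--Nash rather than perfect Bayesian equilibrium), and a single application of strict, strong substitutes to make the deviation strictly profitable. The only real difference is presentational: you instantiate the substitutes inequality with the intervening traders' information $W$ as the added signal and $A' = X \join C$, $A = X \join B_k$, whereas the paper takes the deviator's own signal $B_i$ as the added signal with $A' = C^{(t)}$ and $A = C^{(t'-1)}$; after rearranging, these are the same four-point inequality, so the two arguments coincide.
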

\begin{proof}
Let the traders be numbered in order of their first trading opportunity $1,2,\dots,n$ and let $B_i$ be the signal of trader $i$.
Before diving in, we develop a key idea.
In equilibrium, we can view the market prediction $p^{(t)}$ at time $t$ as a random variable.
Then, construct a ``signal'' $C^{(t)}$ capturing the information contained in $p^{(t)}$.
This can be pictured as the information conveyed by $p^{(t)}$ to an ``outside observer'' who knows the prior distribution and the strategy profile, but does not have any private information.
Furthermore, if traders $1,\dots,k$ have participated thus far, then $C^{(t)}$ is an element of the continuous signal lattice with $C^{(t)} \preceq B_1 \join \cdots \join B_k$, because $p^{(t)}$ is a well-defined, possibly-randomized function of $B_1 \join \cdots \join B_k$.
Finally, if all participating traders have been truthful, then $C^{(t)}$ is a member of the subsets signal lattice $\Lat$, as it exactly reveals the subset of the signals held by those traders.

Now, let $t_i^*$ be $i$'s final trading opportunity prior to $i+1$'s first trading opportunity.
We prove by backward induction on $t$ that, in BNE and for any participant $i$ participating at some time $t \geq t_i$, the following holds: if $C^{(t)}$ is an element of the subsets lattice $\Lat$, then $i$ reports truthfully at $t$.
Now, suppose we have successfully proven this claim by backward induction; let us finish the proof.
The claim implies that $C^{(t)}$ really is in $\Lat$ and $i$ really is truthful at all such time steps for the following reason: $C^{(0)}$ is the null signal and is an element of $\Lat$, so trader $1$ participates truthfully, which implies that $C^{(t_2)} \in \Lat$, which implies that $2$ participates truthfully, and so on.
So in any BNE, all participants play all-rush strategies.

Now let us prove the statement.
For the base case $t=T$, the trader participating at the final time step is truthful by Lemma \ref{lemma:bne-last-truthful}.

Now for the inductive step, consider any $t = t_i$ for some $i$.
If $t$ is $i$'s final trading opportunity, then by Lemma \ref{lemma:bne-last-truthful}, in BNE $i$ reports truthfully at $t$.
If $t < t_i^*$, then there is nothing to prove.

Otherwise, let $t'$ be $i$'s next trading opportunity after $t$.
By inductive hypothesis, $i$ is truthful at time $t'$ and thereafter.
We compute $i$'s expected utility for any strategy, and show that if $i$ is not truthful at $t$, she can improve by deviating to the following strategy: Copy the previous strategy up until $t$, report truthfully at $t$, and make no subsequent updates.

At $t'$, $i$'s strategy can be described as reporting truthfully according to $C^{(t')} = C^{(t'-1)} \join B_i$.
For this trade, $i$ obtains expected profit $\V(C^{(t')}) - \V(C^{(t'-1)})$, and $i$ obtains no subsequent profit once her information is revealed.
Meanwhile, consider $i$'s strategy at time $t$, which induces some signal $C^{(t)} \preceq B_i \join C^{(t-1)}$.
For this trade, $i$ obtains expected profit at most $\V(C^{(t)}) - \E G(p^{(t-1)})$.
This follows because a trade conveying signal $C^{(t)}$ obtains at most $\V(C^{(t)})$.\footnote{Although we don't explicitly use it here, this implies that in equilibrium, every $p^{(t)} = p_{c^{(t)}}$, that is, the price at time $t$ equals the posterior distribution on $E$ conditioned on all information that has been revealed so far, including at time $t$.}

Let $U$ be $i$'s total expected utility at time $t$ and greater.
Once $i$ reports truthfully at $t'$, she expects to make no further profit in equilibrium.
So
 \[ U \leq \V\left(C^{(t)}\right) - \E G\left(p^{(t-1)}\right) + \V\left(C^{(t'-1)} \join B_i\right) - \V\left(C^{(t'-1)}\right) . \]
Now suppose $C^{(t'-1)}$ is in $\Lat$ and $i$ is not reporting truthfully at $t$.
This implies that $B_i \not\preceq C^{(t'-1)}$.
By strong, strict substitutes, $B_i$ gives higher marginal benefit to $C^{(t)}$ than to $C^{(t'-1)}$:
 \[ \V(C^{(t'-1)} \join B_i) - \V(C^{(t'-1)}) < \V(C^{(t)} \join B_i) - \V(C^{(t)}) . \]
So
 \[ U < \V(C^{(t)} \join B_i) - \E G(p^{(t-1)}) . \]
But $i$ can achieve this by deviating to being truthful at time $t$, then not participating at any subsequent times.
(This follows because if $i$ is truthful at time $t$, she reveals the signal $C^{(t-1)} \join B_i$, which is exactly the same as $C^{(t)} \join B_i$.)
This deviation does not affect $i$'s utility from any previous times, so it is a strategy with higher total expected utility.
So $i$'s only BNE strategy can be to be truthful at $t$.
\end{proof}
\begin{theorem} \label{thm:subs-nonrush}
\thmsubsnonrush
\end{theorem}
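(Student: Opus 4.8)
The plan is to witness a failure of strong substitutes and turn it into a ``delay-and-capture'' deviation that breaks every all-rush equilibrium for one particular order. First I unpack the hypothesis: since the signals are not strong substitutes, Definition \ref{def:mod-strong-subs-comps} supplies subsets-lattice signals $A,B$ and a \emph{continuous}-lattice signal $A'$ with $A \meet B \preceq A' \preceq A$ and
\[ \V(A \join B) - \V(A) > \V(A' \join B) - \V(A'). \]
Equivalently $\V(A \join B) - \V(A' \join B) > \V(A) - \V(A')$: the increment of $A$ over $A'$ is strictly more valuable once $B$ is known. It is this complementarity that a strategic holder of $A$ will exploit.

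For the construction I use two traders and the order $1,2,1$: trader $1$ observes the base signals comprising $A$ and trader $2$ those comprising $B$. Since $A' \preceq A$ lies on the continuous lattice, revealing the garbling $A'$ is an available strategy for trader $1$. In the all-rush profile, trader $1$ reveals $A$ at time $1$, trader $2$ reveals $B$ at time $2$, and trader $1$ is silent at time $3$, so trader $1$ earns $\V(A) - \V(\bot)$.

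The deviation for trader $1$ is to reveal only $A'$ at time $1$, let trader $2$ reveal $B$ at time $2$, and then reveal the remainder of $A$ at time $3$. Her first trade earns $\V(A') - \V(\bot)$. If trader $2$'s report moves the price to $p_{a'b}$, her final trade earns, by Lemma \ref{lemma:marginal-contribution-bregman} applied with the pair $(A' \join B,\, A)$, exactly $\E_{a,b} D_G(p_{ab}, p_{a'b}) = \V(A \join B) - \V(A' \join B)$. Summing, her deviation payoff minus her all-rush payoff is
\[ \bigl(\V(A \join B) - \V(A)\bigr) - \bigl(\V(A' \join B) - \V(A')\bigr) > 0 \]
by the witnessed violation, so all-rush is not a best response and no all-rush equilibrium can survive. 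That trader $2$ indeed reveals $B$ at her single (hence final) move is supplied by Lemma \ref{lemma:bne-last-truthful}.

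The main obstacle is that the theorem concerns \emph{perfect Bayesian} equilibria, so I must control play after trader $1$'s time-$1$ garbling, which lies off the all-rush path; there trader $2$'s beliefs about trader $1$'s signal are unconstrained and her report need not be the cooperative $p_{a'b}$. I would control this in two pieces. Since trader $2$'s report can depend only on her information $(a',b)$, properness bounds its expected score by $\V(A' \join B)$, so off-path beliefs can never overcharge trader $1$ at time $3$. The delicate part is guaranteeing that trader $1$ can still \emph{decode} $b$ and attain $\V(A \join B)$ with her final report, i.e.\ that no off-path belief lets trader $2$ ``hide'' $B$ while remaining a credible posterior. This is exactly what the distinguishability hypothesis is for, and making that robustness argument airtight --- perhaps by lengthening the order or adding a trader whose on-path revelation of $B$ cannot be suppressed --- is the step I expect to demand the most care.
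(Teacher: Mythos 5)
Your proposal is correct and takes essentially the same route as the paper: the same witness $A,B,A'$, the same two-trader Alice--Bob--Alice order, and the same garble-then-capture deviation, with your direct payoff accounting being equivalent to the paper's observation that the two traders' utilities sum to the constant $\V(A \join B) - \V(\bot)$. The off-path concern you flag at the end is dispatched in the paper exactly as you anticipate --- Bob's single (hence final) move must be truthful given his beliefs in any PBE, and distinguishability lets Alice decode $b$ from that report --- so no lengthening of the order is needed.
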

\begin{proof}
The assumptions imply that there signals on the subsets lattice $A,B$ and some $A'$ on the continuous lattice with $A' \preceq A$ and
 \[ \V(A' \join B) - \V(A') < V(A \join B) - \V(A) . \]
Then in particular, we can consider the scenario with two traders where ``Alice'' has signal $A$ (i.e. she observes the corresponding subset of signals) and Bob has $B$, with a trading order Alice-Bob-Alice.
In perfect Bayesian equilibrium (PBE), Bob must be truthful at his trading opportunity according to his beliefs even if Alice deviates from her strategy.
By distinguishability, Alice can infer his signal from this truthful report, so in any PBE, Alice is truthful and correct in predicting $p_{ab}$ at the second opportunity.
Hence the two traders' expected utilities sum to the constant amount $\V(A \join B) - \V(\bot)$, even when Alice deviates.
If Alice reports truthfully at her first opportunity (the all-rush strategy), then Bob's expected utility is $\V(A \join B) - \V(A)$.
But if Alice reports according to $A' \preceq A$, then Bob's expected utility is at most $\V(A' \join B) - \V(A')$, which by assumption of non-substitutes is strictly smaller.
This implies that Alice prefers the deviation, so truthful reporting (and thus all-rush) could not have been an equilibrium.
\end{proof}

\subsubsection{Complements and ``all-delay''}
We begin by defining an ``all-delay'' strategy profile, analogous to all-rush.
\begin{definition}
An \emph{all-delay} strategy profile in a prediction market is one where, when the traders are numbered $1,\dots,n$ in order of their final trading opportunity, each trader $i \geq 2$ reveals no information until after trader $i-1$'s final trading opportunity.
\end{definition}

\begin{theorem} \label{thm:comps-delay-equilibrium}
\thmcompsdelayequilibrium
\end{theorem}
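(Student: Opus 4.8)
The plan is to mirror the proof of Theorem \ref{thm:subs-rush-equilibrium} (substitutes $\Rightarrow$ all-rush), reversing every inequality and exchanging ``rush'' for ``delay,'' but now working with perfect Bayesian equilibrium (PBE) rather than Bayes--Nash, since the delay argument requires reasoning about off-path continuation play. I would number the traders $1,\dots,n$ in order of their \emph{final} trading opportunity, with final opportunities $t_1 < \dots < t_n = T$, and write $B_i$ for trader $i$'s signal. As in the substitutes proof I would introduce the outside-observer signal $C^{(t)}$ capturing the information contained in the price $p^{(t)}$; on any path where all revelation so far has been truthful, $C^{(t)}$ lies on the subsets lattice $\Lat$. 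The structural input, from Lemma \ref{lemma:bne-last-truthful}, is that every trader reports truthfully at her final opportunity; crucially, under PBE this holds by sequential rationality \emph{even off the equilibrium path}, so together with distinguishability it forces that, by any time $t \ge t_i$, the signal $B_i$ has been revealed and can be inferred by all. In particular, just before trader $k$'s final opportunity (at time $t_k - 1$), at least $B_1 \join \dots \join B_{k-1}$ has been revealed, regardless of what $k$ herself does.

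The heart of the argument is a deviation showing that no trader reveals ``out of turn.'' Suppose toward a contradiction that in some PBE trader $k$ reveals new information at a time $t \le t_{k-1}$, before her delay window $(t_{k-1},t_k]$, when the revealed information is $C^{(t-1)}$. I would compare her continuation utility against the deviation in which she copies her play but postpones this revelation to her final opportunity $t_k$. As in the substitutes proof, the telescoping of the market payments together with Lemma \ref{lemma:marginal-contribution-bregman} lets me equate the utility she gains from a block of revelation with the marginal value $\V(\cdot \join B_k) - \V(\cdot)$ captured at the point of revelation, and since her information is fully out by $t_k$ she earns nothing afterward. Invoking strict, strong complements (Definition \ref{def:mod-strong-subs-comps}) with $A = C^{(t_k-1)}$, $A' = C^{(t-1)} \preceq A$, and $B = B_k$ incomparable to $A$ (the latter by the nontriviality assumption on signals) gives
\[ \V(C^{(t-1)} \join B_k) - \V(C^{(t-1)}) < \V(C^{(t_k-1)} \join B_k) - \V(C^{(t_k-1)}), \]
so postponing strictly increases trader $k$'s utility, a contradiction. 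Thus each trader reveals only within her window, which is exactly the all-delay condition.

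As in the substitutes proof, I would then tie the conditional reasoning together by a forward induction in the order $k = 1,\dots,n$. Because no trader reveals before $t_1$, trader $1$ reveals exactly $B_1$ by $t_1$, so $C^{(t_1)} = B_1 \in \Lat$; this keeps the hypotheses of the deviation argument valid for trader $2$ (in particular it makes $A = C^{(t_k-1)}$ genuinely an element of the subsets lattice, as the complements definition requires), and so on up the chain, establishing the full all-delay information pattern $C^{(t_k)} = B_1 \join \dots \join B_k$. Monotonicity of $\V$ (Fact \ref{fact:more-info-helps}) is used throughout to sign the relevant marginal values.

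I expect the main obstacle to be the continuation-utility accounting together with the off-path PBE reasoning, rather than the complements inequality itself. Three points need care. (i) Bounding trader $k$'s total utility across all of her trading opportunities by the marginal value at a single revelation point, allowing for the possibility that she reveals partial or garbled information spread over several steps; this is precisely the delicate bookkeeping in Theorem \ref{thm:subs-rush-equilibrium} and must be redone with the inequalities reversed. (ii) Guaranteeing that after $k$ deviates to delay, the earlier-deadline traders still reveal their signals by $t_k$, so that the marginal value of $B_k$ really is the larger quantity; this is where sequential rationality off the equilibrium path is essential, and it is the reason the theorem is stated for PBE and cannot be strengthened to Bayes--Nash equilibrium as in the substitutes case. (iii) Ensuring the information states entering the complements inequality sit on the correct lattices ($A$ and $B$ on the subsets lattice, $A'$ on the continuous lattice, with the meet condition $A \meet B \preceq A'$ satisfied), which is exactly what the forward-induction bookkeeping of the previous paragraph secures.
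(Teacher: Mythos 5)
Your high-level plan matches the paper's: the outside-observer signals $C^{(t)}$, Lemma \ref{lemma:bne-last-truthful}, a ``postpone revelation'' deviation certified by the strict strong complements inequality, and the observation that PBE (not just Bayes--Nash) is needed so that the other traders' truthful revelation survives the deviation off-path. But there is a genuine gap in the architecture. The paper's argument is a \emph{backward} induction on time $t$, establishing the conditional statement ``if $C^{(t)}$ is on the subsets lattice then play is all-delay from $t$ onward.'' It is this inductive hypothesis that pins down continuation play exactly --- trader $i$ makes exactly one further, truthful trade at some $t' > t_{i-1}$, the intervening traders truthfully reveal a subsets-lattice signal $B$, and nobody else moves --- so that $i$'s continuation utility collapses to the closed form $\V(C^{(t)}) - \E G(p^{(t-1)}) + \V(C^{(t)} \join B \join B_i) - \V(C^{(t)} \join B)$, against which a single-step deviation (reveal nothing at $t$, change nothing else) can be compared. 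Your proposal replaces this with a forward induction over traders and defers the continuation-utility accounting to an acknowledged ``obstacle'' (your point (i)). That accounting is not a detail: without the backward induction you cannot rule out that trader $k$ dribbles out garbled information over several rounds, nor value each dribble, because its value depends on how the others respond in the continuation --- which is exactly what the backward induction supplies.

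Relatedly, your complements inequality is applied to the wrong object. Comparing the marginal value of the \emph{full} signal $B_k$ at $C^{(t-1)}$ versus at $C^{(t_k-1)}$ only rules out the deviation in which $k$ dumps her entire signal early. The case that actually requires \emph{strong} (garbling-level) complements is partial early revelation, where $k$ moves the state to some $C^{(t)}$ with $C^{(t-1)} \prec C^{(t)} \preceq C^{(t-1)} \join B_k$ on the continuous lattice; the paper's inequality $\V(C^{(t)}) - \V(C^{(t-1)}) < \V(C^{(t)} \join B) - \V(C^{(t-1)} \join B)$ prices that garbled increment with and without the intervening signals $B$, and this is what makes the one-step ``reveal nothing at $t$'' deviation strictly profitable. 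As written, your argument leaves garbled early revelation unaddressed, which is precisely the case the theorem must exclude.
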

\begin{proof}
The ideas will be substantially the same as in Theorem \ref{thm:subs-rush-equilibrium}, but the deviation argument is somewhat trickier.
In the substitutes ``rush'' case, an agent could deviate to immediate truth-telling and ignore all subsequent consequences.
Now, we will need agents to deviate to delaying all information revelation, relying on their opponents' response to ensure this becomes profitable later.
This is also the reason that we restrict to perfect Bayesian equilibrium.

The proof is by backward induction.
We show that in any perfect Bayesian equilibrium: at each time $t$, if $C^{(t)}$ is on the subsets lattice, then players play an all-delay strategy from time $t$ onward.
(As in the proof of Theorem \ref{thm:subs-rush-equilibrium}, let $C^{(t)}$ be the signal induced by the random variable $p^{(t)}$ in equilibrium.
Because in PBE strategies are well-defined in every subgame, $C^{(t)}$ is also well-defined off the equilibrium path.)
As in Theorem \ref{thm:subs-rush-equilibrium}, this will prove the statement, because $C^{(0)}$ is on the subsets lattice, corresponding to $\emptyset$, so the first trader plays all-delay at time $1$, implying that $C^{(1)}$ is on the subsets lattice, etc.

For the base case, at $t=T$, this is the trader's final opportunity, so she is truthful by Lemma \ref{lemma:bne-last-truthful}, which constitutes an all-delay strategy from $T$ onward.

Now consider any trading time $t=T$ and participant $i$ trading at time $t$.

First suppose $t$ is $i$'s final trading opportunity; then by Lemma \ref{lemma:bne-last-truthful}, she reports truthfully at this time.
By induction, traders play all-delay at all times after $t$, so this shows that they play all-delay from time $t$ onward.

If $t$ is not a trader's final trading opportunity, but is after $i-1$'s final trading opportunity, then there is nothing to prove for this time step.
So suppose trader $i$ is trading at time $t$ with $i-1$'s final opportunity coming at some $t_{i-1} > t$.
The inductive assumption implies that in any subgame starting at time $t+1$ in PBE, $i$ does not make any update until some $t' > t_{i-1}$.
It also implies that no other trader participates between $t_{i-1}$ and $t'$.
Finally, it implies that all traders participating between time $t$ and $t'$ (exclusive) report truthfully\footnote{In a trading order such as $i,j,k,j,k$, it is possible that $j$ reports something nontrivial at time $2$, then reports truthfully at time $4$.
  But $k$ does not participate at time $3$, so $j$'s multiple reports WLOG telescope into a single truthful report.}.
Let $B$ denote the join of their signals.
Then $i$'s total utility from time $t$ onward is
 \[ U = \V\left(C^{(t)}\right) - \E G\left(p^{(t-1)}\right) + \V\left(C^{(t)} \join B \join B_i\right) - \V\left(C^{(t) \join B}\right) . \]
Now suppose for contradiction that $i$ reveals some nontrivial information at time $t$, \emph{i.e.} $C^{(t)} \neq C^{(t-1)}$.
Then $i$ can deviate to revealing nothing at time $t$, reporting according to $C^{(t-1)}$, and being truthful at time $t'$.
In this case, by assumption of PBE, others continue to best-respond.
By inductive assumption, in any subgame of a PBE (which itself must be in PBE), others who participate between $t$ and $t'$ therefore continue to report truthfully, implying that $B$ is still revealed between time $t$ and $t'$.
Now, strong, strict complements imply
 \[ \V\left(C^{(t)}\right) - \V\left(C^{(t-1)}\right) < \V\left(C^{(t)} \join B\right) - \V\left(C^{(t-1)} \join B\right) . \]
So
 \[ U < \V\left(C^{(t-1)}\right) - \E G\left(p^{(t-1)}\right) + \V\left(C^{(t)} \join B \join B_i\right) - \V\left(C^{(t-1)} \join B\right) . \]
But this is $i$'s utility for the deviation above (note that $C_t \join B \join B_i = C_{t-1} \join B \join B_i$).
Since the deviation is profitable, this gives a contradiction, implying that $i$ (and all players) must play all-delay starting from time $t$ in any PBE.
\end{proof}

\begin{theorem} \label{thm:comps-nondelay}
\thmcompsnondelay
\end{theorem}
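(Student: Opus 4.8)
The plan is to mirror the converse for substitutes (Theorem~\ref{thm:subs-nonrush}), dualizing ``rush'' to ``delay'' and the direction of the marginal-value inequality. Since the signals are not strong complements, the negation of Definition~\ref{def:mod-strong-subs-comps} yields signals $A,B$ on the subsets lattice and a garbling $A'$ on the continuous lattice with $A\meet B \preceq A' \preceq A$ and
\[ \V(A' \join B) - \V(A') > \V(A \join B) - \V(A) . \]
Rearranging, this is $\V(A)-\V(A') > \V(A\join B)-\V(A'\join B)$: the incremental information carrying $A'$ up to $A$ is worth strictly more \emph{before} $B$ is known than \emph{after}. This is exactly the statement that, for this witnessing triple, the marginal value of an increment is higher without $B$ in the background --- the signature of a complementarity failure --- and it is what we will arrange to make ``un-delaying'' strictly profitable against the all-delay prescription.

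First I would fix two traders, with Alice holding $A$ and Bob holding $B$, and choose a trading order in which Alice has an opportunity both before and after Bob reveals, so that her strategy space includes revealing the garbling $A'$ early and the remaining increment later. As in the proof of Theorem~\ref{thm:subs-nonrush}, I would use the distinguishability criterion together with Lemma~\ref{lemma:bne-last-truthful} to argue that in any perfect Bayesian equilibrium the market fully aggregates to $p_{ab}$ by the close, so the two traders' expected payoffs sum to the constant $\V(A\join B)-\V(\bot)$ regardless of play. This constant-sum structure reduces ``who gains from deviating'' to ``who makes the opponent's marginal trade less valuable,'' and it is where the displayed inequality enters: shifting the increment $A'\to A$ from before $B$ to after $B$ rescores it on background $A'\join B$ rather than $A'$, and the violation says the former is strictly more valuable.

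The key step is to exhibit a trader whose all-delay behavior is strictly dominated by un-delaying --- revealing her increment before $B$ rather than after --- thereby capturing the higher marginal value guaranteed by the inequality. As in Theorem~\ref{thm:comps-delay-equilibrium}, this is precisely why the perfect Bayesian refinement is needed: the profitability of un-delaying is not a unilateral windfall but depends on the opponent continuing to reveal $B$ in the intervening subgame, so I would invoke sequential rationality off the equilibrium path to pin down the responses that make the deviation strictly improving, then conclude that the all-delay profile fails sequential rationality at some node.

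The step I expect to be the main obstacle is precisely this routing of the benefit. Unlike the rush converse, where the early full-revealer simply garbles and the constant-sum identity immediately hands her the opponent's lost surplus, here the witnessing object $A'$ is a genuine garbling on the \emph{continuous} lattice: it can only be produced by a trader who privately randomizes over reports of $A$, yet the party who directly profits from the resulting $A'$-background is the holder of $B$, not the producer of $A'$. Reconciling this --- designing the trader set and order (and, should two traders prove insufficient, introducing an auxiliary trader to absorb the increment so that the producer of $A'$ is itself the beneficiary) so that some single trader strictly prefers to break delay --- is the delicate heart of the argument, and is where I would spend the most care, checking that the off-path beliefs and responses are consistent with distinguishability and that the strict inequality survives.
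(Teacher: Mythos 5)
Your setup---the negated complements inequality, the two-trader Alice--Bob--Alice order, and the use of distinguishability plus Lemma~\ref{lemma:bne-last-truthful} to force full aggregation and hence a constant payoff sum $\V(A\join B)-\V(\bot)$---matches the paper exactly. The gap is that you stop precisely where the theorem's content begins: you never write down the profitable deviation, instead deferring ``routing the benefit'' as the delicate heart to be worked out later. In the paper this last step is a one-liner, and your routing worry largely dissolves: the deviation is simply for Alice to report according to $A'$ at her first opportunity instead of reporting nothing; since payoffs are constant-sum once aggregation is guaranteed, anything that lowers Bob's payoff (from $\V(B)-\V(\bot)$ under all-delay to at most $\V(A'\join B)-\V(A')$ under the deviation) is automatically pocketed by Alice. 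No auxiliary trader is needed---and your proposed auxiliary trader cannot exist in this model anyway, since traders observe subsets of the base signals and so no third party can ``hold'' the continuous-lattice garbling $A'$; only Alice can produce it, by privately randomizing her own report, which is exactly what the paper's deviation has her do.

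That said, the residual unease you voice does point at something real that the paper's own writeup also glosses over: the witness inequality compares Bob's marginal value of $B$ against background $A'$ versus background $A$, whereas deviating from all-delay changes the background from $\bot$ to $A'$. What the argument actually needs is $\V(A'\join B)-\V(A') < \V(B)-\V(\bot)$, and this does not follow from $\V(A'\join B)-\V(A') > \V(A\join B)-\V(A)$ without further work (one must either select a different signal for Alice to reveal or reason about the whole chain of garblings from $\bot$ up to $A$). Identifying that tension is to your credit, but a proof that ends with ``this is where I would spend the most care'' has not established the theorem; the specific deviation and the strict inequality it exploits are the entire content here and must be supplied.
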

\begin{proof}
Analogous to the substitutes case (Theorem \ref{thm:subs-nonrush}).
The assumptions imply that there are subsets-lattice signals $A,B$ and continuous-lattice signal $A' \preceq A$ such that
 \[ \V(A' \join B) - \V(A') > V(A \join B) - \V(A) . \]
Then in particular, we can consider the scenario where ``Alice'' has signal $A$ and Bob has $B$, with a trading order Alice-Bob-Alice.
In PBE, Bob is truthful even if Alice deviates.
By distinguishability, Alice can infer his signal from this truthful report, so in any PBE, Alice is truthful and correct in predicting $p_{ab}$ at the second opportunity.
So utilities have the constant sum $\V(A \join B) - \V(\bot)$ even when Alice deviates.
If Alice reports nothing at her first opportunity, then Bob's expected utility is $\V(B) - \V(\bot)$.
But if Alice deviates to reporting to $A' \preceq A$, then Bob's expected utility is at most $\V(A' \join B) - \V(A')$, which is strictly smaller.
This implies that Alice prefers the deviation, so truthful reporting (and thus all-rush) could not have been an equilibrium.

Hence, in perfect Bayesian equilibrium, trader $1$ cannot play all-delay.
\end{proof}

\paragraph{Discussion.}
These results show that informational S\&C are in a sense unavoidable in the study of settings such as prediction markets.
However, the result raises many interesting questions for future work.
Two major questions are: How can we identify structures that are substitutes or complements? and How can we \emph{design} markets to encourage substitutability?

We give some initial steps toward answering these questions in Section \ref{sec:structural}.

\subsection{Other game-theoretic applications} \label{sec:game-other}
We will now examine a few game-theoretic contexts in which our results have immediate applications or implications.
Instead of developing full formal proofs and theorem statements, we focus on illustrating the intuition of how to extend our results and the conceptual lens of informational S\&C to those settings.

\subsubsection{Crowdsourcing and contests}
In the study of crowdsourcing from a theoretical perspective, the ``crowd'' is a group of agents who hold valuable information and the goal is to design mechanisms that elicit this information.
Specifically, here we are interested in ``wisdom of the crowd'' settings where the total information available to the crowd is greater than that of the most-informed individual, and the goal is to aggregate this information.

Before describing how informational S\&C apply in such settings, we would like to contrast with approaches to crowdsourcing that models each user's contribution as a monolithic submission that has some endogenous quality, such as \citet{dipalantino2009crowdsourcing,archak2009optimal,chawla2015optimal}
There, it is impossible to integrate or aggregate user contributions and the problem is to incentivize and select one of the highest possible quality.
In these models, \emph{information} plays no role and the model is equally well-suited to incentivizing production of a high-quality \emph{good} of which only one is required; sometimes this is explicit in the motivation or model of the literature~\citep{cavallo2012efficient}.

Here, we consider cases where users have heterogeneous information and we would like to aggregate it into a final form that is more useful than any one user.
The question is how users behave strategically in revealing this information in the contest.

\paragraph{Collaborative, market-based contests for machine learning.}
\citet{abernethy2011collaborative} proposes a mechanism for machine learning contests with a prediction market structure.
This mechanism was later extended to elicit data points and to more general problems in \citet{waggoner2015market}.
While these mechanisms have appealing structure, seeming to align participants' incentives with finding optimal machine-learning hypotheses, the authors did not give results on equilibrium performance or behavior of strategic agents.
Here, we briefly describe the framework of this mechanism and how our results can apply.

In a machine learning problem, we are given a \emph{hypothesis class $\D$}.
There is some true underlying distribution $e$ of data, which is initially unknown.
In our setting we assume there is a prior belief on this distribution $e$, which distributed as a random variable $E$.
The goal is to select a hypothesis $d$ with minimum \emph{risk} $R(d,e)$ on the true data distribution.
Here, $R(d,e) = \E_{z\sim e} \ell(d,z)$ for a \emph{loss function} $\ell(d,z)$ on hypothesis $h$ and a datapoint $z$ drawn from $e$.

In the contest mechanism of \citet{abernethy2011collaborative,waggoner2015market}, the mechanism selects an initial market hypothesis $d^{(0)}$.
As in the prediction market model of Section \ref{sec:markets}, participants iteratively arrive and propose a new hypothesis $d^{(t)}$ at each time $t$.
At the end of the contest, the mechanism draws a test data point $z \sim e$ from the true distribution and rewards each participant by their improvement to the loss of the market hypothesis, \emph{i.e.} if $i$ updated the hypothesis at time $t$ from $d^{(t-1)}$ to $d^{(t)}$, then $i$ is rewarded $\ell(d^{(t-1)},z) - \ell(d^{(t)},z)$ for that update.

We observe that prediction markets are a special case of this framework: Each $e$ corresponds to some fixed observation, for instance, for a given $e$ the data point drawn is always the same $z_e$.
The risk $R(d,e)$ is therefore always equal to $\ell(d,z_e) = -S(d,e)$ for the proper scoring rule $S$ used in the market.
Thus, the above framework captures prediction markets as a special case.
However, we now show that prediction markets capture the essential strategic features of this setting.

Now, notice that in expectation over the test data $z$, this reward is equal to $R(d^{(t-1)},e) - R(d^{(t)},e)$.
Furthermore, we can define the \emph{utility} of the designer to be $u(d,e) = - R(d,e)$, that is, the negative of the risk of that hypothesis on that data distribution.
Hence, by the revelation principle, there is some proper scoring rule $S$ that is payoff-equivalent to $u$.
A prediction market with proper scoring rule $S$ is strategically identical to the above contest.
Thus, with a few small caveats, our above results apply: in a Bayesian game setting where traders have signals $A_1,\dots,A_n$ and a common prior on the distribution of signals and $E$, substitutes characterize ``all-rush'' equilibria with immediate aggregation, while complements characterize ``all-delay'' equilibria.

The caveats are (1) that the scoring rule obtained by the revelation principle will not in general be strictly proper if two beliefs about $E$ map to the same optimal hypothesis $d$; and (2) it is not guaranteed that traders can infer others' information from their trades without a condition analogous to the \emph{distinguishability} criterion of Section \ref{sec:markets}.
While these hurdles are surmountable, our purpose here is only to mention the key ideas for how the above results on prediction markets will generalize.

This connection immediately presents several questions for future work: When, in a machine-learning setting, should we expect contest participants to have substitutable or complementary information?
In particular, if agents hold \emph{data sets} and the goal is to elicit these data sets using this structure (as explored in \citet{waggoner2015market}), when should we expect data sets to be substitutable?
Furthermore, how can we \emph{design} loss functions so as to encourage substitutability and hence early participation?
This last question is discussed in Section \ref{sec:surrogate}.

\paragraph{Question-and-answer forums.}
\citet{jain2009designing} propose a model for analyzing strategic information revelation in the context of \emph{question-and-answer forums}.
Initially, some question is posed.
Participants have private pieces of information $A_1,\dots,A_n$ as in the prediction market model, and they arrive iteratively to post answers.
Unlike in the prediction market model, rather than arriving multiple times, participants may only post a single answer; however, they may be strategic about \emph{when} they post this answer.
By waiting until later, a participant may be able to aggregate information from others' answers, allowing her to post a better response.
Unlike in the prediction market setting, participants cannot ``garble'' their information.
However these are not essential differences as compared to the substitutes or complements cases of prediction markets, where in equilibrium participants do not want to garble or participate multiple times, but instead fully reveal at the time that is optimal for them (as early or as late as possible, respectively).

In the model of \citet{jain2009designing}, the asker of the question has a valuation function $\V(S)$ over subsets of the pieces of information.
\citet{jain2009designing} does not justify how such a valuation function may arise, but we can now justify this modeling decision because \emph{any} decision problem faced by the asker gives rise to some such valuation function.
In one case of \citet{jain2009designing}, the asker draws a uniform ``stopping threshold'' $t$ on $[\V(\bot), \V(A_i \join \cdots \join A_n)]$ (using our notation), and selects as the ``winning answer'' the one whose information raises her value above this threshold.
For instance, if the first two users to post are $i,j$ with signals $A_i,A_j$, and then the third user $k$ posts with signal $A_k$, and we have $\V(A_i \join A_j) < t \leq \V(A_i \join A_j \join A_k)$, then user $k$ is declared the winner.

From this model, the expected reward of a participant, which is an indicator for being declared the winner, is exactly proportional to the marginal value of her information to the information collected so far.
Thus, substitutes imply that participants' dominant strategy is to rush to participate as early as possible, while complements imply that it is dominant to wait as long as possible.
This follows from diminishing (increasing) marginal value of information.

And indeed, \citet{jain2009designing} identify substitutes and complements conditions on the information which are exactly diminishing and increasing marginal returns, with the main result as stated above. (The paper also considers several other methods of selecting the winner with more complex features, which we will avoid discussing here for simplicity.)

We would like to emphasize that, while the above discussion intentionally highlights the similarities between that work and this one, the authors do not provide any endogenous model of the information, \emph{e.g.} whether it be probabilistic and if so how it is structured, nor of the utility of the asker of the question and how this utility might arise or be related to the structure of the information.
Without such models, it does not justify why a structure might satisfy their substitutes or complements conditions (nor when/if one could expect the conditions to hold).

Our work provides answers to all of these questions.
Information may be modeled as Bayesian signals and the asker may face any decision problem.
This gives rise to a valuation function over signals that can capture the model in \citet{jain2009designing}.
Furthermore, under this model, that papers' substitutes and complements conditions used in \citet{jain2009designing} are subsumed by those proposed here.
Hence, we are able to bring this work under the same umbrella as prediction markets, the crowdsourcing contests discussed above, and the algorithmic and structural results to be discussed later.
An example of substitutes for any of these problems is an example for all of them.

\clearpage

\section{Algorithmic Applications} \label{sec:algorithmic}
Here, we investigate the implications of informational substitutes and complements on the construction and existence of efficient algorithms for information acquisition.
We first define a very general class of problems, \sigsel/, to model this problem, and show positive results corresponding to substitutes and negative results in general.
These results are obtained by showing tight connections to maximization of (submodular) set functions.
We also investigate an adaptive or online variant of the problem with similar results.

To focus on the problem of information acquisition, we abstract out the complexity of interacting with the decision problem and prior.
This differentiates these results from prior work on information acquisition, but the overall approach -- utilizing submodular set functions -- is common (even pervasive) in the literature (see~\citet{krause2012submodular}).
So we view the contribution of these results as offering a unification or explanation for successful approaches in terms of informational substitutes.

\subsection{The \sigsel/ problem} \label{sec:sigsel}
\begin{definition}
In the problem \sigsel/, one is given a decision problem $u$ and information structure on $n$ signals $A_1,\dots,A_n$; and also a family $\mathcal{F}$ of feasible subsets of signals $S \subseteq \{1,\dots,n\}$.
The goal is to select an approximately optimal $S \in \mathcal{F}$, \emph{i.e.} to
 \[ \max_{S \in \mathcal{F}} ~~ \V\left(\bigjoin_{i \in S} A_i \right) . \]
\end{definition}
For instance, suppose that an agent has a budget constraint of $B$ and each piece of information has a price tag; how to select the set that maximizes utility subject to the budget constraint?
(This is also known as a knapsack constraint.)

The \sigsel/ problem is not yet well-defined because we have not described how the input is represented.
In general, the decision problem may be hard to optimize, and the prior distribution may have support $2^{\Omega(n)}$.
Because we want to abstract out the complexity of \sigsel/ independently of the difficulty of these problems, we will assume an efficiently-queryable input.
The outline for our approach is as follows, pictured in Figure \ref{fig:sigsel-approach}.
\begin{itemize}
\item For positive results, we require an input representation that allows efficient computation of $\V\left(\bigjoin_{i\in S} A_i \right)$ for some subset $S$ of signals.
      In Section \ref{sec:computing-V}, we will discuss several such representations.
      The most natural of these we call the \emph{oracle model}.
\item For negative results, we will reduce a hard problem -- maximization of arbitrary monotone set functions given a value oracle -- to \sigsel/.
      The reduction will produce instances of \sigsel/ having a very concise and tractable input representation: The signals $A_1,\dots,A_n$ will be independent uniformly random bits, the event $E$ will be the vector $(A_1,\dots,A_n)$, and the decision problem will be immediately computable, just requiring transparent calls to the value oracle of the original maximization problem.
      This implies that any algorithm that can solve \sigsel/, under any ``reasonable'' model of input (particularly the oracle model and others we describe), can solve the instances produced by our reduction.
\end{itemize}
Note that an oracle-based approach is very general because, if we do have an instance where the input is concise and given explicitly, for example, the decision-problem optimizer is given as a small circuit, then we can just run our algorithms treating this input as an oracle, evaluating it when necessary.

We next discuss input representations for positive results, including describing the oracle model.
We will then give our positive and negative results.

\paragraph{Monotone set function maximization.}
Our results will involve relating complexity of \sigsel/ to that of maximizing some $f: 2^N \to \mathbb{R}$ where $N = \{1,\dots,n\}$ is a finite ground set.
The fact that $\V$ is increasing, \emph{i.e.} more information always helps, implies that we restrict to monotone increasing $f$: If $S \subseteq T$, then $f(S) \leq f(T)$.
Recall that submodular $f$ correspond to substitutable items, while supermodular $f$ correspond to complements.

In set function maximization, the input is often given as a \emph{value oracle} that, when given a subset $S \subseteq N$, returns in one time step $f(S)$.
When $f$ is submodular, it is known that polynomial-time constant-factor approximation algorithms exist for many types of constraints.
For instance, there are efficient $(1-1/e)$-approximation algorithms under the knapsack constraint described above~\citep{sviridenko2004note,krause2005note} and more general \emph{matroid} constraints~\citep{calinescu2011maximizing}.
On the other hand, in general set function maximization is known to be difficult information-theoretically, requiring exponentially-many oracle queries to obtain a nontrivial approximation factor even when restricting to monotone supermodular functions; we give an example in Proposition \ref{prop:max-monotone-supermodular}.

\begin{figure}[ht]
\caption{Input representations and structure of results.}
\label{fig:sigsel-approach}
\centering
\begin{subfigure}{0.9\linewidth}
\includegraphics[width=\linewidth]{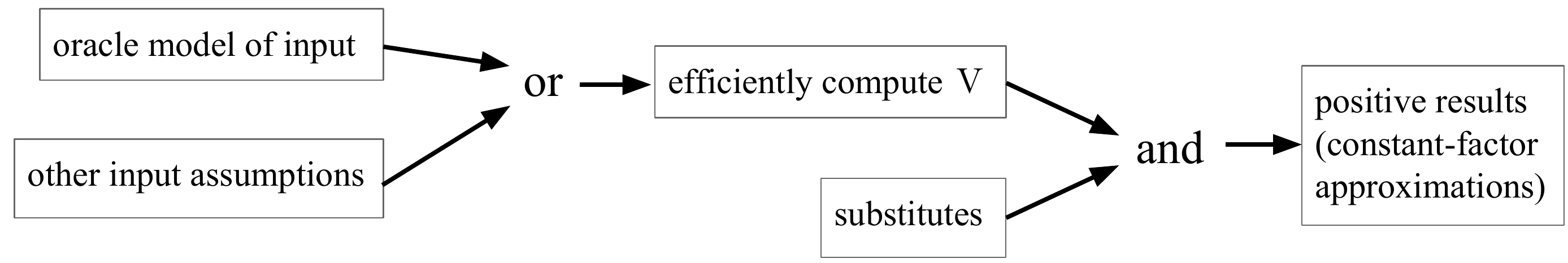}
\caption{The structure of our positive algorithmic results.
  Black arrows represent logical implication.
  Given input represented in the oracle model, or some similar models, $\V$ can be efficiently computed.
  We then show that this, along with the substitutes assumption, implies efficient algorithms for \sigsel/ for a variety of types of constraints, such as knapsack constraints.}
\end{subfigure}

\vspace{3em}
\begin{subfigure}{0.9\linewidth}
\includegraphics[width=\linewidth]{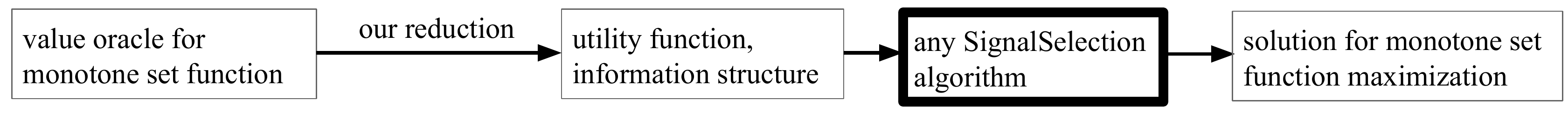}
\caption{The structure of our negative algorithmic results: a reduction to \sigsel/ from maximizing a monotone set function, given as a value oracle, under a set of constraints.
  Black arrows represent an algorithmic reduction.
  Given the value oracle, we construct a utility function and information structure.
  These are very concise and simple, allowing immediate computation of $\V$.
  Any algorithm for \sigsel/ that can accept this kind of input representation will then give a solution to the original maximization problem.}
\end{subfigure}
\end{figure}

\subsubsection{The oracle model and computing \texorpdfstring{$\V$}{V}} \label{sec:computing-V}
Here, we investigate the computation of $\V$, the value function.
We restrict attention to evaluating $\V$ at a set of signals, \emph{i.e.} the join $\bigjoin_{i\in S} A_i$ of a subset $S$ of signals.
The reason is that this case is sufficient for our positive results and is most compelling for \sigsel/.
Furthermore, no difficulty arises in how the input signal is represented, as it can always be given by a subset $S$ of $\{1,\dots,n\}$.

We begin with a case where the decision problem is specified by an oracle, but the prior $p$ is given explicitly.
Because $p$ may be exponentially large in $n$, the number of signals, we will later introduce an oracle model for $p$ as well.
\begin{proposition} \label{prop:compute-V-big}
For any decision problem and set of signals $A_1,\dots,A_n$, given an oracle for computing the associated convex $G$, we can compute $\V(\bigjoin_{i \in S} A_i)$ in time polynomial in $n$, $\prod_i |\text{Support}(A_i)|$ and $|\text{Support}(E)|$.
(This is the size of the problem in general, as the prior distribution ranges over this many outcomes $(e,a_1,\dots,a_n)$.)
\end{proposition}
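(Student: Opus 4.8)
The plan is to reduce the entire computation to Corollary~\ref{cor:decision-captured-by-G}, which already tells us that $\V(A) = \E_a G(p_a)$ with $p_a$ the posterior on $E$ given $A=a$. Applying this with $A = \bigjoin_{i\in S} A_i$, and recalling from Definition~\ref{def:subsets-lattice} that the realizations of this join signal are exactly the joint tuples $a_S = (a_i)_{i\in S}$, the quantity we want is
\[
 \V\left(\bigjoin_{i\in S} A_i\right) = \sum_{a_S} P(a_S)\, G(p_{a_S}),
\]
where the sum ranges over tuples $a_S$ with positive probability and $p_{a_S}$ is the posterior on $E$ conditioned on $A_i = a_i$ for all $i \in S$. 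So the only work is to produce each marginal $P(a_S)$ and each posterior vector $p_{a_S}$, then make one oracle call to $G$ per tuple, and sum.

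First I would marginalize in a single sweep of the explicitly given prior. Its support has size at most $|\text{Support}(E)| \cdot \prod_i |\text{Support}(A_i)|$, and each entry is an outcome $(e,a_1,\dots,a_n)$ with its probability. Iterating once over this table, I add each entry's probability into accumulators indexed by the tuple $a_S$ obtained by projecting onto the coordinates in $S$: one accumulator for the marginal $P(a_S)$, and, for each $e$, one accumulator for the unnormalized joint $P(e,a_S)$. This automatically sums out both $E$ and the unselected signals $a_j$ for $j \notin S$, and costs time linear in the support of the prior.

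Next, for each of the at most $\prod_i |\text{Support}(A_i)|$ tuples $a_S$ with $P(a_S) > 0$, I would form the posterior vector by the Bayesian update $p_{a_S}(e) = P(e,a_S)/P(a_S)$ over the $|\text{Support}(E)|$ coordinates, query the oracle once to obtain $G(p_{a_S})$, and add $P(a_S)\, G(p_{a_S})$ to a running total, which is returned at the end. Counting: one pass over the prior, at most $\prod_i |\text{Support}(A_i)|$ oracle queries, and a comparable number of length-$|\text{Support}(E)|$ vector operations --- all polynomial in $n$, $\prod_i |\text{Support}(A_i)|$, and $|\text{Support}(E)|$, as claimed.

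There is no serious mathematical obstacle here; the conceptual content is entirely supplied by Corollary~\ref{cor:decision-captured-by-G}, which collapses the $\max$ over decisions in the definition of $\V$ into a single evaluation of $G$ at the posterior. The only thing to be careful about is the running-time bookkeeping: one must batch the marginalization and the Bayesian updates into a single sweep of the prior so that neither the number of oracle calls nor the arithmetic exceeds the stated product of support sizes. In particular, the number of distinct posteriors, and hence of oracle calls, is bounded by the number of joint realizations of the \emph{selected} signals rather than by the full support of the prior.
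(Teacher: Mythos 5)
Your proposal is correct and follows essentially the same route as the paper's proof: expand $\V(\bigjoin_{i\in S} A_i)$ as a sum over the at most $\prod_i |\text{Support}(A_i)|$ joint realizations, compute each posterior $p(e \mid a_i : i \in S)$ by a Bayesian update from the explicitly given prior, and make one oracle call to $G$ per term. The extra detail you give about batching the marginalization into a single sweep is just implementation bookkeeping that the paper leaves implicit.
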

\begin{proof}
We assume an oracle that computes $G(q)$ for any distribution $q$ on $E$.
Note that $G(q) = \max_d \E \left[ u(d,e) \mid e \sim q\right]$, so this is equivalent to assuming an oracle for the utility of the optimal decision for a given distribution on $E$.

We need to calculate
 \[ \V\left(\bigjoin_{i\in S} A_i\right) = \E_{\{a_i : i \in S\}} G(p_{a_i : i \in S}) . \]
Here, the expectation is over all realizations $\{a_i : i \in S\}$ of the set of signals $\{A_i : i \in S\}$, and $p_{a_i : i \in S}$ is the posterior distribution on $E$ conditioned on that set of realizations.

There are at most $\prod_{i\in S} |\text{Support}(A_i)|$ terms in the sum, and each posterior $p_{a_i : i \in S}$ can be computed as follows:
 \[ p(e | a_i : i \in S) = \frac{p(e, a_i : i \in S)}{p(a_i : i \in S)} , \]
for each $e$ in the support of $E$.
This can be computed in time polynomial in the products of the support sizes.
\end{proof}

In general, the running time of Proposition \ref{prop:compute-V-big} is exponential in $n$, the number of signals.
This is unavoidable in general as the input itself may be this large (the prior distribution ranges over exponentially many events).
Thus, it is natural to suppose that the input is given as an oracle in some fashion, making the input succinct.
We define a natural model, give the associated positive result, and discuss possible variants or weakenings.
\begin{definition}
In the \emph{oracle model} for representing a decision problem $u$ and prior $p$, one is given:
\begin{enumerate}
 \item An oracle computing the prior probability of any realization of any subset of signals.
 \item Access to independent samples from the prior distribution.
 \item An oracle computing, for a distribution $q$ on $E$, the expected optimal utility obtainable given belief $q$, namely $G(q)$.
\end{enumerate}
\end{definition}

\begin{proposition} \label{prop:compute-V-small}
In the oracle model, we can approximate $\V(\bigjoin_{i\in S} A_i)$ to arbitrary (additive) accuracy with arbitrarily high probability in time polynomial in $n$, $\sum_i \log |\text{Support}(A_i)|$, and $|\text{Support}(E)|$.
\end{proposition}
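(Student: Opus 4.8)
The plan is to estimate the expectation $\V(\bigjoin_{i\in S}A_i) = \E_{\{a_i : i\in S\}} G(p_{a_i:i\in S})$ by Monte Carlo sampling, since the number of realizations $\prod_{i\in S}|\text{Support}(A_i)|$ is exponential in the target runtime parameter $\sum_i \log|\text{Support}(A_i)|$ and so cannot be enumerated as in Proposition \ref{prop:compute-V-big}. The three oracles supply exactly what is needed: the sampling oracle lets us draw realizations with the correct marginal law, the probability oracle lets us form the posterior on $E$, and the $G$-oracle lets us evaluate the integrand.

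First I would draw $m$ independent samples $(e^{(j)}, a_1^{(j)}, \dots, a_n^{(j)})$ from the prior using the sampling oracle and restrict each to the coordinates in $S$, obtaining realizations $\{a_i^{(j)} : i\in S\}$ whose distribution is precisely the marginal over which the expectation is taken. For each such realization I would compute the posterior $p^{(j)} := p_{a_i^{(j)}:i\in S}$ exactly as in the proof of Proposition \ref{prop:compute-V-big}, using the probability oracle to evaluate $p(e \mid a_i^{(j)}:i\in S) = p(e, a_i^{(j)}:i\in S)/p(a_i^{(j)}:i\in S)$ for each of the $|\text{Support}(E)|$ outcomes $e$; this takes time polynomial in $|\text{Support}(E)|$ and a polynomial number of oracle calls per sample. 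I would then query the $G$-oracle for $G(p^{(j)})$ and output the empirical mean $\hat{V} = \frac{1}{m}\sum_{j=1}^m G(p^{(j)})$.

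The key correctness observation is that $\hat{V}$ is an unbiased estimator: each $G(p^{(j)})$ has expectation $\E_{\{a_i:i\in S\}} G(p_{a_i:i\in S}) = \V(\bigjoin_{i\in S}A_i)$ precisely because the sampled realizations carry the correct prior marginal. Assuming $u$ (hence $G$) takes values in a bounded range $[c_{\min}, c_{\max}]$, a Hoeffding bound shows that $m = O\!\left(\frac{(c_{\max}-c_{\min})^2}{\epsilon^2}\log\frac{1}{\delta}\right)$ samples suffice to guarantee $|\hat V - \V| \le \epsilon$ with probability at least $1-\delta$. Since each sample is described by $\sum_i \log|\text{Support}(A_i)|$ bits, is processed in time polynomial in $n$ and $|\text{Support}(E)|$, and $m$ depends only on $\epsilon$, $\delta$, and the range of $G$, the total running time matches the claim.

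I expect the only real subtlety — the main obstacle — to be confirming unbiasedness of the sampling estimator together with the boundedness requirement on $G$ needed to invoke concentration; the per-sample posterior computation is identical to Proposition \ref{prop:compute-V-big} and the concentration step itself is routine. One should also note that the additive (rather than multiplicative) nature of the guarantee is essential here, since Monte Carlo averaging controls only absolute error.
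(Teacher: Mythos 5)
Your proposal is correct and follows essentially the same route as the paper's proof: draw i.i.d.\ samples from the prior via the sampling oracle, compute each posterior $p_{a_i : i \in S}$ with two calls to the probability oracle per outcome of $E$, evaluate $G$ on each posterior, and apply a Hoeffding bound to the empirical mean using the bounded range of $G$. The paper's argument is the same Monte Carlo estimator with the same sample-complexity bound, so there is nothing to add.
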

\begin{proof}
For any given $S$, we would like to approximate
 \[ \V\left(\bigjoin_{i\in S} A_i\right) = \E_{\{a_i : i \in S\}} G(p_{a_i : i \in S}) \]
up to an additive $\epsilon$ error with probability $1-\delta$.
We can abstract this problem as computing $\E Z$, with $Z$ is distributed as $G(p_{a_i : i \in S})$ where $\{a_i : i \in S\}$ is drawn from the prior.
Letting $K = \max_q G(q) - \min_q G(q)$ over all $\{q = p_{a_i : i \in S} : S \subseteq \{1,\dots,n\}\}$, we can apply a standard Hoeffding bound: The average of $m$ i.i.d. realizations of $Z$ is within $\epsilon$ of the true average with probability $1-\delta$ as long as $m$ exceeds $\frac{K^2\ln(2/\delta)}{2\epsilon^2}$.

To see that we can in fact sample $Z$: We simply draw one sample from the prior, giving us $\{a_i : i \in S\}$.
We compute the posterior conditioned on this sample as follows: For each outcome $e$ of $E$, we have
 \[ p_{a_i : i \in S}(e) = \frac{p(e, \{a_i : i \in S\})}{p(\{a_i : i \in S\})} . \]
This requires two calls to the prior computation oracle; then a call to the oracle for $G$ completes the calculation.
The running time analysis simply observes that each signal realization $a_i$ requires only $\log|\text{Support}(A_i)|$ bits to represent, and there are $n$ of them; similarly for outcomes of $E$.
\end{proof}
One would like to make weaker assumptions.
However, dropping either the assumption of independent samples or the oracle seems problematic.
Without samples, evaluating $\E G(p_{a_i : i \in S})$ seems difficult because this is a sum over exponentially many terms, and we cannot \emph{a priori} guess which terms are ``large'' or where to query the oracle for the prior.

With independent samples but no oracle for marginal probabilities, na\"{i}ve approaches break down because of the difficulty of accurately estimating conditional probabilities $p(e | \{a_i : i \in S\})$.
For instance, it may be that each outcome $\{a_i : i \in S\}$ is very unlikely, so that one cannot draw enough samples to accurately estimate the desired conditional probability using the ratio $p(e, \{a_i : i \in S\}) / p(\{a_i : i \in S\})$.
It is also of note that any distribution over the possible outcomes of the signals and of $E$, including the prior itself, in general has size $|\text{Support}(E)| \cdot \prod_{i=1}^n |\text{Support}(A_i)|$, which is exponential in $n$.
So one must avoid writing down such distributions; and any small sketch seems to quickly lose accuracy in estimating the conditional probability, which is computed from probabilities on events (and again, these probabilities may all be exponentially small even while conditional probabilities are large).

We give two further examples of how to overcome this difficulty.
The first is to assume that the prior is tractable in some way; in our case, sparse.
The second is to push the difficulties mentioned into an oracle of a different sort.
\begin{proposition} \label{prop:compute-V-small-sparse}
Suppose we are given access to an oracle for $G$ explicitly given the prior $p$; and suppose that the prior is \emph{sparse}, supported on $k$ possible outcomes $(e, a_1,\dots, a_n)$.
Then we can compute $\V(\bigjoin_{i\in S} A_i)$ in time polynomial in $n$ and $k$.
\end{proposition}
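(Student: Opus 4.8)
The plan is to exploit sparsity directly: since the prior is supported on only $k$ outcomes $(e,a_1,\dots,a_n)$, the sum defining the value function has at most $k$ terms, each involving a posterior on $E$ that is itself supported on at most $k$ points. Recall from Corollary \ref{cor:decision-captured-by-G} (and as already used in Proposition \ref{prop:compute-V-big}) that
\[ \V\left(\bigjoin_{i\in S} A_i\right) = \E_{\{a_i : i \in S\}} G(p_{a_i : i \in S}) = \sum_{\{a_i : i \in S\}} p(\{a_i : i \in S\}) \, G(p_{a_i : i \in S}) . \]
The first observation I would make is that each joint support point $(e,a_1,\dots,a_n)$ projects to a single realization $\{a_i : i \in S\}$ of the conditioning signals; hence the number of realizations $\{a_i : i \in S\}$ carrying positive probability is at most $k$, so the outer sum ranges over at most $k$ nonzero terms.

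The concrete steps are then pure bookkeeping over the $k$ support points. First I would enumerate the support, each point being an $(n{+}1)$-tuple with an associated probability mass. Next I would group these points by their restriction to the coordinates in $S$, producing at most $k$ groups, one per positive-probability realization $\{a_i : i \in S\}$. For each group I would compute its marginal probability $p(\{a_i : i \in S\})$ as the sum of masses in the group, and then the posterior $p_{a_i : i \in S}$ on $E$ by setting, for each outcome $e$ occurring in the group, $p(e \mid \{a_i : i \in S\}) = p(e,\{a_i : i \in S\})/p(\{a_i : i \in S\})$. Because the entire joint support has size $k$, this posterior is supported on at most $k$ outcomes of $E$ and is thus a compact, legal input to the oracle. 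I would then call the oracle once per group to obtain $G(p_{a_i : i \in S})$, and finally accumulate the weighted sum above.

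For the running time, enumerating and grouping the $k$ tuples of length $n{+}1$ costs $\mathrm{poly}(n,k)$, each posterior is assembled in $\mathrm{poly}(k)$ time, and we issue at most $k$ oracle queries; the total is therefore $\mathrm{poly}(n,k)$ plus $k$ oracle calls, as claimed. There is no real obstacle here beyond the clean counting argument: the whole proposition rides on the single observation that sparsity simultaneously caps the number of conditioning realizations \emph{and} the support size of every posterior at $k$. The only point I would take care to state explicitly is that the posteriors produced are sparse distributions on $E$ so that they can be written down and passed to the $G$-oracle efficiently, which is immediate since they inherit their support from the $k$ joint outcomes.
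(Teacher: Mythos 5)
Your proposal is correct and follows essentially the same route as the paper's proof: sparsity caps the outer sum at $k$ terms, the marginal and posterior probabilities are read off directly from the explicit prior, and one oracle call to $G$ per realization finishes the computation. The only addition is your explicit remark that each posterior is itself $k$-sparse and hence a compact oracle input, which the paper leaves implicit but which is a reasonable detail to spell out.
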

\begin{proof}
We can now assume that the prior is explicitly given as part of the input.
The expectation in the definition of $\V$ is now a sum that ranges over only at most $k$ terms. For each term, corresponding to some subset $\{a_i : i \in S\}$, we can efficiently look up $p(a_i : i \in S)$, as well as (for each $e$) $p(e,\{a_i : i \in S\})$, allowing us to compute the conditional probability $p(e \mid a_i : i \in S)$.
(Recall that our notation $p_{a_i : i \in S}$ is simply the vector of these probabilities, ranging over outcomes $e$ of $E$.)
These are all the ingredients we need to evaluate each term in the sum, which is $p(a_i : i \in S) \cdot G(p_{a_i : i \in S})$.
\end{proof}

\begin{proposition} \label{prop:compute-V-small-oracle}
Suppose we are given access to the following:
\begin{enumerate}
  \item A \emph{decisionmaking oracle} that, given a subset of signal realizations, returns an optimal decision $d^*$; and
  \item an oracle for evaluating the utility $u(d,e)$ of decision $d$ when nature's event $E=e$; and
  \item access to independent samples of $(e,a_1,\dots,a_n)$ from the prior distribution.
\end{enumerate}
Suppose that $u(d,e)$ is bounded.
Then we can approximate $\V(\bigjoin_{i\in S} A_i)$ to arbitrary (additive) accuracy with arbitrarily high probability in time polynomial in $n$ and $\sum_i \log |\text{Support}(A_i)|$ and $\log |\text{Support}(E)|$.
\end{proposition}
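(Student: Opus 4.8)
The plan is to observe that, unlike in Proposition \ref{prop:compute-V-small}, we need not estimate any conditional probability explicitly; instead we exploit the law of total expectation so that a single draw from the prior yields an unbiased sample of the quantity we wish to estimate. First I would write out the target. By definition,
\[ \V\left(\bigjoin_{i\in S} A_i\right) = \E_{\{a_i : i \in S\}}\left[ \max_d \E_e\left[u(d,e) \mid a_i : i \in S\right]\right] = \E_{\{a_i : i \in S\}}\left[ \E_e\left[u\!\left(d^*_{a_i : i \in S}, e\right) \mid a_i : i \in S\right]\right], \]
where $d^*_{a_i : i \in S}$ denotes any optimal decision for the posterior $p_{a_i : i \in S}$, as returned by the decisionmaking oracle. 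Since every optimal decision attains $G(p_{a_i : i \in S})$, it is immaterial which one the oracle returns. The crucial step is then to collapse the nested expectations via the tower property into a single expectation over the joint prior:
\[ \V\left(\bigjoin_{i\in S} A_i\right) = \E_{(e, a_1, \dots, a_n) \sim P}\left[ u\!\left(d^*_{a_i : i \in S}, e\right)\right]. \]

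Next I would define the random variable $Z = u(d^*_{a_i : i \in S}, e)$ under a draw $(e,a_1,\dots,a_n) \sim P$, so that $\E Z = \V(\bigjoin_{i\in S} A_i)$. The point is that $Z$ is directly sampleable with the given oracles: draw one sample from the prior, feed the realizations $\{a_i : i \in S\}$ to the decisionmaking oracle to obtain $d^*$, and feed $(d^*, e)$ to the utility oracle. No conditional probability is ever computed; the conditioning on $\{a_i : i \in S\}$ is absorbed into the joint sampling, which is exactly what sidesteps the estimation difficulty described before the proposition. Since $u$ is bounded, $Z$ takes values in an interval of some width $K$, so a standard Hoeffding bound gives that the empirical average of $m \geq \frac{K^2 \ln(2/\delta)}{2\epsilon^2}$ i.i.d.\ samples is within $\epsilon$ of $\V(\bigjoin_{i\in S} A_i)$ with probability at least $1-\delta$. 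Finally, each sample records the relevant signal realizations using $\sum_{i\in S} \log|\text{Support}(A_i)|$ bits and the outcome $e$ using $\log|\text{Support}(E)|$ bits, plus one call to each oracle, yielding the claimed running time.

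The one point requiring care, and the main conceptual content of the argument, is the tower-property collapse: it is what lets us replace the problematic inner expectation $\E_e[\,\cdot \mid a_i : i \in S\,]$ (which would otherwise force us to estimate the exponentially-small probabilities $p(e \mid a_i : i \in S)$) by a plain joint draw from the prior. I expect everything else, namely boundedness of $Z$, the Hoeffding application, and the bit-counting for the running time, to be routine and essentially identical to the proof of Proposition \ref{prop:compute-V-small}.
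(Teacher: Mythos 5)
Your proposal is correct and follows essentially the same approach as the paper's proof: rewrite $\V(\bigjoin_{i\in S} A_i)$ as a single expectation $\E_{(e,a_1,\dots,a_n)\sim P}\left[u(d^*(a_i : i \in S), e)\right]$ over the joint prior, sample it directly via the decisionmaking and utility oracles, and apply a Hoeffding bound. The paper states this collapse without elaboration; you merely make the tower-property justification explicit.
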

\begin{proof}
Again, given $S$, we wish to approximate $f(S)$, which is equal to
 \[ \V(\bigjoin_{i\in S} A_i) = \E_{e,a_1,\dots,a_n} u(d^*(a_i : i \in S), e) , \]
where $d^*(a_i : i \in S)$ is the optimal decision conditioned on observing signals $\{a_i : i \in S\}$.
Again, assuming that $u(d,e)$ lies in a bounded range of size $K$, the same Hoeffding bound applies: By drawing $m$ i.i.d. samples from the prior and then calling the oracles to obtain $d^*$ and $u$, we can approximate this expectation to arbitrary additive error with arbitrarily high probability, for a suitable (polynomial-sized) $m$.
\end{proof}
This range of results gives some evidence that in general, if the decision problem has some sort of succinct representation outside of the oracle model, then we may still hope to compute $\V$.
In fact, the problem we will construct for our negative result, Theorem \ref{thm:f-to-sigsel}, will fit the model of Proposition \ref{prop:compute-V-small-oracle}, where it is trivial to find and evaluate the optimal decision.

\subsubsection{Positive and negative results via reductions}
\paragraph{Positive results.}
As a corollary of the above reductions, we are able to reduce \sigsel/ to set function maximization, netting positive results especially in the case of substitutes.
\begin{theorem} \label{thm:alg-subs-pos}
\thmalgsubspos
\end{theorem}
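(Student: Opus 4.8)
The plan is to reduce \sigsel/ to the maximization of a monotone, nonnegative, submodular set function subject to the stated constraints, and then invoke known approximation algorithms. Define $f : 2^N \to \R$ by $f(S) = \V\left(\bigjoin_{i \in S} A_i\right)$. First I would check that $f$ has the two structural properties the algorithms require. Monotonicity is immediate from Fact \ref{fact:more-info-helps}, since $S \subseteq T$ gives $\bigjoin_{i \in S} A_i \preceq \bigjoin_{i \in T} A_i$ on the subsets lattice. For submodularity, recall that weak substitutes (Definition \ref{def:subs-comps}) mean exactly that $\V$ is submodular on the subsets signal lattice, where meet and join are set intersection and union. Instantiating the inequality of Definition \ref{def:sub-super-modular} with $A = A_T$, $B = A_{\{j\}}$ for $j \notin T$, and $A' = A_S$ for $S \subseteq T$ (so that $A \meet B = \bot \preceq A_S \preceq A_T$) yields $f(S \cup \{j\}) - f(S) \geq f(T \cup \{j\}) - f(T)$, the standard diminishing-marginal-returns characterization of a submodular set function. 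Replacing $f$ by $g(S) = f(S) - f(\emptyset)$ makes the objective normalized and nonnegative (by monotonicity) without changing the optimizer.

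Next I would supply a value oracle for $g$ (equivalently $f$). In the oracle model, Proposition \ref{prop:compute-V-small} approximates $f(S) = \V\left(\bigjoin_{i \in S} A_i\right)$ to additive accuracy $\epsilon$ with failure probability $\delta$ in time polynomial in $n$, $1/\epsilon$, and $\log(1/\delta)$; the succinct-input settings of Propositions \ref{prop:compute-V-small-sparse} and \ref{prop:compute-V-small-oracle} give the same guarantee (with the sparse case being exact). So in each admissible input model we have efficient, approximate evaluation access to a monotone, nonnegative, submodular objective.

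With this in hand, I would plug into the off-the-shelf $(1-1/e)$-approximation algorithms: the classical greedy algorithm for a cardinality constraint, the algorithm of \citet{sviridenko2004note} (see also \citet{krause2005note}) for a budget/knapsack constraint, and the continuous-greedy plus rounding algorithm of \citet{calinescu2011maximizing} for a matroid constraint. Each runs in polynomial time and issues only polynomially many value queries.

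The main obstacle, and the origin of the $o(1)$ slack, is that we possess only an \emph{approximate} value oracle. I would control this by setting $\epsilon$ inverse-polynomially small and $\delta$ inverse-polynomially small relative to the (polynomial) query count of the chosen algorithm, then union-bounding so that every query is accurate simultaneously with high probability. Propagating the per-query additive error through the greedy selection --- or, in the matroid case, through the sampling-based estimation of the multilinear extension and its gradient inside continuous greedy --- degrades the ratio by only an additive $o(1)$; this robustness of greedy and continuous greedy to noisy evaluation is by now standard in the submodular-maximization literature. Assembling the three pieces gives a polynomial-time $1 - 1/e - o(1)$ approximation for \sigsel/ in each stated input model and for each stated constraint family.
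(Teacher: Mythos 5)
Your proposal is correct and follows essentially the same route as the paper's proof: define $f(S) = \V\bigl(\bigjoin_{i\in S} A_i\bigr)$, observe that weak substitutes make $f$ monotone submodular, invoke the standard approximation algorithms for cardinality, knapsack, and matroid constraints, and handle the approximate value oracle by noting these algorithms tolerate high-probability, high-accuracy evaluations. Your write-up is somewhat more explicit about the submodularity instantiation and the error-propagation bookkeeping, but there is no substantive difference in approach.
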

\begin{proof}
Given an instance of \sigsel/ with signals $A_1,\dots,A_n$, we construct a monotone increasing set function $f: 2^{\{1,\dots,n\}} \to \mathbb{R}$ via $f(S) = \V(\bigjoin_{i\in S} A_i)$.
If signals are weak substitutes, then $\V$ is submodular on the signal lattice, and hence $f$ is submodular (as well as monotone).
We can therefore apply known algorithms for submodular maximization, in particular, \citet{nemhauser1978analysis, sviridenko2004note, calinescu2011maximizing}.
Now, there is a subtlety: Under some of the models we propose, particularly the oracle model, we do not compute $f$ exactly but instead can only guarantee arbitrarily high accuracy with arbitrarily high probability.
However, it is well known (\emph{e.g.} \citet{kempe2003maximizing}) and proven (\emph{e.g.} \citet{krause2005note}), that these algorithms still give guarantees when we have a high-probability, high-accuracy guarantee on evaluations of $f$.
The key point is that we can still evaluate, with arbitrary accuracy, the gradient or marginal contributions of each element\footnote{Recent work (in preparation) considers the question of how much accuracy in evaluating $f$ is required, showing negative results when (roughly) accuracy is worse than $\frac{1}{\sqrt{n}}$; but under the oracle model we can evaluate $f$ with error an arbitrary inverse polynomial with only an exponentially-small probability of error (simply via a Hoeffding bound), in which regime it is well-known that this problem does not arise.}.
\end{proof}

\paragraph{Discussion of approximate oracles.}
One would like a robustness guarantee of the following sort: Even if we do not have an oracle that exactly optimizes the decision problem, suppose we do have an oracle returning, say, a decision whose expected utility is within a constant factor of optimal.
Then give a good algorithm for \sigsel/ in the substitutes case, with an appropriately-decreased guarantee.

Unfortunately, it seems that this kind of result is unlikely without deeper investigation and further work.
To see the challenge, imagine that an adversary is allowed to design the oracle subject to a constraint of some approximation ratio.
Then our problem essentially reduces to submodular maximization with noisy or approximate value oracles, where the noise may be adversarially chosen subject to this approximation constraint.
Unfortunately, recent work on these kinds of problems have shown them to be difficult in general~\citep{singer2015information,balkanski2015limitations,hassidim2016submodular}.
This seems like a very difficult barrier, but perhaps future work can leverage the structure of decision problems in some way to make progress in this direction.

\paragraph{Negative results.}
We show that in general, \sigsel/ is as difficult as optimizing general monotone set functions subject to constraints.
In fact, this holds even for an easy special case of \sigsel/ where all signals are independent uniformly-random bits and the decision problem is trivial to optimize (the solution is essentially to list all of the outcomes of signals you have observed).

To do so, we give a reduction in the opposite direction: Given $f$, we construct a decision problem and prior distribution such that $\V(\bigjoin_{i\in S} A_i) = f(S)$.
Hence, any algorithm for \sigsel/ gives an algorithm for optimizing $f$.
In terms of input representation, while $f$ may require exponential space to represent explicitly, our reduction is essentially as useful as one could hope for in this respect, creating a trivial wrapper around a value oracle for $f$.

It seems that any reasonable algorithm one might propose for optimally selecting sets of signals should be able to handle such a tractable input.
Hence, this is a strong negative result that, in general, optimization over signal sets is just as hard as over item sets.
\begin{theorem} \label{thm:f-to-sigsel}
\thmftosigsel
\end{theorem}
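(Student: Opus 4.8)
The plan is to give an explicit construction realizing an arbitrary monotone increasing $f$ as a value function on the subsets lattice, using the simplest possible information structure. First I would take the base signals $A_1,\dots,A_n$ to be independent uniformly random bits and set the event $E = (A_1,\dots,A_n) \in \{0,1\}^n$. Then observing a subset $S$ of signals and their realizations $\{a_i : i \in S\}$ induces a posterior $p_{a_i:i\in S}$ that is simply the uniform distribution over the subcube of $\{0,1\}^n$ consistent with those bits; in particular it depends only on $S$ and the observed values, is trivial to represent (store $S$ and the observed bits), and settles the ``furthermore'' claim about posteriors immediately.

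The core of the argument is to design the decision problem so that the optimal expected utility given such a posterior is exactly $f(S)$, independent of the realized values. I would let a decision be a ``claim'' $d = (T, b)$ consisting of a subset $T \subseteq N$ together with asserted values $b = (b_i)_{i\in T}$, and define the utility to pay $f(T)$ if the claim is correct (i.e. $e_i = b_i$ for all $i \in T$) and $-M$ otherwise, for a large constant $M$. Given that the agent has observed exactly $S$, I would argue that her optimal claim is to report $T = S$ truthfully, yielding value $f(S)$: any claim with $T \subseteq S$ can be made correct with certainty but by monotonicity yields $f(T) \le f(S)$, while any claim asserting a value for some unobserved coordinate is correct only with probability $2^{-|T\setminus S|}$ (since the unobserved bits are independent, uniform, and unknown to the agent), so for $M$ chosen larger than every $f$-value its expected utility is negative and hence strictly below $f(S)$. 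Thus $\max_{d}\E\!\left[u(d,e)\mid \{a_i:i\in S\}\right] = f(S)$ for every realization, and averaging over realizations gives $\V(\bigjoin_{i\in S}A_i) = f(S)$ as required. Since $f$ is a fixed finite function, a single finite $M$ (e.g. strictly larger than $2\max_T |f(T)|$) works for all $S$ and all claims simultaneously, and no sign assumption on $f$ is needed.

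The ``furthermore'' claim then follows from the same construction: the relevant posterior is the subcube-uniform distribution described above, and the expected optimal utility $G(p_{a_i:i\in S})$ equals $f(S)$ by the analysis just given, so it is computed by a single query to $f$. This is consistent with Corollary \ref{cor:decision-captured-by-G}, which guarantees that the associated $G$ is convex without our having to exhibit it. Moreover the construction fits the succinct input model of Proposition \ref{prop:compute-V-small-oracle}: the decisionmaking oracle simply returns the claim echoing the observed bits, utility evaluation is trivial, and samples from the prior are products of independent fair coins.

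The step I expect to be the main obstacle --- or at least the only place requiring genuine care --- is ruling out ``over-claiming'', i.e. showing that asserting values for coordinates one has not observed can never beat honest reporting. This is exactly where the two ingredients combine: the independence and uniformity of the bits pin the success probability of a guess at $2^{-|T\setminus S|}$, and the penalty $-M$ must be calibrated against the magnitudes of $f$ so that the expected loss from a wrong guess dominates any possible gain. Monotonicity of $f$ handles the complementary direction, that under-claiming is never strictly better, which together completes the argument.
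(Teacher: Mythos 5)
Your construction is correct, but it takes a genuinely different route from the paper's. The paper keeps the same information structure (independent uniform bits with $E$ the full vector) but realizes $f$ through a proper scoring rule for the \emph{mean} of $E$: it recursively builds a convex $F:[0,1]^n\to\R$ that equals $f(S)$ on the interior of the face of the hypercube where the coordinates in $S$ are fixed, verifies convexity of $F$ from monotonicity of $f$, and then invokes the scoring-rule characterization so that the optimal expected score at the posterior mean $\mu_{a_i:i\in S}$ is $F(\mu_{a_i:i\in S})=f(S)$. You instead use a discrete ``claim-and-verify'' decision problem: report a set $T$ and asserted bits, earn $f(T)$ if correct and $-M$ otherwise, with monotonicity killing under-claiming and independence/uniformity plus the penalty killing over-claiming. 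Your approach is more elementary --- it needs no convexity verification and no scoring-rule machinery, convexity of the induced $G$ coming for free from Corollary \ref{cor:decision-captured-by-G} --- and it serves the downstream hardness corollaries equally well, since $\V(\bigjoin_{i\in S}A_i)=f(S)$ exactly and the decision oracle of Proposition \ref{prop:compute-V-small-oracle} is trivial to implement. What the paper's route buys is an explicit, globally computable $G$ (namely $F$ composed with the mean map, evaluable at \emph{any} report by one call to $f$) and a continuous report space compatible with its smoothing remarks; your $G$ is only cheaply evaluable at the subcube-uniform posteriors, which suffices for the stated ``furthermore'' but not for an arbitrary-belief $G$-oracle. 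One small quantitative slip: $M>2\max_T|f(T)|$ is not always enough, since for an over-claim with $|T\setminus S|=1$ you need $M>f(T)-2f(S)$, which can be as large as $3\max_T|f(T)|$ when $f$ takes values of both signs; any $M$ exceeding that bound works, so this is a calibration detail rather than a gap.
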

\begin{proof}
Each signal $A_i$ will be a uniform independent bit, and the event $E$ of nature will consist of the vector of realized $a_i$s (a binary string of length $n$).
Intuitively, the idea is that having observed $A_i$, regardless of whether its realization is $0$ or $1$, corresponds to having item $i$ in the set $S$, while not having observed $A_i$ (hence having a uniform belief over $A_i$) corresponds to $i \not\in S$.

The decision problem will look like a scoring rule, but it will be for predicting the \emph{mean} of $E$ rather than predicting a probability distribution over $E$.
This is good news in terms of the representation size: predicting the mean of $E$ requires only reporting a vector in $[0,1]^n$, while a general probability distribution over all outcomes of $E$ has support size up to $2^n$.

Formally, it turns out that the scoring rule characterization applies equally well to constructing proper rules for predicting the mean of a random variable such as $E$.
Specifically, given any convex function $F: \mathbb{R}^n \to \mathbb{R}$, one can construct a scoring rule $R: \mathbb{R}^n \times \mathbb{R}^n \to \mathbb{R}$.
Using the notation $R(r;q) = \E_{e\sim q} R(r,e)$, we have the key scoring rule property that $R(r;q)$ is maximized at $r = \E_{e\sim q} e$, where it equals $F(\E_{e\sim q} e)$.
A quick proof: Given $F$, define $R(r,e) = F(r) + \langle F'(r), e-r \rangle$ where $F'(r)$ is a subgradient at $r$.
We have $R(\E_{e\sim q} e; q) = F(\E_{e\sim q} e)$ as desired.
Now, note that $D_F(\E_{e\sim q} e;r) = F(\E_{e\sim q}) - R(r;q)$, where $D_F$ is the Bregman divergence of $F$; since Bregman divergences are nonnegative, this proves that $R(r;q)$ is maximized at $R(r;q) = R(\E_{e\sim q}; q) = F(\E_{e\sim q})$. 

Hence, the roadmap is as follows.
\begin{enumerate}
 \item Construct a function $F: [0,1]^n \to \mathbb{R}$.
 \item Verify that $F$ is convex.
       This implies that there is a decision problem (namely, predicting the mean of $E$) where the expected utility for predicting $\mu$ when one's true expectation is $\mu$ equals $F(\mu)$.
 \item Verify that, for any subset $S$ of $\{1,\dots,n\}$ and for any set of realizations $\{a_i : i\in S\}$, we have $F(\E[e \mid a_i : i \in S]) = f(S)$.
 \item Note this implies that, for any $S$, we have $\V(\bigjoin_{i \in S} A_i) = f(S)$.
 \item Check that, given a value oracle for $f$, we can efficiently compute any quantities of interest (the posterior distribution on $E$, the posterior expectation of $E$, $F(\E[ e \mid a_i : i \in S])$, $\V(\bigjoin_{i \in S} A_i)$).
\end{enumerate}

\paragraph{(1)}
The construction of $F$ is recursive on the dimension $n$.
It is ugly, being discontinuous at its boundary.
However, drawing some pictures should convince the reader that $F$ can be ``smoothed'' to a more reasonable, continuous convex function.
For a base case of $n=1$, on $[0,1]$ we let $F(r) = f(\emptyset)$ on the interior where $r \in (0,1)$, and $F(0) = F(1) = f(\{1\})$.
(That is, $f$ evaluated at the set consisting of the item.)
Note that, because $f$ is monotone increasing, \emph{i.e.} $f(\{1\}) \geq f(\emptyset)$, $F$ is convex.
The discontinuity implies that, for the associated proper scoring rule for the mean $R$, we must have $R(\mu,e) = -\infty$ whenever $\mu$ lies at an endpoint and $e$ is in the interior.
But again, any smoothing of $F$ to be continuous will remove this property.

On $[0,1]^n$, we let $F(r) = f(\emptyset)$ for $r$ in the interior of the hypercube.
That is, $F$ is constant on its interior.
Furthermore, we have $F(\E_{e\sim p} e) = f(\emptyset)$, where $p$ is the prior, hence $\V(\bot) = f(\emptyset)$.

Now we define $F$ on its boundary.
Consider any face of the $n$-dimensional hypercube.
Each face corresponds to a particular setting of some $A_i$, either to $0$ or $1$, by the coordinate whose value is constant on that face.
For instance, $A_i=1$ corresponds to the face consisting of the set of $r \in [0,1]^n$ where $r$'s $i$th coordinate equals one.
On both of the faces corresponding to $A_i$, $F(r)$ is defined to be $F(r) = F_{\{i\}}(r_{-i})$, where $r_{-i}$ is $r$ with the $i$th coordinate removed, and the function $F_{\{i\}}$ is defined recursively as follows.
Consider the set function $f_{\{i\}}: 2^{\{1,\dots,n\}\setminus\{i\}} \to \mathbb{R}$ with $f_{\{i\}}(S) = f(S \cup \{i\})$.
Then let $F_{\{i\}}$ be the result of our construction applied to $f_{\{i\}}$.
This completes the definition of $F$.

Verbally, for each face of the hypercube, we have fixed some $i$ to be in the set $S$ passed to $f$, and considered the resulting submodular function $f_{\{i\}}$ on the remainder of $\{1,\dots,n\}$.
The value of $F$ on the interior of that face will be $f(\{i\})$, by the recursive construction.
To picture $F$ and convince ourselves that it is well-defined, consider the intersection of the faces corresponding to, say, $A_i = 1$ and $A_j = 0$.
This is a lower-dimensional face consisting of all points on the hypercube whose $i$th bit equals $1$ and $j$th bit equals $0$.
On the interior of this face (\emph{i.e.} no other bits are equal to $0$ or $1$), $F$ has value $f(\{i,j\})$.
And so on all the way ``out'' to the corners $r$ of the hypercube, where $r \in \{0,1\}^n$; at all of these, $F(r) = f(\{1,\dots,n\})$.

\paragraph{(2)}
We prove $F$ is convex on the hypercube by induction on $n$, with the base case $n=1$ already observed above.
For the inductive step, note again the key point: by monotonicity of $f$, if $r$ is on the boundary of the hypercube and $s$ is in the interior then $F(r) \geq F(s)$.
Consider any two points $r,s \in [0,1]^n$, and break into cases.
If both points lie in the interior, then because $F$ is constant there, $F(\alpha r + (1-\alpha)s) = \alpha F(r) + (1-\alpha)F(s)$ for any $0 \leq \alpha \leq 1$.
If one point lies in the interior and one on the boundary, then any convex combination of the two lies in the interior.
$F$ is constant in the interior and weakly larger on the boundary, so the convexity inequality is satisfied.
If the points lie in different faces, then again any convex combination lies in the interior.
Finally, if the points lie in the same face, then $F$ coincides with some $F_{\{i\}}$ on $r$ and $s$, and $F_{\{i\}}$ is convex by inductive hypothesis.

\paragraph{(3)}
We now verify that $F$, when applied to the expected value of $E$ given the realizations of signals corresponding to $S$, is equal to $f(S)$.
Consider any set of realizations $\{a_i : i \in S\}$ and let $\mu = \E[e \mid a_i : i \in S]$.
If $S = \emptyset$, then by construction $\mu = (0.5,\dots,0.5)$ and $F(\mu) = f(\emptyset)$.
Otherwise, $\mu$ is the vector where each entry $i$ is equal to $a_i$ if $i \in S$ and $0.5$ otherwise; this follows from the i.i.d. distribution of the $A_i$.
Hence, $\mu$ lies in the interior of the (low-dimensional) face of the hypercube corresponding to the realizations $\{a_i : i \in S\}$, hence $F(\mu) = F_{S}(\mu_{-S}) = f_S(\emptyset) = f(S)$.
Here the notation $f_S$ is the function obtained from $f$ by fixing $S$; $F_S$ is the corresponding recursively constructed function on $[0,1]^{n-|S|}$; and $\mu_{-S}$ is $\mu$ obtained by removing all coordinates.

\paragraph{(4)}
Use the notation $\mu_{a_i : i \in S} = \E_e \left[ e \mid a_i : i \in S\right]$.
Since step (3) holds for all realizations of a given set of signals, in particular (where $R$ is the scoring rule corresponding to $F$):
\begin{align*}
 \V\left(\bigjoin_{i \in S} A_i\right)
  &= \E_{a_i : i \in S} \max_{d \in [0,1]^n} \E_e \left[ R(d,e) \mid a_i : i \in S \right]  \\
  &= \E_{a_i : i \in S} \max_{d \in [0,1]^n} R(d ; p_{a_i : i \in S})  & \text{definition of notation $R(d;q)$}  \\
  &= \E_{a_i : i \in S} R(\mu_{a_i : i \in S} ; p_{a_i : i \in S})     & \text{properness of $R$}  \\
  &= \E_{a_i : i \in S} F(\mu_{a_i : i \in S})                         & \text{construction of $F$ and $R$}  \\
  &= \E_{a_i : i \in S} f(S)                                           & \text{step (3) of proof}  \\
  &= f(S) .
\end{align*}

\paragraph{(5)}
Given any set of signals $S$, one can immediately compute $\V(S) = f(S)$ by a call to the value oracle for $f$.
Given their realizations $\{a_i : i \in S\}$, the posterior distribution is simply uniform on those $A_j$ with $j \not\in S$ (and of course has each $A_i = a_i$ with probability one for $i \in S$).
This induces the posterior distribution over $E$ (which can thus be concisely represented, even though there are $2^n$ possible outcomes in general), as well as the posterior expectation of $E$.
Evaluating $F(r)$ at an arbitrary point $r$ can be done quickly: Let $S$ be the subset of coordinates on which $r$ is equal to either $0$ or $1$; then $F(r) = f(S)$, as $r$ lies on the interior of a face corresponding to $S$.
This requires just a single call to the oracle for $f$.
Evaluating $R(d,e)$ can thus be done in time polynomial in $n$ as well; the only additional step required is picking a subgradient of $F$ at each point $d \in [0,1]^n$, and there is only one choice at each point (as noted above, unless $F$ is smoothed, this construction does require $R(\mu,e) = -\infty$ if $\mu$ lies on a face and $e$ in the interior).
Finally, the optimal decision for a given expectation of $e$ is just that expectation.
\end{proof}

\paragraph{Explicit negative results.}
First, our reduction has implications even for the substitutes case.
Monotone submodular maximization to a better factor than $1-1/e$, even under a simple cardinality constraint, requires an exponential number of value-oracle calls~\citep{nemhauser1978best} and, even given a concise explicit input, is NP-hard~\citep{feige1998threshold}.
Because our reduction preserves marginal values, a submodular set function reduces via Theorem \ref{thm:f-to-sigsel} to weak substitutes.
This implies that our $(1-1/e)$ approximation cannot be improved without a stronger assumption than substitutability (or \emph{e.g.} a polynomial-time algorithm for SAT).
\begin{corollary} \label{cor:subs-better-approx-hard}
Even when signals are weak substitutes with a cardinality constraint, achieving a strictly better approximation than $1-1/e$ to \sigsel/ requires an exponential number of oracle queries.
\end{corollary}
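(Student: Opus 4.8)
The plan is to combine the value-preserving reduction of Theorem~\ref{thm:f-to-sigsel} with the classical query lower bound for monotone submodular maximization under a cardinality constraint \citep{nemhauser1978best}. That bound provides, for each $n$, a family of monotone submodular functions $f:2^{\{1,\dots,n\}}\to\R$ together with a cardinality bound $k$ such that no algorithm may, using subexponentially many calls to the value oracle for $f$, return a feasible $S$ with $f(S)\ge(1-1/e+\delta)\max_{|S'|\le k}f(S')$ for a fixed $\delta>0$ on every member of the family. The goal is to carry this hardness verbatim onto \sigsel/ restricted to weak substitutes with a cardinality constraint.

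First I would take an arbitrary such $f$ and run the construction of Theorem~\ref{thm:f-to-sigsel}, producing independent uniform bits $A_1,\dots,A_n$, the event $E=(A_1,\dots,A_n)$, and a decision problem with $\V(\bigjoin_{i\in S}A_i)=f(S)$ for every $S$, retaining the same feasibility family $\F=\{S:|S|\le k\}$. Two things then need verifying. First, since $f$ is monotone and submodular, $\V$ is submodular on the subsets signal lattice, so the signals are \emph{weak substitutes} in the sense of Definition~\ref{def:subs-comps} and the reduced instance genuinely lies in the restricted problem class. Second, by the ``furthermore'' clause of Theorem~\ref{thm:f-to-sigsel}, this instance fits the oracle model of Proposition~\ref{prop:compute-V-small-oracle}: the posterior and its mean are trivial to represent, the optimal decision is just the posterior mean, and evaluating the utility reduces to one evaluation of $F$, which costs a single value-oracle call to $f$. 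Consequently the reduction is a trivial wrapper around the $f$-oracle, and any single evaluation of $\V(\bigjoin_{i\in S}A_i)$ on the constructed instance can be answered with $O(1)$ queries to $f$.

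Finally I would argue by contradiction: suppose a \sigsel/ algorithm $\mathcal{A}$ accepting the oracle model and a cardinality constraint returns a feasible $S$ with $\V(\bigjoin_{i\in S}A_i)\ge(1-1/e+\delta)\,\mathrm{OPT}$ using subexponentially many oracle queries, even under the weak-substitutes promise. Running $\mathcal{A}$ on the reduced instance and answering each of its queries through the $f$-oracle, the exact identity $\V(\bigjoin_{i\in S}A_i)=f(S)$ turns $\mathcal{A}$ into a submodular-maximization algorithm beating $1-1/e$ with subexponentially many value-oracle queries, contradicting \citet{nemhauser1978best}. One convenient feature is that no sampling error intrudes: step~(3) of Theorem~\ref{thm:f-to-sigsel} forces $F(\mu_{a_i:i\in S})=f(S)$ for \emph{every} realization, so the value random variable is deterministic and the oracle-model estimate equals $f(S)$ exactly. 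The main point to get right --- and the only real obstacle --- is this query-efficiency of the simulation: the argument hinges on each of $\mathcal{A}$'s oracle accesses costing only polynomially many $f$-queries, which is exactly the tractable representation guaranteed by the second half of Theorem~\ref{thm:f-to-sigsel}.
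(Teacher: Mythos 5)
Your proposal is correct and follows essentially the same route as the paper: the paper also derives this corollary by composing the marginal-value-preserving reduction of Theorem~\ref{thm:f-to-sigsel} with the $1-1/e$ query lower bound of \citet{nemhauser1978best}, observing that a submodular $f$ yields weak substitutes and that the constructed instance is a trivial wrapper around the value oracle. Your write-up merely makes explicit the two verification steps (weak substitutability of the reduced instance, and query-efficiency of the oracle simulation) that the paper leaves implicit.
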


Without the substitutes/submodularity condition, it is well known that maximizing general monotone set functions is difficult given only a value oracle, although such negative results are not always explicit in the literature.
For instance and for concreteness, one can even restrict to monotone supermodular functions and the simple problem of maximizing $f$ subject to a cardinality constraint (select any set $S$ of at most $k$ elements, for some given $0 \leq k \leq n$).
In this case, there does not seem to be a negative result in the literature despite this hardness being well-known (see \citep{usul2016maximizing}), so we give an explicit one here to illustrate the challenge.
\begin{proposition} \label{prop:max-monotone-supermodular}
For any algorithm for monotone supermodular maximization subject to a cardinality constraint, if the algorithm makes a subexponential number of value queries, then it cannot have a nonzero approximation ratio.
\end{proposition}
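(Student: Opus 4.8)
The plan is to exhibit a distribution over monotone supermodular instances on which every value-query algorithm with few queries fails, and then pass to arbitrary (randomized) algorithms by Yao's principle. Fix the ground set $N=\{1,\dots,n\}$, the cardinality bound $k=\lfloor n/2\rfloor$, and a constant $c>0$. Define a \emph{baseline} that depends only on cardinality, $f_0(S)=\phi(|S|)$, where $\phi(j)=0$ for $j\le k$ and $\phi(j)=c\,(j-k)$ for $j>k$. The increments of $\phi$ are nondecreasing, so $\phi$ is convex and $f_0$ is monotone and supermodular. Now draw a hidden set $H$ uniformly from the $\binom{n}{k}$ subsets of size $k$ and plant a bump at $H$: set $f_H(S)=\max\!\bigl(f_0(S),\,c\cdot\I[H\subseteq S]\bigr)$. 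Since $\phi(j)\ge c$ for every $j>k$, the bump is absorbed by the baseline on every strict superset of $H$, and one checks that $f_H$ coincides with $f_0$ on \emph{every} set except $S=H$ itself, where $f_H(H)=c$ while $f_0(H)=0$.

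First I would verify that $f_H$ is still monotone supermodular. Monotonicity is immediate, and supermodularity is the crux of the construction: a single upward bump generically destroys increasing differences, and it survives here only because the cardinality baseline has exactly enough slack at level $k$. Checking $f_H(A\cup B)+f_H(A\cap B)\ge f_H(A)+f_H(B)$, the only nontrivial case is when one of the four sets equals $H$; taking $A=H$ with $B$ incomparable to $H$, the inequality reduces to $\phi(|H\cup B|)-\phi(|B|)\ge c$, which holds because $B\not\subseteq H$ forces $|H\cup B|\ge |B|+1>k$ and hence a difference of at least $c$. Under the constraint $|S|\le k$ every feasible set has baseline value $0$, so on instance $f_H$ we have $\V(\bigjoin_{i\in S}A_i)$-style optimum $\mathrm{OPT}=f_H(H)=c$, attained \emph{only} at $S=H$; any other feasible output has value $0$.

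The key observation is then that $f_H$ reveals nothing about $H$ to a value-query algorithm unless it literally queries the set $H$: every query $S\ne H$ returns the baseline value $\phi(|S|)$, independent of $H$. So fix any deterministic algorithm making $q$ queries and run it on the baseline $f_0$ (itself a legitimate instance); this produces a fixed query set $Q_0$ of size at most $q$ and a fixed output $S^\star$. On $f_H$ the execution is identical to the $f_0$ execution unless some query equals $H$, which occurs iff $H\in Q_0$. Hence, over the random $H$, the algorithm either queries $H$ (probability at most $q/\binom{n}{k}$) or emits the fixed set $S^\star$, which has nonzero value only when $S^\star=H$ (probability at most $1/\binom{n}{k}$). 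In either case its output is nonzero with probability at most $(q+1)/\binom{n}{k}$, so its expected value is at most $(q+1)\,c/\binom{n}{k}=(q+1)\,\mathrm{OPT}/\binom{n}{k}$.

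Finally I would invoke Yao's principle to extend to randomized algorithms and conclude. Since $\binom{n}{\lfloor n/2\rfloor}=2^{\Theta(n)}$ is exponential, a subexponential query bound $q=2^{o(n)}$ drives the expected approximation ratio on the planted distribution to $0$; equivalently, for any fixed $\rho>0$ and $n$ large there is a specific instance $f_H$ on which the algorithm's output has value below $\rho\cdot\mathrm{OPT}$, so no nonzero approximation ratio can be guaranteed. The main obstacle, and the only place genuine care is required, is the supermodularity check in the second paragraph: the whole construction is engineered so that the planted high-value feasible set is simultaneously (i) invisible to all queries except an exact hit on $H$ and (ii) consistent with increasing marginal returns, which is precisely why the baseline must climb with slope $c$ immediately above cardinality $k$.
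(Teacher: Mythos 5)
Your proposal is correct and follows essentially the same route as the paper: a cardinality-only supermodular baseline that is zero up to level $k$ and climbs linearly above it, plus a single planted high-value set of size $k$ that is invisible to value queries except on an exact hit, so subexponentially many queries find only zero-value feasible sets. The paper's version uses slope $1$ with bump $0.5$ and argues supermodularity via increasing marginal contributions rather than the lattice inequality, but the construction and the counting argument are the same (your write-up is somewhat more careful, adding the Yao's-principle step for randomized algorithms).
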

\begin{proof}
We construct a simple family of supermodular functions.
Let $k$ be the cardinality constraint, \emph{i.e.} maximum cardinality of any feasible set.
Let $f(S) = 0$ if $|S| \leq k$ and otherwise let $f(S) = |S| - k$.
Pick one special set $S^*$ of size $k$, uniformly at random from such sets, and let $f(S^*) = 0.5$.
Hence the optimal solution to the problem is to pick $S^*$ with solution value $0.5$.

This function is supermodular: Given any element $i$, the marginal contribution of $i$ to $S$ is $0$ if $|S| < k-1$, then either $0.5$ or $0$ for $|S|=k-1$, then either $0.5$ or $1$ for $|S|=k$, then $1$ for $|S| > k$.
Because this marginal contribution is increasing, $f$ is supermodular.
Now any algorithm making subexponentially (in $k$) many queries cannot, except with vanishing probability, guess which set of size $k$ is $S^*$, so it will with probability tending to $1$ select some other set of size $\leq k$, which has solution value $0$.
\end{proof}

\begin{corollary} \label{cor:comps-alg-hard}
\corcomplementsalghard
\end{corollary}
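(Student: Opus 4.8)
The plan is to compose the reduction of Theorem~\ref{thm:f-to-sigsel} with the hard instance family of Proposition~\ref{prop:max-monotone-supermodular}. Starting from the family of monotone supermodular functions $f$ used in that proposition --- where $f$ vanishes on all sets of size at most $k$ except one randomly-planted set $S^*$ of size $k$ on which $f(S^*) = 0.5$, and grows linearly above size $k$ --- I would apply Theorem~\ref{thm:f-to-sigsel} to obtain, for each such $f$, a decision problem and signals $A_1,\dots,A_n$ (independent uniform bits) with $\V(\bigjoin_{i\in S} A_i) = f(S)$ for every $S$. The corollary then follows by arguing that any \sigsel/ algorithm solving these instances would also solve the underlying supermodular maximization, which is impossible with subexponentially many queries.

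First I would verify that the constructed instances are weak complements. Since the reduction matches $\V$ to $f$ exactly on the subsets lattice, and $f$ is supermodular there, $\V$ is supermodular on the subsets lattice, which is precisely the definition of weak complements (Definition~\ref{def:subs-comps}). Thus the constructed signals satisfy the hypothesis of the corollary, so the negative result holds even under the weak-complements assumption.

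Next I would confirm that these instances genuinely fit the oracle model and the other input models discussed. By the ``furthermore'' clause of Theorem~\ref{thm:f-to-sigsel}, each posterior is trivial to represent and the optimal decision is just the posterior mean, so the instance fits the model of Proposition~\ref{prop:compute-V-small-oracle}: a decisionmaking oracle, a utility oracle, and prior samples can each be implemented using $O(1)$ value-oracle queries to $f$ per call. Consequently, any \sigsel/ algorithm in these models, run on the constructed instance, incurs at most a constant-factor blowup in the number of queries made to the underlying value oracle for $f$.

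The final step is the contradiction. An algorithm for \sigsel/ under a cardinality constraint $k$, handed this instance, outputs some feasible $S$ with $|S| \leq k$ and objective value $\V(\bigjoin_{i\in S} A_i) = f(S)$; since every feasible set other than $S^*$ has value $0$ while $S^*$ has value $0.5$, a nonzero approximation ratio forces $S = S^*$ with probability bounded away from zero. But because each oracle access reduces to $O(1)$ value queries of $f$, a subexponential-query \sigsel/ algorithm yields a subexponential-query algorithm maximizing the supermodular $f$ subject to the cardinality constraint, contradicting Proposition~\ref{prop:max-monotone-supermodular}. The main obstacle is precisely the bookkeeping in the third step --- ensuring that the accesses in each claimed input model really do translate into a bounded number of value-oracle queries to $f$, so that the query lower bound transfers cleanly; this rests entirely on the tractability guarantees in the second half of Theorem~\ref{thm:f-to-sigsel} and is conceptually routine but should be checked for each model asserted.
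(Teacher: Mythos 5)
Your proposal is correct and follows exactly the route the paper intends: compose the reduction of Theorem~\ref{thm:f-to-sigsel} with the hard monotone supermodular family of Proposition~\ref{prop:max-monotone-supermodular}, note that supermodularity of $f$ transfers to supermodularity of $\V$ on the subsets lattice (hence weak complements), and use the ``furthermore'' clause to see that the constructed instances fit the oracle model with only constant overhead per query. The paper leaves this composition implicit, and your write-up fills it in faithfully, including the one point that actually needs checking (that oracle accesses in each input model translate to $O(1)$ value-oracle queries to $f$).
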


\subsubsection{Adaptive \sigsel/}
We now define an adaptive version of the problem and show that substitutability implies positive results here as well.
This problem has already been studied in very similar settings with a very similar approach by \citet{asadpour2008stochastic,golovin2011adaptive}.
There is a significant difference in that these works aimed to maximize a set function over items in tandem with observations about those items.
Our model is a significant generalization in that it considers arbitrary kinds of observations and an arbitrary decision problem.
However, the goal of this section is not to claim originality or generality of solution, but only to demonstrate that this problem can also be viewed through the lens of substitutability.
\begin{definition}
In Adaptive \sigsel/, one is given a decision problem $u$ and information structure on $n$ signals $A_1,\dots,A_n$; and also a family $\mathcal{F}$ of feasible subsets of signals $S \subseteq \{1,\dots,n\}$.
An algorithm for this problem is a policy that first selects a signal $i_1$ to inspect; then observes the realization of $A_{i_1}$; then depending on that realization, selects a signal $i_2$ to inspect, observing the realization of $A_{i_2}$, and so on.
The algorithm must guarantee that the total set $S = \{i_1,i_2,\dots\}$ of signals selected is in $\mathcal{F}$.
After ceasing all inspections, the algorithm outputs some decision $d$.
The goal is to maximize $\E\left[ u(d,e) \right]$ over the distribution of signals and $E$ as well as any randomness in the algorithm.
\end{definition}
We note that the definition of \sigsel/ abstracted out the process of deciding an action $d$ given the realizations of the signals, as we assumed that it was known how to optimize the utility function.
Here, it seems cleaner to integrate the choice of the decision with the algorithm; but this is not the fundamental difference between the two settings.
The fundamental difference is adaptivity: In Adaptive \sigsel/, the algorithm may observe the outcome of the first selected signal before deciding which signal to select next, and so on.

Here we need a somewhat stronger substitutes condition.
\begin{definition} \label{def:ptwise-subs}
A set of signals on a subsets lattice $\Lat$ are \emph{pointwise weak substitutes} if, for each $A' \preceq A$ and $B$ in $\Lat$, and for each of the outcomes $a' \in A',a \in A$ in the support of the prior, letting $Q',Q$ be the posterior distributions conditioned on $A'=a'$ and $A=a$ respectively,
 \[ \V^{u,Q'}(B) - \V^{u,Q'}(\bot) \geq \V^{u,Q}(B) - \V^{u,Q}(\bot) . \]
In other words, the left side is the marginal value of $B$ given the observation $a'$, and the right is its marginal value given observation $a$.
\end{definition}
Verbally, signals are pointwise substitutes if having observed more information never increases the expected marginal value of a signal.
The key difference is that for ``regular'' substitutes, the condition can be written as follows:
 \[ \E_{a'} \left[ \V^{u,Q'}(B) - \V^{u,Q'}(\bot) \right] \geq \E_a \left[ \V^{u,Q}(B) - \V^{u,Q}(\bot) \right]. \]
In other words, signals are substitutes if \emph{expecting} to observe more information never increases the expected marginal value; they are pointwise substitutes if observing more information never increases expected marginal value.

Our goal here is to illustrate that techniques from submodular maximization extend naturally.
We focus on the simple cardinality constraint case and show that the greedy algorithm for monotone submodular maximization~\citep{nemhauser1978analysis} also gives good guarantees in this adaptive setting, when we have a strong substitutes condition.
\begin{algorithm}
\caption{Greedy algorithm for Adaptive \sigsel/ under cardinality constraints.}
\label{alg:adaptive-greedy}
\begin{algorithmic}[1]
  \State{\textbf{Input:} Decision problem $u,E,A_1,\dots,A_n,p$, in oracle model; cardinality constraint $k$}
  \State{Let $S_0 = \emptyset$}
  \State{Let $q_0 = p$, the prior}
  \For{$t \in \{1,\dots,k\}$}
    \State{Let $i_t = \arg\max_{j\not\in S} \V^{u,q_{t-1}}(A_j)$}
	\State{Let $S_t = S_{t-1} \cup \{i_t\}$.}
    \State{Select $A_{i_t}$ and observe $a_{i_t}$}
    \State{Let $q_t$ equal the Bayesian posterior $p_{a_j : j \in S_t}$}
  \EndFor
  \State{\textbf{Output:} $\arg\max_d \E_{e\sim q_t} u(d,e)$}
\end{algorithmic}
\end{algorithm}
\begin{theorem} \label{thm:adaptive-subs-pos}
When signals are pointwise weak substitutes, Algorithm \ref{alg:adaptive-greedy} (Adaptive Greedy) obtains a $(1-1/e)$-approximation algorithm for Adaptive \sigsel/ under cardinality constraints in the oracle model.
\end{theorem}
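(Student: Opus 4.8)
The plan is to recast the pointwise weak substitutes condition (Definition \ref{def:ptwise-subs}) as an \emph{adaptive submodularity} property in the sense of \citet{golovin2011adaptive}, and then run the standard adaptive greedy analysis against the optimal adaptive policy $\pi^*$ that inspects at most $k$ signals. Throughout I would measure the value of a policy $\pi$ by $\mathrm{val}(\pi) = \E[G(q_{\mathrm{final}})]$, the expected value of the associated convex score function $G$ (Corollary \ref{cor:decision-captured-by-G}) evaluated at the posterior on $E$ given all signals $\pi$ inspects; by properness this equals $\E[u(d,e)]$ for the optimal final decision, so the guarantee really is about the Adaptive \sigsel/ objective. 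For a posterior $q$ reached after some observations and a not-yet-inspected signal $A_j$, write the expected marginal benefit as $\Delta(j \mid q) = \V^{u,q}(A_j) - \V^{u,q}(\bot)$; since $\V^{u,q}(\bot)=G(q)$ is constant in $j$, greedy's rule $i_t = \arg\max_j \V^{u,q_{t-1}}(A_j)$ is exactly $\arg\max_j \Delta(j \mid q_{t-1})$. Let $\Phi_t = \E[G(q_t)]$ be greedy's expected value after $t$ steps, so $\Phi_0 = \V(\bot)$ and $\Phi_k = \mathrm{val}(\pi_g)$, and note the one-step identity $\Phi_{t+1}-\Phi_t = \E_{q_t}\big[\max_j \Delta(j\mid q_t)\big]$.

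The heart of the argument is the recursion $\mathrm{val}(\pi^*) - \Phi_t \le k\,(\Phi_{t+1}-\Phi_t)$, which I would obtain by comparing greedy's first $t$ steps against the concatenated policy that runs greedy for $t$ steps and then runs $\pi^*$ on top. First, adaptive monotonicity — immediate from Fact \ref{fact:more-info-helps}, since the final decision can ignore extra observations — gives $\mathrm{val}(\pi^*) \le \mathrm{val}(\pi_g^t \,@\, \pi^*)$. Second, I would telescope the gain of the concatenation over greedy along $\pi^*$'s steps:
\begin{equation*}
 \mathrm{val}(\pi_g^t \,@\, \pi^*) - \Phi_t = \E_{\psi}\, \E_{\pi^*\text{-path}}\Big[\textstyle\sum_{s=1}^{k}\Delta\big(e_s^* \mid \psi \cup \pi^*_{<s}\big)\Big],
\end{equation*}
where $\psi$ is the random partial realization after greedy's $t$ steps, $e_s^*$ is the $s$-th signal $\pi^*$ chooses, and $\pi^*_{<s}$ denotes its first $s-1$ observations. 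Now apply pointwise weak substitutes termwise: the information $\psi \cup \pi^*_{<s}$ refines $\psi$, so taking $A'$ to be the join of greedy's inspected signals, $A$ the join of those together with $\pi^*_{<s}$, and $B = A_{e_s^*}$ (all on the subsets lattice with $A' \preceq A$), Definition \ref{def:ptwise-subs} bounds $\Delta(e_s^* \mid \psi \cup \pi^*_{<s}) \le \Delta(e_s^* \mid \psi)$ for every fixed realization, hence also in expectation over $\pi^*_{<s}$. Each such term is at most $\max_j \Delta(j\mid\psi)$, and $\pi^*$ picks at most $k$ signals, so the inner sum is at most $k\,\max_j\Delta(j\mid\psi)$; taking $\E_\psi$ gives $\mathrm{val}(\pi_g^t\,@\,\pi^*) - \Phi_t \le k(\Phi_{t+1}-\Phi_t)$, and the recursion follows from the monotonicity step.

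With the recursion in hand the conclusion is routine: setting $\delta_t = \mathrm{val}(\pi^*) - \Phi_t$ yields $\delta_t \le k(\delta_t - \delta_{t+1})$, so $\delta_{t+1} \le (1-1/k)\delta_t$ and $\delta_k \le (1-1/k)^k \delta_0 \le e^{-1}\delta_0$, i.e. $\mathrm{val}(\pi_g) - \V(\bot) \ge (1-1/e)\big(\mathrm{val}(\pi^*) - \V(\bot)\big)$ — a $(1-1/e)$ guarantee on the improvement over acting on the prior, which is also a $(1-1/e)$ guarantee on $\mathrm{val}(\pi^*)$ itself whenever $\V(\bot) \ge 0$. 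I expect the main obstacle to be the termwise use of pointwise substitutes inside the telescoping sum: one must handle that $e_s^*$ is itself a random function of $\pi^*$'s observations, verify the conditioning events form a genuine refinement on the subsets lattice so that Definition \ref{def:ptwise-subs} applies, and check that re-selections of already-inspected signals (contributing zero marginal benefit) do not break the ``at most $k$ terms'' bound. The telescoping identity itself — that $\E[G(q_{t,s}) - G(q_{t,s-1})] = \E_{q_{t,s-1}}[\Delta(e_s^*\mid q_{t,s-1})]$ — is a short tower-property computation using $\V^{u,q}(\bot)=G(q)$ and $\V^{u,q}(A_{e})=\E_a[G(q')]$, and should be recorded carefully to justify the display above.
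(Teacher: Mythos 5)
Your proposal is correct and is essentially a fully fleshed-out version of the paper's own (much terser) argument: the paper also bounds $V^*$ by the value of the concatenated policy ``greedy for $t$ steps, then OPT,'' uses pointwise weak substitutes to bound each of OPT's $k$ marginal gains by greedy's next gain, and closes with the standard $(1-(1-1/k)^k)$ recursion. Your added care about the telescoping identity, the refinement on the subsets lattice, and the $\V(\bot)\ge 0$ normalization are details the paper elides but not a different route.
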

\begin{proof}
The usual proof essentially goes through.
We utilize notation from Algorithm \ref{alg:adaptive-greedy} (adaptive greedy).
The expected performance of $S_t$, the set constructed at time $t$, is $\V^{u,P}(S_t)$ (we use this notation as shorthand for evaluation at the join of all signals in $S_t$).

Let $V^*$ be the expected performance of the optimal policy.
We will upper-bound $V^*$ for each $t$ by the performance of a hypothetical algorithm that first selects $S_t$, then selects all $k$ of the signals selected by the optimal policy.
For a fixed $S_t$, such a hypothetical algorithm can obtain, in expectation, at most the performance of $S_t$ plus $k$ times the maximum expected marginal contribution of any signal to $S_t$.
This follows from pointwise substitutes: as the algorithm acquires more information, the expected marginal contribution of a signal does not increase.
So, taking the expectation over this condition, which holds for each realization of $S_t$:
 \[ V^* \leq \V^{u,P}(S_t) + k \Big(\V^{u,P}(S_{t+1}) - \V^{u,P}(S_t)\Big) . \] 
This is exactly the inequality that produces the approximation ratio in submodular set function case.
The inequality implies by induction that $\V^{u,P}(S_k) \geq \left(1 - \left(1-\frac{1}{k}\right)^k \right)V^*$, and this approximation ratio is always better than $1-1/e$.
\end{proof}



\clearpage

\section{Structure and Design} \label{sec:structural}
In this section, we give some initial investigations into the structure of substitutes and complements.
We focus on two kinds of questions: Understanding what types of information structures may be generally substitutes or complements for a broad class of decision problems; and understanding how to design a proper scoring rule or other decision problem so as to impose substitutability on a given signal structure.

In terms of contributions of this paper to these problems and starting points for future work, it is also worth recalling the relatively diverse set of equivalent definitions of informational S\&C derived in Section \ref{sec:characterizations}.
There, we showed that substitutes can be defined in terms of submodularity, generalized entropies, or divergences.

\subsection{Universal substitutes and complements} \label{sec:universal}
An ideal starting point would be characterizing information structures that are always substitutable or complementary; we term these ``universal''.
We note that \citet{borgers2013signals} investigated a two-signal variant of this problem, with a definition essentially corresponding to weak substitutes on those two signals, with some results of the same flavor.

Intuitively, there are ``trivial'' cases that satisfy this universality criterion, at least for weak S\&C.
An example of trivial weak substitutes is the case where $A_1 = A_2 = \cdots = A_n$.
Here, after observing any one signal, all of the others do not change the posterior belief at all.
An example of trivial weak complements is the case where each $A_i$ is an independent uniformly random bit and $E = A_1 \oplus A_2 \oplus \cdots \oplus A_n$, the XOR of all the bits.
Here, any subset of $n-1$ signals does not change the posterior belief at all compared to the prior, but the final signal completely determines $E$.
\begin{definition} \label{def:universal}
Given an information structure $E,A_1,\dots,A_n$ with prior $P$, we term $A_1,\dots,A_n$ \emph{universal weak substitutes} if they are weak substitutes for every decision problem.
Universal moderate/strong substitutes and weak/moderate/strong complements are defined analogously.
We term the signals \emph{trivial substitutes} if for every realization of $a_1,\dots,a_n$ in the prior's support, $p_{a_i} = p_{a_1,\dots,a_n}$ for all $i$ and \emph{trivial complements} if for all realizations $a_1,\dots,a_n$ in the support, $p_{\{a_j : j \neq i\}} = p$ for all $i$.
We term them \emph{somewhat trivial} if the prior is a mixture distribution that is equal to a trivial structure with some probability, and some other arbitrary other structure with the remaining probability.
\end{definition}

\begin{figure}[ht]
\caption{\textbf{Universal substitutes and complements.}
  In each plot, there are two binary signals $A_1,A_2$ and a binary event $E$.
  The $x$-axis is the probability of $E$.
  $G$ is the expected utility function for some decision problem (different in each plot).
  The circles correspond to the value of no signals (black), one signal (blue), and two signals (red).
  The blue braces measure the distance from the prior to posteriors on one signal; red measure additional distance to the posterior on two.}
\label{fig:universal}

\begin{subfigure}{1.0\textwidth}
\centering
\includegraphics[width=0.6\textwidth]{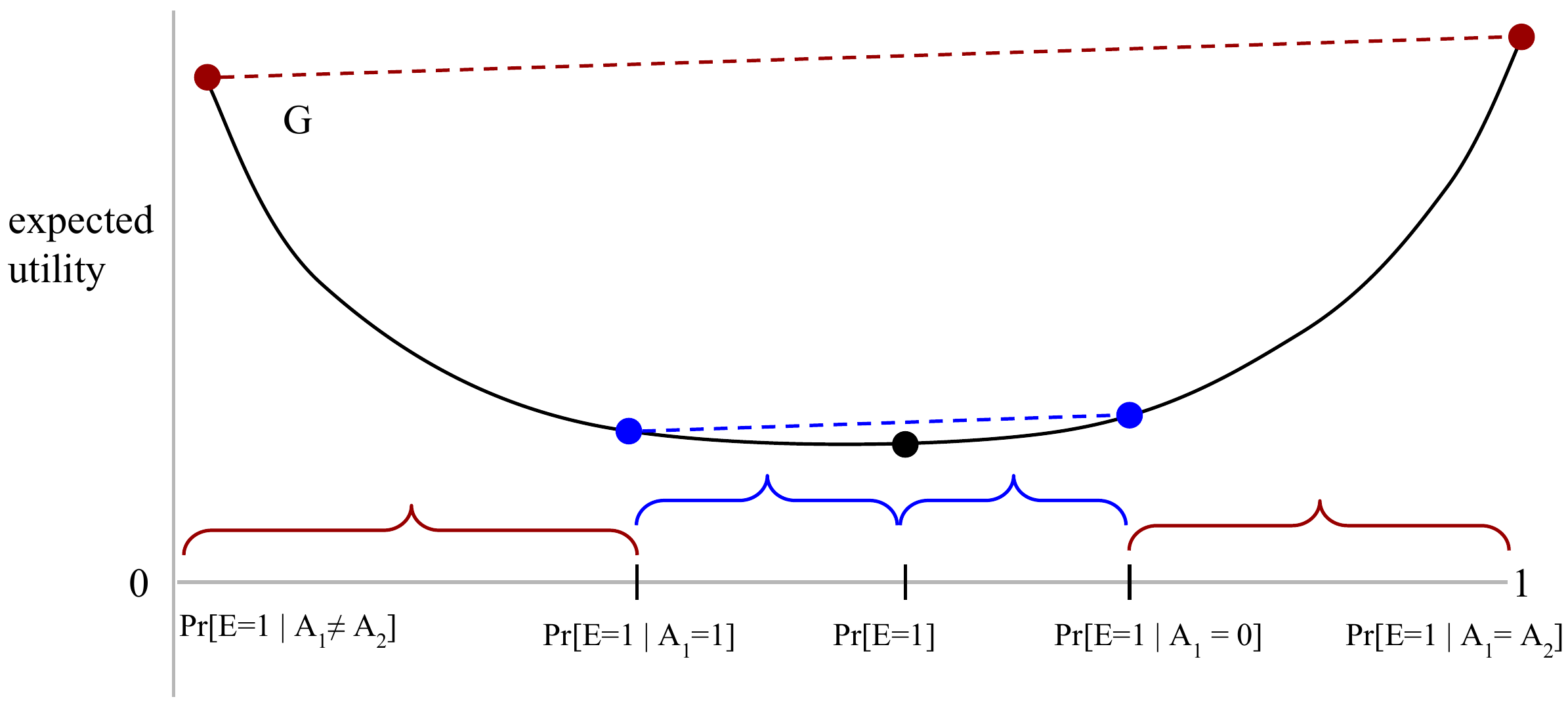}
\caption{\textbf{Curvature increases complementarity.}
  Here $A_1,A_2$ are i.i.d. noisy bits and $E = A_1 \oplus A_2$, the XOR.
  Because $G$ is convex, it is more extreme at more extreme beliefs, which tend to correspond to having multiple signals.
  Here the marginal value of a second signal is much higher than that of a first; complementarity.}
\end{subfigure}

\vspace{24pt}

\begin{subfigure}{1.0\textwidth}
\centering
\includegraphics[width=0.6\textwidth]{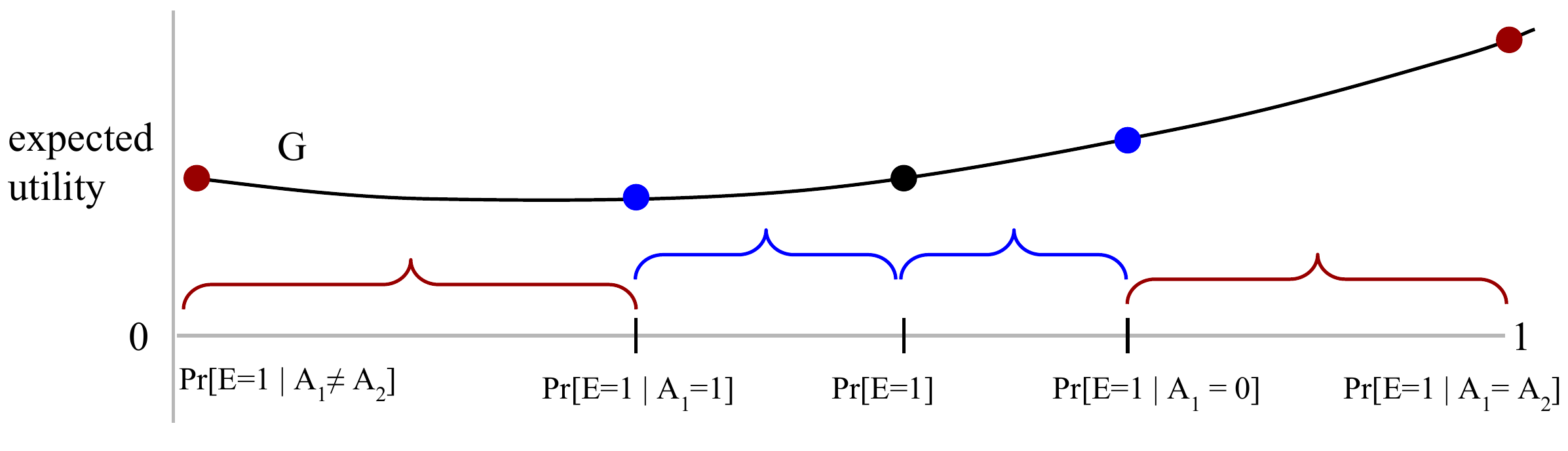}
\caption{\textbf{Flattening $G$ cannot remove complementarity.}
  For these signals (the same as in (a)), the Bayesian update on the second signal is always larger than on the first.
  Intuitively, even for flat $G$, this structure will always exhibit complementarity.
  These signals are universal complements.}
\end{subfigure}

\vspace{24pt}

\begin{subfigure}{1.0\textwidth}
\centering
\includegraphics[width=0.6\textwidth]{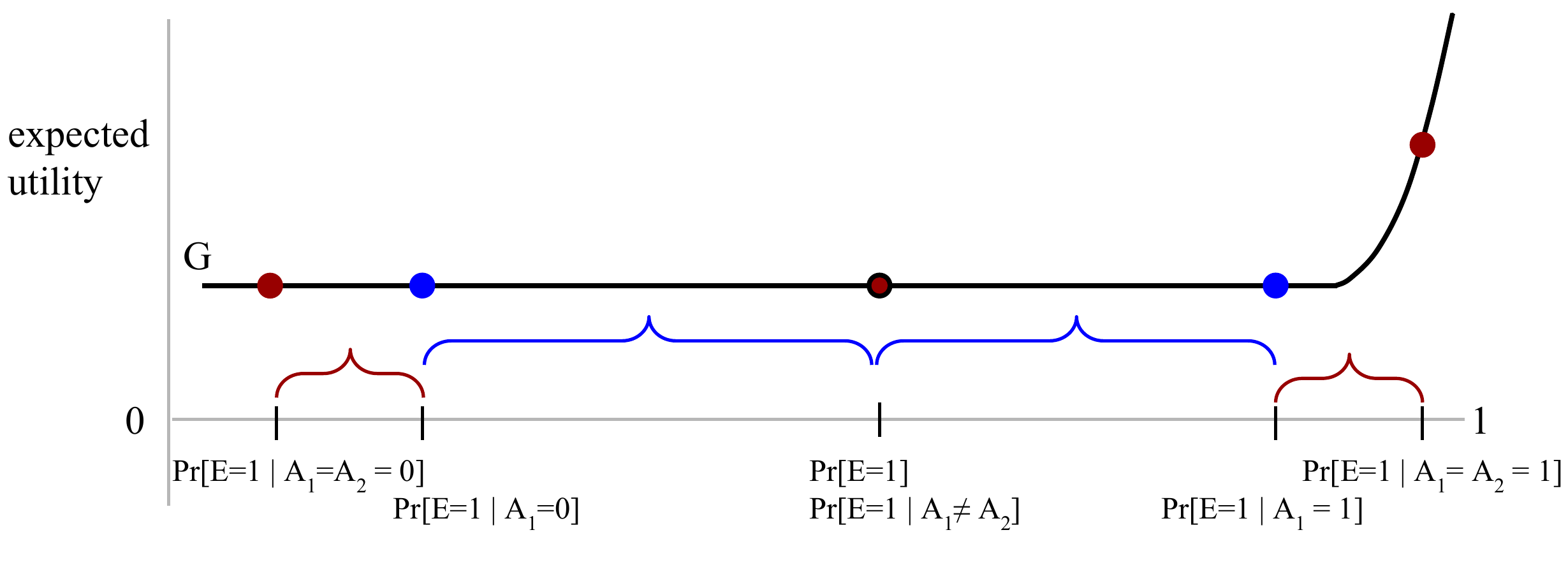}
\caption{\textbf{Curvature destroys substitutability.}
  Here $E$ is a uniform bit conditioned on which each $A_i = E$ independently with large probability.
  These are often substitutes, but by introducing high curvature at an extreme point, we create a problem where useful decisions can only be made with access to multiple signals.
  The first signal has no marginal value over the prior, while the second signal has some marginal value, so they are now complements.
  In general, this argument shows that universal substitutes must involve a ``triviality'' ruling out this construction.}
\label{subfig:univ-subs}
\end{subfigure}

\end{figure}

\begin{proposition} \label{prop:subs-mix-trivial}
If $A_1,\dots,A_n$ are universal weak substitutes, then they are somewhat trivial.
Furthermore, their ``trivial'' component is more informative than the nontrivial component, in the following sense.
Let $X_i \subseteq \Delta_E$ be the convex hull of $\{p_{a_i} : a_i \in A_i\}$, and let $Y \subseteq \Delta_E$ be the convex hull of $\{p_{a_1,\dots,a_n} : a_1 \in A_1,\dots,a_n \in A_n\}$.
If $A_1,\dots,A_n$ are universal substitutes, then $X_i = Y$ for all $i$.
\end{proposition}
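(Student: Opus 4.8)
The plan is to establish the crisp ``furthermore'' claim $X_i = Y$ as the engine and read off somewhat-triviality from it. One inclusion is immediate: for any realization $a_i$, the law of total probability gives $p_{a_i} = \sum_{a_{-i}} \Pr[a_{-i}\mid a_i]\, p_{a_i,a_{-i}}$, so $p_{a_i}$ is a convex combination of full posteriors and hence lies in $Y$; taking convex hulls yields $X_i \subseteq Y$, and the same computation places the prior $p$ inside $X_i$. Thus it suffices to prove $Y \subseteq X_i$, i.e.\ that every full posterior lies in $X_i$.

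I would argue the contrapositive. Suppose some full posterior $q^\star = p_{a^\star}$, with $a^\star$ in the support of the prior, is \emph{not} in the compact convex set $X_i$. By the separating hyperplane theorem there is an affine $\ell\colon\Delta_E\to\R$ and a threshold $m$ with $\ell(x)\le m$ for all $x\in X_i$ and $\ell(q^\star)>m$. By Corollary~\ref{cor:decision-captured-by-G} a decision problem is nothing but a convex $G$, so to defeat universality it suffices to build a convex $G$ violating weak substitutes. Take $G=\phi\circ\ell$ with $\phi\colon\R\to\R$ convex, identically $0$ on $(-\infty,m]$ and sharply curved above $m$, e.g.\ $\phi(t)=K\max(0,t-m)^2$ for large $K$. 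Then $G$ is convex and its Bregman divergence collapses to the one-dimensional divergence of $\phi$ along $\ell$, namely $D_G(x,y)=D_\phi(\ell(x),\ell(y))$.

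Now apply the substitutes inequality of Definition~\ref{def:subs-bregman} with $B=A_i$, $A'=\bot$, and $A=A_{-i}:=\bigjoin_{j\neq i}A_j$ (admissible since $A_{-i}\meet A_i=\bot$). Using Lemma~\ref{lemma:marginal-contribution-bregman}, weak substitutes would demand
\[ \E_{a_i}\, D_\phi\!\big(\ell(p_{a_i}),\,\ell(p)\big) \;\ge\; \E_{a}\, D_\phi\!\big(\ell(p_{a}),\,\ell(p_{a_{-i}})\big). \]
Every $p_{a_i}$ and the prior $p$ lie in $X_i$, so their $\ell$-values are at most $m$, where $\phi$ is affine; hence every term on the left is $0$ and the left-hand side vanishes. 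A contradiction follows as soon as the right-hand side is shown to be strictly positive.

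Proving that strict positivity is the crux, and I expect it to be the main obstacle. The $a^\star$ term is $D_\phi\!\big(\ell(q^\star),\ell(p_{a_{-i}^\star})\big)$ with $\ell(q^\star)>m$; if the partial posterior straddles the threshold, $\ell(p_{a_{-i}^\star})\le m$, this equals $K(\ell(q^\star)-m)^2>0$ and we are done. The delicate case is $\ell(p_{a_{-i}^\star})>m$, where both arguments sit on the quadratic piece and the right-hand side aggregates (for fixed $a_{-i}$) to $K\,\E_{a_{-i}}\mathrm{Var}_{a_i\mid a_{-i}}\!\big(\ell(p_{a_{-i},a_i})\big)$, which is positive unless $\ell(p_{a_{-i},a_i})$ is independent of $a_i$ for every $a_{-i}$. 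I would dispatch this by choosing $\ell$ generically (injective on the finite set of posteriors), so that vanishing variance forces $p_{a_{-i},a_i}$ to be independent of $a_i$ --- i.e.\ $A_i$ carries no information given the remaining signals. That degeneracy is exactly the ``trivial'' behavior the proposition allows; cataloguing such (conditional) redundancies across the coordinates $i$ is what both yields $X_i=Y$ off the redundant part and produces the decomposition of the prior into a trivial component (on which $p_{a_i}=p_{a_1,\dots,a_n}$) and an arbitrary remainder, giving somewhat-triviality. Completing this bookkeeping cleanly is the genuinely hard step; I also expect that allowing $A'$ to be a \emph{garbling} of $A_{-i}$ (i.e.\ invoking moderate or strong substitutes) gives extra leverage to manufacture the straddle and thereby sidestep the degenerate case.
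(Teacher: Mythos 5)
Your overall strategy --- argue the contrapositive by building a convex $G$ that is flat where the one-signal posteriors live and curved beyond, so that a first signal has zero marginal value while a later one gains --- is exactly the paper's. But the step you flag as the crux is a genuine gap, and the paper's proof is arranged precisely to avoid it. By pinning the submodularity instance to $B=A_i$, $A=A_{-i}$, $A'=\bot$, you force yourself to prove that the \emph{specific} increment $\V(A_{\text{all}})-\V(A_{-i})=\E_a D_\phi(\ell(p_a),\ell(p_{a_{-i}}))$ is strictly positive; in the case $\ell(p_{a^\star_{-i}})>m$ every term can vanish, and your genericity-of-$\ell$ fix only rules this out when $p_{a^\star_{-i},a_i}\neq p_{a^\star_{-i}}$, i.e.\ when $A_i$ is actually informative given $a^\star_{-i}$. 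The residual ``conditionally redundant'' cases are exactly the bookkeeping you leave open, so the argument as proposed does not close. The paper instead takes $G$ to vanish on the convex hull $X$ of \emph{all} one-signal posteriors $\{p_{a_j}: j=1,\dots,n,\ a_j\in A_j\}$ (not just those of the fixed $A_i$) and to be increasing outside. Then \emph{every} single signal has value $0=\V(\bot)$, while the value of some multi-signal join is strictly positive because a multi-signal posterior escapes $X$ with positive probability; hence along a chain from $\bot$ to the top there is a first strictly positive increment, say adding $B$ to $A$, and submodularity with $A'=\bot$ gives $0=\V(B)-\V(\bot)\geq \V(A\join B)-\V(A)>0$, a contradiction with no Bregman-divergence positivity computation and no delicate case. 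Note, though, that this only directly yields $Y\subseteq\mathrm{conv}(\bigcup_j X_j)$; your per-coordinate target $Y\subseteq X_i$ cannot be reached by separating from $X_i$ alone, because then $\V(A_j)$ for $j\neq i$ need not vanish and the chain rescue is unavailable.

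The second deferred step, passing to somewhat-triviality, is handled in the paper by looking at extreme points of $X_i=Y$: since $p_{a_i}=\E_{a_{-i}\mid a_i}\,p_{a}$, any $p_{a_i}$ that is an extreme point of $Y$ must coincide with every full posterior in its conditional support, and iterating this over pairs, triples, etc.\ of signals produces the trivial mixture component. So your proposal shares the paper's key construction and correctly isolates where the difficulty lies, but both the strict-positivity step and the triviality bookkeeping remain to be done, and the first is most cleanly repaired by changing which submodularity inequality you invoke (comparing against the first positive increment of a chain) rather than by perturbing $\ell$.
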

\begin{proof}
Consider any information structure that does have $X_i = Y$ for all $i$.
We give a decision problem for which it sometimes satisfies strictly increasing marginal value.
To do so, by Corollary \ref{cor:decision-captured-by-G}, it is enough to construct a convex function $G$ with appropriate structure, for which we then know a decision problem (namely, a scoring rule, Fact \ref{fact:scoring-char}) exists.
The idea is pictured in Figure \ref{subfig:univ-subs}.

Let $X$ be the convex hull of $\{p_{a_i} : i=1,\dots,n; a_i \in A_i\}$, that is, of all possible posterior beliefs conditioned on one signal.
Let $G(q)$ be zero for $q$ in the convex hull of that set of beliefs (note that the prior $p$ must be in the convex hull) and $G(q)$ be increasing outside of this convex hull.
Then for any one signal $A_i$, we must have $\V(A_i) = \V(\bot) = 0$ as $\V(A_i) = \E_{a_i} G(p_{a_i})$.
But by assumption, there exist posterior beliefs for multiple signals that fall outside this convex hull.
This follows because each $p_{a_i} = \E_{a_j} p_{a_ia_j}$ for any $j \neq i$, so unless $p_{a_ia_j} = p_{a_i}$ for all $j$, there are cases where some posterior belief falls outside the convex hull mentioned.
(And if $p_{a_ia_j} = p_{a_i}$ always, then repeat the argument for triples of signals $a_i,a_j,a_k$, and so on; by nontriviality, the argument will succeed at some point.)

Now for these posterior beliefs falling outside the convex hull, they occur with positive probability and have a positive value of $G$, hence the marginal value of additional signals because strictly positive at some point, while the marginal value of the first signal was $0$.

To see that this implies a mixture containing trivial substitutes, note that the convex hulls $X_i = X_j$ for all $i,j$, so in particular the corners of these convex hulls must be points where $p_{a_i}$ is equal, in the case where $A_i = a_i$, to every posterior belief conditioned on any number of signals.
\end{proof}

\begin{theorem} \label{thm:no-universal-moderate-subs}
All universal moderate substitutes are trivial.
(Hence, the same holds for universal strong substitutes.)
\end{theorem}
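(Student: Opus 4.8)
The plan is to prove the contrapositive: if $A_1,\dots,A_n$ are not trivial substitutes, then some decision problem makes them fail moderate substitutes. By Corollary~\ref{cor:decision-captured-by-G} and Fact~\ref{fact:scoring-char} it suffices to exhibit a single convex $G:\Delta_E\to\R$ (hence a scoring rule) together with subsets-lattice signals $A,B$ and a discrete-lattice coarsening $A'$ with $A\meet B\preceq A'\preceq A$ for which the Bregman form of the definition (Definition~\ref{def:subs-bregman}) is violated, i.e. $\E_{a',b}D_G(p_{a'b},p_{a'})<\E_{a,b}D_G(p_{ab},p_a)$. Since moderate substitutes implies weak, I may assume the signals are universal weak substitutes (otherwise they already fail weak, hence moderate), so by Proposition~\ref{prop:subs-mix-trivial} they are somewhat trivial with $X_i=Y$, which pins down the extreme posteriors. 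To extract a witness of non-triviality, note $p_{a_i}\neq p_{a_1,\dots,a_n}$ for some realization and some $i$; telescoping over the signals added one at a time then produces a subsets-lattice signal $A$ (a join of base signals), a realization $a^*$ of positive probability, and a base signal $B$ disjoint from $A$ such that $p_{a^*b}\neq p_{a^*}$ for some $b$. The essential point is that moderate substitutes, unlike weak, must hold for deterministic coarsenings $A'$ that merge \emph{individual} outcomes of $A$, and this extra freedom is exactly what I exploit.

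The construction would take $A'$ to keep every outcome of $A$ separate except that it merges $a^*$ into one cell $c^*$ together with other outcomes, so that all unmerged cells contribute identically to both sides and cancel, leaving only $c^*$ in the comparison. I would then restrict to $G=\psi\circ\phi$, where $\phi$ is a linear functional on $\Delta_E$ and $\psi:\R\to\R$ is convex; then $D_G(p,q)=D_\psi(\phi(p),\phi(q))$ and each marginal value becomes a one-dimensional Jensen gap. Choosing $\phi$ with $\phi(p_{a^*b})\neq\phi(p_{a^*})$ (possible because $p_{a^*b}\neq p_{a^*}$), the fine contribution $\E_b D_\psi(\phi(p_{a^*b}),\phi(p_{a^*}))$ is a genuine spread, whereas the merged children $\phi(p_{c^*b})$ are $b$-dependent averages of $\phi(p_{ab})$ over $a\in c^*$ and therefore cluster near the cell mean $\phi(p_{c^*})$. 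Placing the curvature of $\psi$ away from this cluster, making $\psi$ affine on a small interval around $\phi(p_{c^*})$ but strictly convex where $a^*$'s fine children sit, makes the coarse Jensen gap vanish while the fine one stays positive, yielding the desired strict violation.

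It is cleanest to phrase the comparison through the curvature kernel $K_\mu(t)=\E_{T\sim\mu}(T-t)_+-(\E_\mu T-t)_+\ge0$, for which any one-dimensional Jensen gap equals $\int \psi''(t)\,K_\mu(t)\,dt$; universal moderate substitutes is then equivalent to the pointwise domination $K_{A'}\ge K_A$ of the probability-weighted fine kernel by the coarse one, in every direction $\phi$. To break it I only need a single $t$ and direction with $K_{c^*}(t)<\tfrac{\Pr[a^*]}{\Pr[c^*]}K_{a^*}(t)$, after which a spike of $\psi''$ at $t$ finishes the argument. The main obstacle is guaranteeing such a $t$ in full generality, not merely when $a^*$ is improbable (in which case the merged spread is automatically $O(\Pr[a^*])$ and the claim is immediate). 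For the general case I would use the freedom to enlarge $A$ and to choose which outcomes and how much mass to merge into $c^*$, arranging either that the coarse children cluster tightly around $\phi(p_{c^*})$ or that $B$ becomes conditionally uninformative about $\phi(E)$ within $c^*$ (so $K_{c^*}\equiv0$) while the $a^*$-spread persists; the structure forced by Proposition~\ref{prop:subs-mix-trivial} ($X_i=Y$) is the natural tool for organizing these cases and ruling out the obstructions that a purely interior $a^*$ could otherwise raise.
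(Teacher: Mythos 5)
Your overall strategy is legitimate and genuinely different from the paper's. The paper never attempts a direct construction of a violating decision problem for the moderate case: it bootstraps from Proposition \ref{prop:subs-mix-trivial} (universal moderate substitutes are in particular universal weak substitutes, hence a mixture containing a trivial component), lets $A'$ be the coarsening that reveals which mixture component occurred so that the moderate-substitutes inequality yields weak substitutes \emph{conditioned on each component}, and then recurses -- the non-trivial component must itself contain a trivial sub-component, and so on until the whole prior is seen to be trivial. That route needs no quantitative comparison of Jensen gaps at all.

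The problem with your proposal is the gap you yourself flag, and it is the entire content of the theorem. Everything up to the kernel formulation is sound: the reduction to $G=\psi\circ\phi$ with $D_G(p,q)=D_\psi(\phi(p),\phi(q))$, the identity $\E\psi(T)-\psi(\E T)=\int\psi''(t)K_\mu(t)\,dt$, and the cancellation of unmerged cells are all correct, and the choice of witness $(A,a^*,B)$ by telescoping is fine. But the conclusion requires exhibiting a cell $c^*\ni a^*$, a direction $\phi$, and a point $t$ with $\Pr[c^*]K_{c^*}(t)<\Pr[a^*]K_{a^*}(t)$, and you only establish this when $a^*$ carries a small fraction of $c^*$'s mass. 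In general the coarse children $\phi(p_{c^*b})$ are the mixtures $\sum_{a\in c^*}p(a\mid c^*,b)\,\phi(p_{ab})$, which contain $\phi(p_{a^*b})$ with non-negligible weight; they can therefore spread over essentially the same interval as the fine children of $a^*$, leaving no neighborhood of $\phi(p_{c^*})$ on which $\psi$ can be made affine without also killing the fine Jensen gap. Your proposed escapes -- enlarging $A$, tuning which outcomes and how much mass to merge, or arranging that $B$ is conditionally uninformative about $\phi(E)$ within $c^*$ while remaining informative given $a^*$ -- are precisely the existence claims that would need proof, and Proposition \ref{prop:subs-mix-trivial} (the condition $X_i=Y$) constrains only the single-signal posterior hulls, not the conditional behavior of $B$ inside an arbitrary merged cell. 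As written the argument does not close; you would need to supply that missing existence step, or switch to the conditioning-and-recursion argument, which sidesteps it entirely.
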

\begin{proof}
Any universal moderate substitutes must be universal weak substitutes as well; so we know via the proof of Proposition \ref{prop:subs-mix-trivial} that any candidates must have a mixture with trivial substitutes.
However, we can let $A'$ be a signal determining whether that component of the mixture has occurred.
Then one implication of moderate substitutes is that, conditioned on $A'$, all signals are weak substitutes.
Therefore, universal substitutes implies that even conditioned on $A'$, the prior is a mixture containing a trivial component.
Repeating the argument gives that the entire prior must consist only of trivial components.
\end{proof}

\begin{proposition} \label{prop:universal-comps}
Consider a binary event $E \in \{0,1\}$ prior probability $p = \Pr[E=1]$, and signals $A_1,A_2$ with posterior probabilities $p_{a_1} = \Pr[E=1|A_1=a_1]$, and so forth.
Suppose that:
\begin{enumerate}
 \item $\max_{a_1} \|p-p_{a_1}\| \leq r$ and $\max_{a_2} \|p-p_{a_2}\| \leq r$; and
 \item $\min_{a_1,a_2} \|p - p_{a_1,a_2}\| \geq 2r$.
\end{enumerate}
Then $\{A_1,A_2\}$ are universal moderate complements.
\end{proposition}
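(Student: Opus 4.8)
The plan is to reduce the universal moderate complements condition to a one-parameter family of elementary inequalities --- one per ``threshold'' --- and then to verify each of these using the gap between the two radii $r$ and $2r$. First I would reduce to a single marginal-value comparison. On the two-signal subsets lattice the only incomparable pair is $\{A_1,A_2\}$, and for comparable choices of $A,B$ the defining inequality degenerates to an equality, so it suffices to take $A=A_1$, $B=A_2$ (the case $A=A_2$ is symmetric, by the symmetric hypotheses). Moderate complements then requires, for every $A'$ on the discrete lattice with $\bot = A_1 \meet A_2 \preceq A' \preceq A_1$ (i.e.\ every deterministic coarsening of $A_1$),
\[ \V(A_2 \join A') - \V(A') \le \V(A_2 \join A_1) - \V(A_1) . \]
By Lemma \ref{lemma:marginal-contribution-bregman} this is exactly
\[ \E_{a',a_2} D_G(p_{a'a_2}, p_{a'}) \le \E_{a_1,a_2} D_G(p_{a_1 a_2}, p_{a_1}) , \]
and it must hold for the convex $G$ of \emph{every} decision problem.

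Next I would linearize over $G$. Since $E$ is binary, each posterior is a scalar in $[0,1]$, and every convex $G$ on $[0,1]$ is an affine function plus a nonnegative mixture of hinges $g_t(x) = (x-t)_+$, $t \in [0,1]$. Affine parts contribute $0$ to any Bregman divergence, so it suffices to prove the displayed inequality with $D_G$ replaced by $D_{g_t}$ for each fixed $t$. A short computation gives the convenient identity $\E_{c,b} D_{g_t}(p_{cb}, p_c) = \E_{c,b}(p_{cb}-t)_+ - \E_c (p_c - t)_+ =: \Phi_{CB\mid C}(t)$, so the goal becomes showing $\Phi_{A_1 A_2 \mid A_1}(t) \ge \Phi_{A' A_2 \mid A'}(t)$ for every $t$.

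Then I would use the geometry encoded by the radii. Condition $1$, applied to both $A_1$ and $A_2$, keeps every single-signal posterior within $r$ of $p$; since each $p_{a'}$ is an average of the $p_{a_1}$, it too lies within $r$ of $p$. Condition $2$ pushes every two-signal posterior to distance at least $2r$, i.e.\ into $\{x \ge p+2r\} \cup \{x \le p-2r\}$. For $t \notin (p-r, p+r)$ the hinge is affine on $[p-r,p+r]$, so the single-signal terms cancel (the one-signal Jensen gap vanishes), while the two-signal term is nonnegative, and the inequality holds. For $t \in (p-r,p+r)$ the separation is decisive: every two-signal posterior lies strictly more than $r$ from $t$ on one side, and --- crucially --- because each $p_{a_1}$ and each $p_{a_2}$ averages its two-signal posteriors back to within $r$ of $p$, every ``row'' and every ``column'' of two-signal posteriors must contain points on both sides of $t$. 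This ``straddling'' is what forces the marginal value of $A_2$ given full information to be large.

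The hard part will be controlling the \emph{coarse} terms. Whereas the fine posteriors $p_{a_1 a_2}$ are pinned far from $t$, the coarsened posteriors $p_{a' a_2}$ are convex combinations of high and low fine posteriors and may sit anywhere, including near $t$; so bounding $\E_{a' a_2}(p_{a'a_2}-t)_+$ and matching each unit of one-signal refinement gain $\E_{a_1}(p_{a_1}-t)_+ - \E_{a'}(p_{a'}-t)_+$ against a corresponding two-signal gain is the delicate step, and is precisely where the moderate (as opposed to merely weak) strength of the claim is tested. I would attack it by a case analysis, for each $(a',a_2)$, on whether $p_{a'a_2}$ lies above or below $t$: lower-bounding each two-signal fiber gap by roughly $r$ times its straddling weight, upper-bounding the matching one-signal gap by the same quantity, and using the factor-two slack between $r$ and $2r$ to close the comparison. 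The symmetric hypotheses then yield the analogous inequality with $A_1$ and $A_2$ exchanged, completing the argument.
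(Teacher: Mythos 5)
Your setup is sound: on two signals only the incomparable pair $(A_1,A_2)$ is nontrivial, the rewriting via Lemma \ref{lemma:marginal-contribution-bregman} is correct, the decomposition of a convex $G$ on $[0,1]$ into an affine part plus a nonnegative mixture of hinges $(x-t)_+$ legitimately reduces ``for every decision problem'' to a one-parameter family of inequalities, and your disposal of $t \notin (p-r,p+r)$ (one-signal Jensen gaps vanish, two-signal refinement gain is nonnegative by Jensen) is right. But the proposal stops exactly where the content is. For $t \in (p-r,p+r)$ and a nontrivial coarsening $A'$ of $A_1$, the inequality in refinement-gain form is $\E g_t(p_{a_1a_2}) - \E g_t(p_{a'a_2}) \geq \E g_t(p_{a_1}) - \E g_t(p_{a'})$; each side is an expectation of Jensen gaps of the hinge, and the Jensen gap of $(x-t)_+$ for a variable $X$ equals $\min\{\E(X-t)_+,\E(t-X)_+\}$. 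One must show that the two-signal fiber gaps (each at least $r$ times the smaller of the conditional high/low masses given $(a',a_2)$) dominate the one-signal gaps (which can be as large as roughly $r/2$), and that the row/column balance constraints force enough straddling in \emph{every} $(a',a_2)$ fiber of \emph{every} coarsening. Your text announces this comparison (``roughly $r$ times its straddling weight \dots using the factor-two slack \dots to close the comparison'') but never carries it out; as written this is a plan for a proof, not a proof, and the delicate step you correctly identify as the crux is the one left open.

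For comparison, the paper's argument is entirely different and much shorter: it works pointwise with the subgradient inequality, $G(p_{a_1a_2}) \geq G(p_{a_1}) + G'(p_{a_1})(p_{a_1a_2}-p_{a_1})$, and then argues $G'(p_{a_1})(p_{a_1a_2}-p_{a_1}) \geq G(p_{a_1}) - G(p)$ using that $|p_{a_1a_2}-p_{a_1}| \geq 2r - r = r \geq |p_{a_1}-p|$ together with monotonicity of chord slopes for convex functions. This gives $G(p_{a_1a_2}) + G(p) \geq 2G(p_{a_1})$ for each realization; adding the symmetric statement with $a_1$ and $a_2$ exchanged and taking expectations yields $\E G(p_{a_1a_2}) + G(p) \geq \E G(p_{a_1}) + \E G(p_{a_2})$. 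Note, however, that the paper's proof explicitly verifies only this $A'=\bot$ instance (the supermodularity inequality on the subsets lattice) and is silent on general coarsenings, so you have in fact correctly isolated the additional step that the ``moderate'' strength of the statement demands; neither your sketch nor a direct transcription of the paper's pointwise argument closes that step, and closing it --- or deciding that only the weak claim is needed downstream, as in Proposition \ref{prop:always-design} --- is what remains to be done.
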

\begin{proof}
The idea is that posterior beliefs on multiple signals usually tend to lie ``farther'' from the prior than those on single signals, because at least in some cases, more information leads to more certain (hence extreme) beliefs.
Convex functions already tend to give larger marginal value to more extreme cases.
Hence, convexity of the decision problem encourages complementarity; and it cannot discourage complementarity very much because a convex function can only be ``so flat''.
This is pictured in Figure \ref{fig:universal}.
By moving the posteriors $p_{a_1,a_2}$ all very far from the prior, compared to the posteriors $p_{a_1},p_{a_2}$, we get increasing marginal returns.

Pick any convex $G$ on $[0,1]$ and let all probability distributions on $E$ be represented as scalars $q \in [0,1]$.
This includes the prior $p$ and posteriors such as $p_{a_1}$.
It suffices to show that $\E G(p_{a_1a_2}) + G(\bot) \geq \E G(p_{a_1}) + \E G(p_{a_2})$.
Let $p^*$ be the minimizer of $G$ on $[0,1]$.
Then $G(p_{a_1a_2}) \geq G(p_{a_1}) + G'(p_{a_1}) (p_{a_1a_2} - p_{a_1})$.
Now we claim that $G'(p_{a_1})(p_{a_1a_2} - p_{a_1}) \geq G(p_{a_1}) - G(p)$.
This implies $G(p_{a_1a_2}) + G(p) \geq 2G(p_{a_1})$, and after taking the analogous case reversing $a_1$ and $a_2$ and expectations, we get that the desired inequality must hold.
To prove the claim, consider the case $p_{a_1a_2} > p_{a_1}$.
Then $(p_{a_1a_2} - p_{a_1}) \geq r$, and we want to show $G'(p_{a_1}) \geq \frac{G(p_{a_1})-G(p)}{r}$.
This holds by definition of a subgradient as the right side is at most the slope of a line connecting $p$ and $p_{a_1}$.
The analogous argument holds for the case $p_{a_1a_2} < p_{a_1}$.
\end{proof}

\begin{corollary} \label{cor:univ-comps-example}
An example of universal complements are the signals $A_1,A_2$ each independently equal to $1$ with probability $q \in [0.25,0.75]$ and $E = A_1 \oplus A_2$ (the XOR operation).
\end{corollary}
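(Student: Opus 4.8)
The plan is to prove the stronger statement that the XOR signals are universal moderate complements for \emph{every} $q\in(0,1)$, which in particular covers the claimed range $[0.25,0.75]$. First I would record the relevant beliefs, writing beliefs about the binary event as scalars in $[0,1]$. The construction gives prior $p = 2q(1-q)$; one-signal posteriors $p_{a_1=0}=p_{a_2=0}=q$ and $p_{a_1=1}=p_{a_2=1}=1-q$, with $\Pr[A_i=0]=1-q$ and $\Pr[A_i=1]=q$; and two-signal posteriors $p_{a_1a_2}=0$ when $a_1=a_2$ (probability $q^2+(1-q)^2$) and $p_{a_1a_2}=1$ when $a_1\neq a_2$ (probability $2q(1-q)$). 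Because each base signal is binary, the only signals between $\bot$ and $A_i$ on the discrete lattice are $\bot$ and $A_i$ itself, so for this pair moderate complements reduces to weak complements: by Definition \ref{def:mod-strong-subs-comps} it suffices to verify the single supermodularity inequality $\V(A_1\join A_2)+\V(\bot)\geq \V(A_1)+\V(A_2)$, and by Corollary \ref{cor:decision-captured-by-G} this must hold for every convex expected-score function $G$ on $[0,1]$.

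The key point is that the distance-based sufficient conditions of Proposition \ref{prop:universal-comps} are lossy here: with the forced choice $r=\max_{a_i}\|p-p_{a_i}\|$ they only certify complementarity for $q\in[1/3,2/3]$, which does not reach the endpoints $0.25,0.75$. So I would argue directly. Substituting the beliefs above, the desired inequality becomes $\int G\,d\mu\geq 0$ for the signed measure
 \[ \mu = (q^2+(1-q)^2)\,\delta_0 + 2q(1-q)\,\delta_1 + \delta_{2q(1-q)} - 2q\,\delta_{1-q} - 2(1-q)\,\delta_q , \]
which I would check has total mass $0$ and first moment $0$. For such a measure the standard convex-order criterion applies: writing any convex $G$ as an affine function plus $\int (x-t)_+\,G''(dt)$ with $G''\geq 0$, the affine part integrates to $0$, so $\int G\,d\mu\geq 0$ holds for all convex $G$ if and only if the hinge profile $F(t):=\int (x-t)_+\,d\mu(x)\geq 0$ for every $t$.

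By the relabeling symmetry $a_i=0\leftrightarrow a_i=1$ (which sends $q\mapsto 1-q$) I would assume $q\geq 1/2$, so that the atoms are ordered $0\leq 1-q\leq 2q(1-q)\leq q\leq 1$. Then $F$ is continuous and piecewise linear, with $F(0)=\int x\,d\mu = 0$ and $F(t)=0$ for $t\geq 1$; evaluating it at the three interior breakpoints gives $F(q)=2q(1-q)^2$, $F(2q(1-q))=4q(1-q)^3$, and $F(1-q)=(1-q)(2q^2-2q+1)$, each strictly positive on $(0,1)$ (the last because $2q^2-2q+1$ has negative discriminant). Since a piecewise-linear function that is nonnegative at every breakpoint and at both endpoints of each linear piece is nonnegative throughout, $F\geq 0$ on $[0,1]$, yielding the supermodularity inequality for all convex $G$ and hence universal moderate complementarity for all $q$.

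The main obstacle is exactly the realization that the proposition cannot deliver the stated range and that one must replace its pointwise subgradient bound by the sharper, global convex-order (hinge) test; once that is in place, the remaining work is bookkeeping of the atom ordering (handled by the $q\mapsto 1-q$ symmetry) and confirming positivity of $F$ at the breakpoints, which is routine.
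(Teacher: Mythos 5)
Your proposal is correct, and it takes a genuinely different --- and strictly stronger --- route than the paper. The paper gives no separate proof of Corollary \ref{cor:univ-comps-example}; it is meant to follow from Proposition \ref{prop:universal-comps}, whose hypotheses with the optimal choice $r = \max(q,1-q)\,|1-2q|$ require $2q(1-q) \geq 2r$, i.e.\ $q \in [1/3,2/3]$. Your observation that the distance condition cannot reach the endpoints $0.25, 0.75$ is accurate: the stated range does not actually follow from the paper's own proposition, so in effect you identified and repaired a gap in the paper's derivation. Methodologically, the paper argues pointwise via a subgradient bound (showing $G'(p_{a_1})(p_{a_1a_2} - p_{a_1}) \geq G(p_{a_1}) - G(p)$, which needs two-signal posteriors at least twice as far from the prior as one-signal posteriors), whereas you test the single supermodularity inequality globally via the convex-order (cut) criterion: your signed measure $\mu$ indeed has total mass zero and first moment zero, the reduction of moderate to weak complements is valid (the only coarsenings of a binary signal on the discrete lattice are itself and $\bot$, and all instances with $A,B$ comparable are vacuous since $A' \succeq A \smallwedge B$), and your breakpoint values $F(q) = 2q(1-q)^2$, $F(2q(1-q)) = 4q(1-q)^3$, and $F(1-q) = (1-q)(2q^2-2q+1)$ check out, with the $q \mapsto 1-q$ symmetry handling the atom ordering. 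What each approach buys: the paper's geometric condition is coarse but structure-agnostic, applying to any pair of signals satisfying the distance hypotheses; your computation is tailored to the XOR structure but exact, establishing universal moderate complementarity for every $q \in (0,1)$ and thereby settling the paper's own remark that ``these may be universal complements for larger ranges of $q$ as well.''
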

We note that these may be universal complements for larger ranges of $q$ as well, and a very interesting question for future work is to characterize the set of universal complements.
This seems to require more work particularly for $E$ with a larger number of outcomes.
For binary $E$ which are equal to a deterministic function of $A_1,\dots,A_n$, it seems possible to relate complementarity to the \emph{sensitivity} of a function in Boolean analysis.

\subsection{Identifying complements}
Our main result here is to identify the following very broad class of complements.
\begin{proposition} \label{prop:indep-comps-bregman}
Independent signals are strong complements in any decision problem where $G$ has a \emph{jointly convex} Bregman divergence $D_G(p,q)$.
\end{proposition}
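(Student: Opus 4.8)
The plan is to route everything through the Bregman-divergence characterization of Definition~\ref{def:subs-bregman} and reduce the whole statement to the joint-convexity hypothesis. By that characterization together with Lemma~\ref{lemma:marginal-contribution-bregman} (and reversing the inequality, since we want complements), the signals are strong complements exactly when, for all $A,B$ on the subsets lattice and all $A'$ on the continuous lattice with $A \meet B \preceq A' \preceq A$,
\[ \E_{a',b} D_G(p_{a'b}, p_{a'}) \leq \E_{a,b} D_G(p_{ab}, p_a) . \]
So I would fix such $A,B,A'$ and prove this single inequality, using joint convexity of $D_G$ as the only property of $G$.

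The heart of the argument is a structural claim: both $p_{a'}$ and $p_{a'b}$ are mixtures of the posteriors $\{p_a\}$ and $\{p_{ab}\}$ respectively, using the \emph{same} weights $w_a := \Pr[A=a \mid A'=a']$. Since $A' \preceq A$ on the continuous lattice, $A'$ is a garbling of $A$ (a function of $A$ and independent randomness), so $E$ is conditionally independent of $A'$ given $A$ and given $A,B$; this yields $p_{a'} = \sum_a w_a\, p_a$ and $p_{a'b} = \sum_a \Pr[A=a \mid A'=a', B=b]\, p_{ab}$. The crucial remaining point is that the conditioning weights coincide, $\Pr[A=a \mid A'=a', B=b] = w_a$, i.e.\ once $A'$ is known, $B$ reveals nothing further about $A$. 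I would prove this by decomposing $A = C \join \tilde A$ and $B = C \join \tilde B$, where $C = A \meet B$ is the block of shared base signals and $\tilde A,\tilde B$ are the disjoint remainders; independence of the base signals makes $C,\tilde A,\tilde B$ (and the garbling noise) mutually independent. Because $A \meet B \preceq A'$, the garbling $A'$ reveals $C$ outright, so the only uncertainty in $A$ given $A'$ is $\tilde A$, and $\tilde B$ is independent of $(\tilde A, A')$ — hence irrelevant to $A$ given $A'$.

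With the common-weight representation established, joint convexity of $D_G$ gives, for each fixed $(a',b)$,
\[ D_G(p_{a'b}, p_{a'}) = D_G\Big(\sum_a w_a\, p_{ab},\ \sum_a w_a\, p_a\Big) \leq \sum_a w_a\, D_G(p_{ab}, p_a) . \]
I would then take expectations over $(A',B)$ and use $\Pr[A'=a', B=b]\,w_a = \Pr[A=a, A'=a', B=b]$; summing out $a'$ marginalizes away $A'$ and leaves $\Pr[A=a, B=b]$, recovering exactly $\E_{a,b} D_G(p_{ab}, p_a)$ on the right and completing the inequality.

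The step I expect to be the main obstacle is the weight-equality claim — showing that conditioning on $A'$ renders $A$ independent of $B$. This is precisely where both hypotheses are used: the genuine independence of the base signals, and the specific garbling structure of the continuous lattice that makes $A'$ a noisy function of $A$ alone. The overlap case $A \meet B \neq \bot$ is what forces the block decomposition above rather than a one-line independence argument; the convexity step and the final marginalization are then routine.
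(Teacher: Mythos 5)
Your proposal is correct and follows essentially the same route as the paper: reduce to the Bregman-divergence inequality $\E_{a',b} D_G(p_{a'b},p_{a'}) \leq \E_{a,b} D_G(p_{ab},p_a)$, show that $p_{a'}$ and $p_{a'b}$ are mixtures of the posteriors $\{p_a\}$ and $\{p_{ab}\}$ with common weights $\Pr[A=a \mid A'=a']$, and conclude by Jensen's inequality using joint convexity of $D_G$. If anything, your block decomposition $A = C \join \tilde{A}$, $B = C \join \tilde{B}$ treats the overlap case $A \meet B \neq \bot$ explicitly, whereas the paper's proof simply invokes independence of $B$ from $(A, A')$ (implicitly taking $A$ and $B$ to be disjoint subsets), so your version is slightly more careful on that point.
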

\begin{proof}
$D_G(p,q)$ is termed jointly convex if it is a convex function on the domain $\Delta_E \times \Delta_E$ (as opposed to the case where it is convex in each argument separately, for instance).
Assume signals are independent; consider any $A,B$ on the subsets signal lattice (incomparable to each other) and any $A'$ on the continuous lattice with $A' \preceq A$.
By Lemma \ref{lemma:marginal-contribution-bregman}, showing strong substitutes is equivalent to showing that
 \[ \E_{a',b} D_G(p_{a'b},p_{a'}) \leq \E_{a,b} D_G(p_{ab},p_{a}) . \]
Use independence and rewrite to this convenient form:
 \[ \E_{b} ~\E_{a'} D_G(p_{a'b},p_{a'}) \leq \E_{b} \E_{a'} \E_{a|a'} D_G(p_{ab},p_{a}) . \]
Note that if $b$ were not independent of $a,a'$, then the inner expectations would require conditioning the distributions of $a$ and $a'$ on the outcome of $b$.
Now, it suffices to prove the following for all $b,a'$:
 \[ D_G(p_{a'b},p_{a'}) \leq \E_{a|a'} D_G(p_{ab},p_{a}) . \]
Now, since $D_G$ is jointly convex, Jensen's inequality will imply this fact if we can just show that $p_{a'b} = \E_{a|a'} p_{ab}$ and $p_{a'} = \E_{a|a'} p_{a}$.
We prove the first equality; the second is exactly analogous but easier.
\begin{align*}
  p(e \mid a',b) &= \frac{p(e,a',b)}{p(a',b)}  \\
    &= \sum_a \frac{p(e,a,a',b)}{p(a',b)}  \\
    &= \sum_a \frac{p(e|a,a',b) p(a,a',b)}{p(a',b)}  \\
    &= \sum_a p(e|a,b) \frac{p(a,a')p(b)}{p(a')p(b)}  \\
    &= \sum_a p(e|a,b) p(a|a') .
\end{align*}
We used Bayes rule and the law of total probability, then eventually used the fact that $p(e|a,a',b) = p(e|a,b)$ (because $a'$ and $e$ are independent conditioned on $a$) and that $a$ and $a'$ are independent of $b$.
\end{proof}
A corollary is that independent signals are complements for the $\log$ scoring rule and the quadratic scoring rule, as their divergences ($KL$-divergence and $L_2$ distance squared) are jointly convex.

On the other hand, some form of this restriction on the decision problem is needed:
\begin{claim}
If $D_G$ is not convex in its second argument, then independent signals are not necessarily complements.
\end{claim}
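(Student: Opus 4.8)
The plan is to exhibit an explicit counterexample: marginally independent signals $A_1,A_2$ together with a decision problem whose associated $G$ has $D_G$ non-convex in its second argument, for which the signals fail even the weak complements inequality of Definition \ref{def:subs-bregman}. Recall from the proof of Proposition \ref{prop:indep-comps-bregman} that joint convexity of $D_G$ entered only through Jensen's inequality applied along the averages $p_{a'b}=\E_{a|a'}p_{ab}$ and $p_{a'}=\E_{a|a'}p_a$; the claim asserts that convexity of $D_G$ in its \emph{second} slot alone is already essential. The natural strategy is thus to construct an instance in which the failure of that inequality is driven purely by second-argument non-convexity.

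Concretely, I would take $A_1,A_2$ to be i.i.d. uniform bits (hence independent), a binary event $E\in\{0,1\}$, and $E=A_1\smallvee A_2$ (the OR). Then the prior is $p=\tfrac34$, the single-signal posteriors are $p_{a_i=1}=1$ and $p_{a_i=0}=\tfrac12$, and the two-signal posteriors equal $1$ except for $p_{0,0}=0$. By symmetry the weak complements condition with $A=A_1$, $B=A_2$, $A'=\bot$ reduces to comparing the marginal value of a first signal, $\E_{a}D_G(p_a,p)=\tfrac12 D_G(1,\tfrac34)+\tfrac12 D_G(\tfrac12,\tfrac34)$, against the marginal value of the second given the first, $\E_{a_1,a_2}D_G(p_{a_1a_2},p_{a_1})=\tfrac14 D_G(1,\tfrac12)+\tfrac14 D_G(0,\tfrac12)$. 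Complements requires the latter to dominate; I aim to choose $G$ so the former is strictly larger.

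The key step is to pick $G$ whose curvature sits just above the prior, near a point $t_0\in(\tfrac34,1)$. Writing $D_G(p,q)=\int_q^p (p-t)\,G''(t)\,dt$, curvature concentrated at $t_0$ makes $D_G(1,\tfrac34)=D_G(1,\tfrac12)$ (each picking up the mass at $t_0$ with the same weight $1-t_0$), while $D_G(\tfrac12,\tfrac34)=D_G(0,\tfrac12)=0$ since those intervals avoid $t_0$. Thus the first-signal marginal value reduces to $\tfrac12 D_G(1,\tfrac34)$ and the second-given-first to $\tfrac14 D_G(1,\tfrac12)$: identical divergences weighted by the different prior probabilities $\tfrac12$ and $\tfrac14$, so the first strictly dominates and weak (hence moderate and strong) complements is violated. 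The cleanest realization is $G(x)=c\,(x-t_0)^+$, for which one computes first-signal marginal value $c(1-t_0)/2$ and second-given-first value $c(1-t_0)/4$. For this $G$ the map $q\mapsto D_G(p,q)$ with $p>t_0$ is a step function with a downward jump at $t_0$, which is plainly not convex, so $D_G$ is not convex in its second argument, exactly as the hypothesis requires, while the independent signals are not complements.

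The main obstacle is tying the two requirements together and presenting them for a bona fide (ideally strictly proper, smooth) decision problem. The piecewise-linear $G$ above has an ambiguous subgradient at $t_0$ and is only weakly convex, so I would smooth it into a strictly convex $G$ with a sharp but finite curvature bump near $t_0$; the strict gap between $c(1-t_0)/2$ and $c(1-t_0)/4$ is robust, so the violation survives smoothing, and via Fact \ref{fact:scoring-char} and Corollary \ref{cor:decision-captured-by-G} this $G$ corresponds to a genuine decision problem. Some care is needed to confirm that after smoothing $q\mapsto D_G(p,q)$ remains non-convex near $t_0$, which follows since $\partial_q^2 D_G(p,q)=G''(q)-G'''(q)(p-q)$ is driven negative by the large positive $G'''$ on the rising side of the bump when $p>q$; this ensures the failure of complements is correctly attributed to second-argument non-convexity rather than to some other defect of $D_G$.
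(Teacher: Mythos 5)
Your proposal is correct and is essentially the paper's own proof: the paper uses the identical counterexample of independent uniform bits with $E$ equal to their OR and the hinge function $G(q)=(q-0.75)^+$ (your $c(x-t_0)^+$ with the kink placed at the prior rather than just above it), computes the same violation of the complements inequality, and likewise remarks that $G$ can be smoothed to strict convexity while the kink forces $D_G$ to be non-convex in its second argument. Your added verification that $q\mapsto D_G(p,q)$ is a downward step at $t_0$ makes explicit a point the paper only asserts, but the route is the same.
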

\begin{proof}
We consider a binary event $E$ and $G(q)$ where $q \in [0,1]$ is a probability that $E = 1$.
A counterexample is to let $G(q) = 0$ for $q \leq 0.75$ and $G(q) = q-0.75$ for larger $q$.
Consider any decision problem associated with this $G$ (for instance, predicting against a proper scoring rule derived from $G$).
The two signals $A$ and $B$ are independent uniform random bits, and $E$ is equal to the binary OR of the bits.
In this case, one can check that $\V(A) = \V(B) = \frac{1}{8}$, but $\V(\bot) = 0$ and $\V(A \join B) = \frac{3}{16}$.
Thus in particular $\V(A) + \V(B) \geq \V(A \join B) + \V(A \meet B)$, so they are not complements.
Note that $G$ may be modified to be strictly convex while preserving the counterexample.
Also note that the sharp ``kink'' in the graph of $G$ at $q=0.75$ forces $D_G$ to be non-convex in its second argument (one can take the first argument to be the prior on $E$, $q=0.75$, choosing the subgradient $G'_{0.75}(q) = q-0.75$).
\end{proof}

%

\subsection{Identifying substitutes}

It was previously known~\citep{chen2010gaming} for prediction markets under the $\log$ scoring rule that conditionally independent signals induced players to reveal their information as early as possible.
The result in a different guise was shown in an algorithmic context, namely, that for conditionally independent signals, .
Both of these turn out to correspond to the fact that signals are always substitutable in the context of the $\log$ scoring rule:
\begin{theorem} \label{thm:log-CI-subs}
For the $\log$ scoring rule, signals that are conditionally independent upon the event $E$ are strong substitutes.
\end{theorem}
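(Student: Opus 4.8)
The plan is to route everything through the generalized-entropy characterization (Definition \ref{def:subs-entropy}). For the $\log$ scoring rule, Example \ref{ex:def-log-scoring-rule} gives $G = -H$ with $H$ the Shannon entropy, so $h = H$. Since Shannon entropy obeys the chain rule, $h(E\mid A') - h(E\mid A'\join B) = H(E\mid A') - H(E\mid A',B) = I(E;B\mid A')$, and likewise with $A$ in place of $A'$. Thus the strong-substitutes condition reduces to showing, for all subsets-lattice signals $A,B$ and every continuous-lattice $A'$ with $A \meet B \preceq A' \preceq A$,
\[ I(E;B\mid A') \;\geq\; I(E;B\mid A) . \]
So the entire theorem becomes a monotonicity property of conditional mutual information under conditional independence.

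First I would decompose $A$ and $B$ around their shared base signals. Write $W = A \meet B$, let $X$ be the base signals in $A$ but not $B$, and $Y$ those in $B$ but not $A$, so that $A = (W,X)$ and $B = (W,Y)$. Conditional independence of the base signals given $E$ makes $W,X,Y$ mutually conditionally independent given $E$; in particular $Y \perp (W,X)\mid E$. Because $A \meet B = W \preceq A'$, the signal $A'$ determines $W$, and because $A' \preceq A$ on the continuous lattice, $A'$ is a randomized (garbling) function $A' = g((W,X),R)$ with $R$ independent of $(E,W,X,Y)$ — this is exactly the continuous-lattice model of garblings. Peeling off the always-known $W$ on both sides via the chain rule gives $I(E;B\mid A') = I(E;Y\mid A')$ and $I(E;B\mid A) = I(E;Y\mid W,X)$.

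The key step is then to use conditional independence to cancel the conditional-on-$E$ terms. Since $Y \perp (W,X)\mid E$, and since $A'$ is a function of $(W,X)$ and independent noise we also have $Y \perp A' \mid E$, so $H(Y\mid E,W,X) = H(Y\mid E,A') = H(Y\mid E)$. Hence
\[ I(E;Y\mid W,X) = H(Y\mid W,X) - H(Y\mid E), \qquad I(E;Y\mid A') = H(Y\mid A') - H(Y\mid E), \]
and their difference is $H(Y\mid A') - H(Y\mid W,X) \geq 0$: because $A'$ is a garbling of $(W,X)$, conditioning on the strictly less informative $A'$ can only raise entropy. Formally, $H(Y\mid W,X) = H(Y\mid W,X,R) \leq H(Y\mid A')$, the equality since $R\perp Y\mid (W,X)$ and the inequality since $A'$ is a function of $(W,X,R)$. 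This yields $I(E;B\mid A') \geq I(E;B\mid A)$, the required strong-substitutes inequality.

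The main thing to get right is the bookkeeping around the shared component $W$ and the garbling randomness $R$, rather than any deep inequality: once $W$ is split off and conditional independence is used to equate the $H(\cdot\mid E,\cdot)$ terms, everything collapses to the elementary fact that garbling increases entropy. It is worth emphasizing that this collapse is special to Shannon entropy, whose chain rule linearizes the generalized-entropy definition into mutual-information terms; for a general $G$ no such cancellation occurs, consistent with Proposition \ref{prop:indep-comps-bregman}, where independent signals are instead \emph{complements} for scoring rules with jointly convex divergence. Finally, the special case $W = \bot$ recovers the earlier $\log$-market observation of \citet{chen2010gaming} and its algorithmic analogue as a corollary.
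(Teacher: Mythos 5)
Your proof is correct and takes essentially the same route as the paper's: both reduce the claim, via the Shannon-entropy characterization, to $I(E;B\mid A') \geq I(E;B\mid A)$, use conditional independence to show the $H(\cdot\mid E,\cdot)$ terms coincide, and conclude from the fact that conditioning on the less informative $A'$ can only increase the unconditional entropy term. Your explicit bookkeeping around $W = A \meet B$ and the garbling randomness $R$ is a more careful (and more carefully signed) rendering of the two identities the paper states in compressed form.
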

\begin{proof}
Suppose signals $A_1,\dots,A_n$ are conditionally independent on $E$.
Let $A,B$ on the subsets signal lattice and $A'$ on the continuous lattice with $A \meet B \preceq A' \preceq A$.
We are to show that $\V(A' \join B) - \V(A') \geq \V(A \join B) - \V(A)$.
Recall that, using the entropy characterization, $\V(A) = -H(E|A) = H(A) - H(E,A)$ by the properties of the entropy function.
\begin{align*}
  \V(A' \join B) - \V(A')
  &= H(A', B) - H(E, A', B) - H(A') + H(E, A') \\
  &= H(B | E, A') - H(B | A').
\end{align*}
Similarly, $\V(A \join B) - \V(A) = H(B | E,A) - H(B | A)$.
Now because $A' \succeq A \meet B$, by conditional independence, $H(B|E,A') = H(B|E,A)$.
Meanwhile, by conditional independence, $H(B|A') \leq H(B|A)$, which completes the proof.
\end{proof}

However, this fact is apparently quite special to the $\log$ scoring rule.
\begin{proposition} \label{prop:quad-ci-not-subs}
Conditionally independent signals are not necessarily substitutes for the quadratic scoring rule (which has a jointly convex Bregman divergence).
\end{proposition}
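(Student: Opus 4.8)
The plan is to exhibit a single information structure with conditionally independent signals that fails even \emph{weak} substitutes for the quadratic rule; since strong $\Rightarrow$ moderate $\Rightarrow$ weak, failing weak substitutes is the strongest possible negative statement. Recall that the quadratic scoring rule corresponds to $G(q) = \sum_e q(e)^2$, whose Bregman divergence $D_G(p,q) = 2\|p-q\|^2$ is jointly convex. On the two-signal subsets lattice $\{\bot, A_1, A_2, A_1 \join A_2\}$, weak substitutes is exactly the submodularity inequality $\V(A_1 \join A_2) + \V(\bot) \le \V(A_1) + \V(A_2)$, so it suffices to build conditionally independent $A_1, A_2$ violating it.

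First I would take $E \in \{1,2,3\}$ with prior $p = (\tfrac14, \tfrac14, \tfrac12)$, and let $A_1 = \I[E=1]$ and $A_2 = \I[E=2]$. These are deterministic functions of $E$, so conditioned on $E$ each is a constant and hence $A_1$ is independent of $A_2$ given $E$; note they are \emph{not} marginally independent (since $\Pr[A_1=1, A_2=1] = 0 \ne \Pr[A_1=1]\Pr[A_2=1]$), so this is a genuinely conditionally-independent structure and not an instance of the independent case already covered by Proposition \ref{prop:indep-comps-bregman}. Next I would evaluate $\V(A) = \E_a \sum_e p_a(e)^2$ at the four lattice points: $\V(\bot) = \tfrac38$; conditioning on $A_1$ gives posterior $(1,0,0)$ with probability $\tfrac14$ and $(0,\tfrac13,\tfrac23)$ with probability $\tfrac34$, so $\V(A_1) = \tfrac23$, and symmetrically $\V(A_2) = \tfrac23$; and since $A_1, A_2$ jointly determine $E$ (the pair $(0,0)$ pins down $E=3$), every joint posterior is a vertex of the simplex and $\V(A_1 \join A_2) = 1$. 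Then $\V(A_1) + \V(A_2) = \tfrac43 < \tfrac{11}{8} = \V(A_1 \join A_2) + \V(\bot)$, so the submodularity inequality is strictly reversed and the signals are strict complements, in particular not substitutes.

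The conceptual point, which also explains where the difficulty lies, is that the obvious candidates --- noisy conditionally-i.i.d.\ observations of a binary $E$ --- turn out to \emph{be} substitutes for the quadratic rule (the bounded logistic ``squashing'' from additive log-odds to posterior means prevents the super-additivity of posterior variance that complementarity requires), so the example must create genuine synergy: here the state $E=3$ is identified only by the \emph{coincidence} $A_1 = A_2 = 0$, so neither signal alone moves the belief much toward $3$ while the two together send it to the vertex $(0,0,1)$, which the convex $G$ rewards sharply. The main obstacle is thus locating this synergistic structure and skewing the prior toward the ``coincidence'' state: at the uniform prior the inequality holds with equality, and one needs $p_3 > \tfrac13$ to tip it strictly to complements. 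Finally I would contrast with the log rule, for which Theorem \ref{thm:log-CI-subs} guarantees the very same conditionally independent signals are strong substitutes --- so the failure is special to the quadratic rule's geometry, as claimed.
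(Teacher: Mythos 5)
Your counterexample is correct and the arithmetic checks out: with $p=(\tfrac14,\tfrac14,\tfrac12)$, $A_1=\I[E=1]$, $A_2=\I[E=2]$, one gets $\V(\bot)=\tfrac38$, $\V(A_1)=\V(A_2)=\tfrac23$, $\V(A_1\join A_2)=1$, and $\tfrac43<\tfrac{11}{8}$ violates submodularity on the subsets lattice, so the signals are not even weak substitutes; since the signals are deterministic functions of $E$ they are (degenerately) conditionally independent, so the hypothesis of the proposition is met. The paper's proof is likewise a direct counterexample but uses a different instance: a \emph{binary} event with $\Pr[E=1]=r=0.9$ and two genuinely noisy signals, i.i.d.\ conditional on $E$, each equal to $E$ with probability $s=0.8$. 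Your construction buys exact one-line arithmetic and a transparent ``coincidence'' mechanism; the paper's buys nondegeneracy --- because your conditional independence is trivial, a skeptical reader might worry the failure is an artifact of that degeneracy, a worry the paper's example forecloses. One substantive error in your surrounding discussion: the parenthetical claim that noisy conditionally-i.i.d.\ observations of a binary $E$ ``turn out to be substitutes for the quadratic rule'' is false --- the paper's own counterexample is exactly such a structure, and the mechanism is the very one you identify (a prior skewed enough that a single signal barely moves the belief while both together move it a lot). That claim is plausible at the uniform prior but fails once the prior is sufficiently asymmetric, which is precisely what makes the paper's example work. This misstatement plays no logical role in your argument, so your proof of the proposition stands.
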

\begin{proof}
The quadratic scoring rule has expected score function $G(p) = \|p\|_2^2$, and therefore $S(p,e) = 2p(e) - \|p\|_2^2$.
Let $E$ be binary with $p(E=1) = r > 0.5$ and let each of $A,B$ be i.i.d. conditioned on $E$, each equal to $E$ with probability $s > 0.5$ and equal to the bit-flip of $E$ otherwise.

One can check that for cases where $r$ is large compared to $s$, for instance $r = 0.9$ and $s = 0.8$, we have $\V(A) + \V(B) \leq \V(A \join B) + \V(A \meet B)$ (recall that $\V(A) = \E_a G(p_a)$).
Hence substitutability is strongly violated. For instance, an agent observing $A$ would prefer to report second in a prediction market after an agent observing $B$, even when both agents are constrained to report truthfully.

Intuitively, what happens in this information structure is that neither the realization of $A$ nor of $B$ on its own is enough to change the rather strong prior on $E$.
However, sometimes, observing both $A$ and $B$ (if they are both $0$) can cause a large change in beliefs about $E$, which means that observing both can sometimes be very valuable.
\end{proof}

\subsection{Designing to create substitutability} \label{sec:surrogate}
We now briefly consider the question of designing a decision problem or scoring rule so as to enforce substitutability of information.
In a game-theoretic setting such as prediction markets, one would like to design mechanisms where information is aggregated quickly; as we have seen, this is essentially equivalent to making information more substitutable.

In an algorithmic setting, the decision problem with which one is faced may be difficult to optimize due to non-substitutability of information.
One would like to construct a ``surrogate'' decision problem for which the information at hand is substitutable, then use algorithms for (Adaptive) \sigsel/ to approximately maximize that surrogate, with some guarantee for the original problem.
This is the approach of \emph{submodular surrogates} in the literature~\citep{chen2015submodular}.
We do not directly consider this problem in this paper, but we hope that these techniques may yield insights or tools that are useful in these problems for future work.

\begin{figure}[ht]
\caption{\textbf{Encouraging information to be substitutes.}
  Here, $E$ is binary, and the $x$-axis plots the probability $q$ that $E=1$, with the black curve plotting $G(q)$, the expected score function corresponding to a proper scoring rule.
  Illustrates an informal case with two signals $A$ and $B$, each of which may be ``lo'' or ``hi'', together with some distributions on $E$: the prior, the posterior conditioned on one of the signals being ``lo'' or ``hi'', and the posterior conditioned on both signals being ``lo'' or ``hi''.}
\label{fig:good-bad-G}
\begin{subfigure}{1.0\textwidth}
\centering
\includegraphics[width=0.7\textwidth]{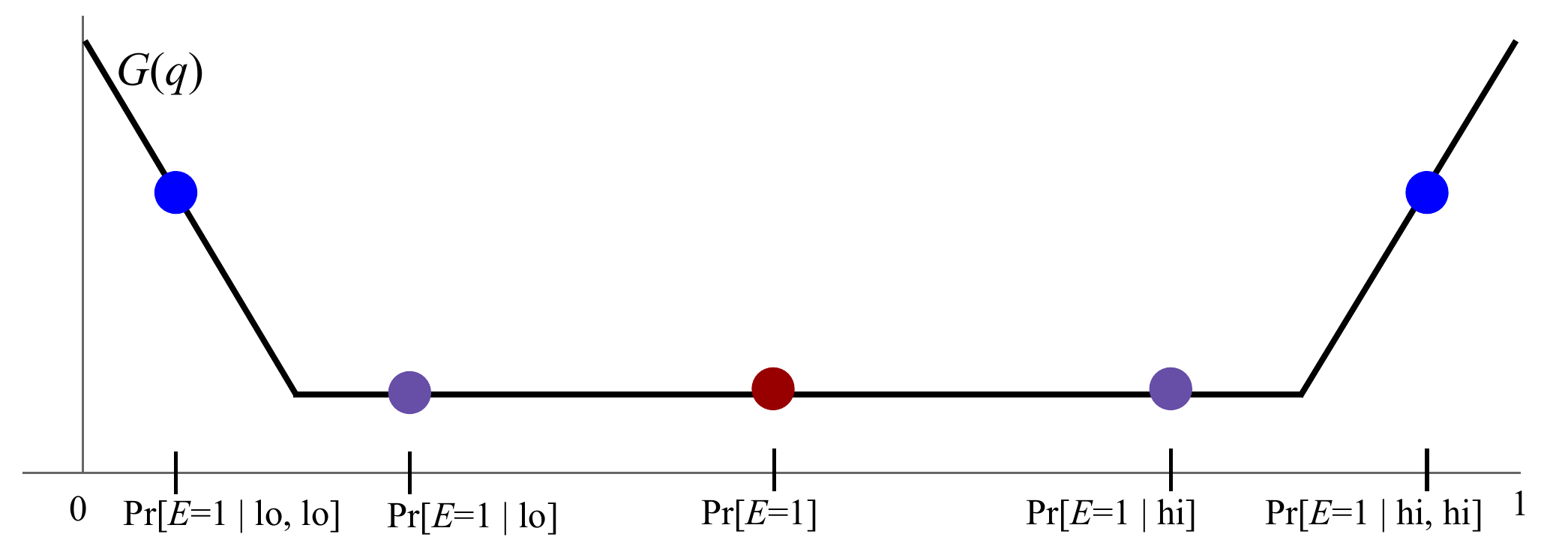}
\caption{\textbf{A ``bad'' choice of $G$.}
  Here, many information structures will be complements because the value of any new information is very small, yet additional information on top of that becomes valuable.
  In particular, for each signal, say $A$, $\E_a G(p_a) = G(p)$ where $p$ is the prior (purple points).
  Yet the expected value of both signals, $\E_{ab} G(p_{ab})$ (blue points), is larger than $G(p)$.}
\end{subfigure}

\vspace{16pt}

\begin{subfigure}{1.0\textwidth}
\centering
\includegraphics[width=0.7\textwidth]{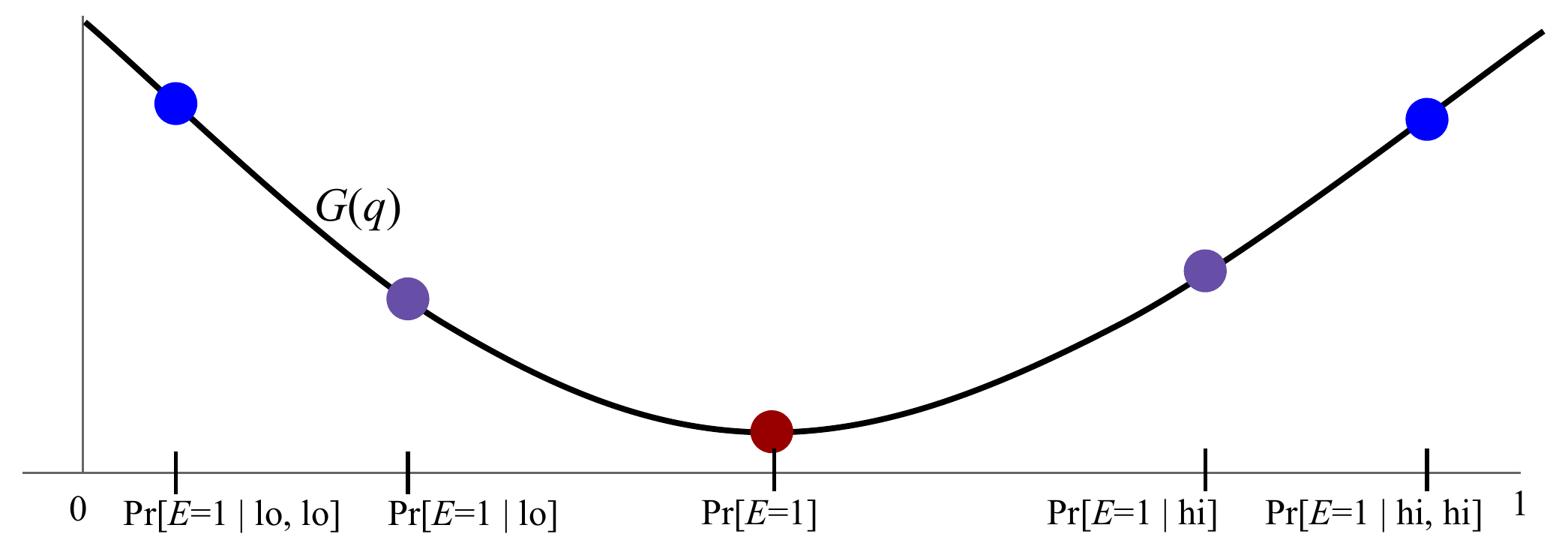}
\caption{\textbf{A ``good'' choice of $G$.}
  Here, many information structures will be substitutes because the value of new information is initially high around the prior.
  One way to see this is that the Bregman divergence becomes low far from the prior; recalling that $\E_{a,b} D_G(p_{ab},p_a)$ is the expected difference in score between predicting $p_{ab}$ versus just predicting $p_a$, this implies that there is a small expected gain from observing $B$ once one knows $A$ (in other words, substitutability).}
\end{subfigure}
\end{figure}

An important conclusion of this paper is that substitutability, or lack thereof, tends to arise from a combination of two factors:
\begin{enumerate}
 \item The \emph{distances between beliefs} due to the information structure.
       If updating on additional information tends to spread beliefs farther apart, then that information tends to be complementary.
       If it tends to make smaller changes in beliefs, it tends to be substitutable.
 \item The \emph{curvature} of the expected utility function $G: \Delta_E \to \R$ associated to the decision problem.
       Highly-curved regions correspond to high marginal value of information to beliefs in those regions.
\end{enumerate}
We illustrate some of this intuition, with an eye toward design of substitute-encouraging $G$, in Figure \ref{fig:good-bad-G}.
By designing a $G$ that is highly curved close to the prior, then gradually less curved farther from the prior, one increases marginal value of information near the prior and decreases it (relatively) further away; this increases substitutability.

\paragraph{A formalization of the problem.}
Given an information structure consisting of a prior $E$ and signals $A_1,\dots,A_n$, (when) can we construct a decision problem $u$ such that the signals are substitutes?
Noting that a trivial decision problem, \emph{e.g.} with one action, technically satisfies substitutes, we will seek decision problems that satisfy a ``nontriviality property''.
\begin{definition} \label{def:somewhat-strict}
Substitutes are \emph{somewhat strict} if the marginal value is always diminishing on the corresponding lattice, and sometimes strictly diminishing.
Analogously, complements are \emph{somewhat strict} if marginal value is always increasing and sometimes strictly increasing.
\end{definition}
Unfortunately, our results on universal S\&C in Section \ref{sec:universal} essentially imply that this is not always achievable.
\begin{proposition} \label{prop:always-design}
For some information structures, it not possible to design a decision problem giving rise to weak, somewhat strict substitutes; the same holds for complements.
However, if the information structure does not contain a mixture of trivial substitutes, then it is possible to design a decision problem under which signals are weak, somewhat strict complements.
\end{proposition}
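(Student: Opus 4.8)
The plan is to dispatch the two negative claims using universal structures that the excerpt already exhibits, and to prove the positive claim by an explicit construction of a convex $G$ (hence, by Corollary \ref{cor:decision-captured-by-G}, of a decision problem). For the first negative claim, recall from Corollary \ref{cor:univ-comps-example} that signals $A_1,A_2$ that are independent with $\Pr[A_i=1]=q\in[0.25,0.75]$ and $E=A_1\oplus A_2$ are universal (moderate, hence weak) complements: for \emph{every} decision problem their marginal value is weakly increasing on the subsets lattice. Somewhat strict substitutes would require, for some designed $G$, that the substitutes inequality hold everywhere and strictly at some comparison; at that comparison universality forces the reversed (complements) inequality, a contradiction, so no decision problem makes these signals somewhat strict substitutes. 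The complements version is symmetric: the trivial substitutes $A_1=A_2$ are universal weak substitutes (after one signal the posterior never moves, so the substitutes inequality holds for every $G$), which rules out somewhat strict complements for them by the same argument.

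For the positive claim, fix an information structure $E,A_1,\dots,A_n$ with no mixture of trivial substitutes; write $T=\{1,\dots,n\}$ and let $p_S$ denote the posterior on $E$ induced by a realization of the signals in $S$. I would set
\[ X \;=\; \mathrm{conv}\{\, p_S : S\subsetneq T,\ \text{over all realizations}\,\}, \]
the convex hull of all posteriors on \emph{proper} subsets of the signals, and take $G(q)=\mathrm{dist}(q,X)$, the Euclidean distance from $q$ to the convex set $X$. This $G$ is convex on $\Delta_E$, vanishes exactly on $X$, and is nonnegative. Since every proper-subset posterior lies in $X$, we get $\V(S)=\E\,G(p_S)=0$ for all $S\subsetneq T$, while $\V(T)=\E\,\mathrm{dist}(p_T,X)\ge 0$.

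Next I would verify weak complements by a short case analysis of $\V(B\join A')-\V(A')\le \V(B\join A)-\V(A)$ for $A\meet B\preceq A'\preceq A$, using that $\V$ is $0$ on every proper subset and equals $\V(T)$ on $T$. If $A\join B\subsetneq T$ both sides are $0$; if $A\join B=T$ with $A\subsetneq T$ the right side is $\V(T)$ and the left (with $\V(A')=0$) is either $0$ or $\V(T)$, so the inequality holds; and the only apparently dangerous case $A=T$ is neutralized because $A\meet B=B\preceq A'$ forces $A'\join B=A'$, making both marginals equal. Thus the signals are weak complements, and strictness occurs precisely when $\V(T)>0$: for instance $A=\{i\}$, $B=\{j\}$, $A'=\bot$ gives left side $0$ and right side $\V(T)$.

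It therefore remains to show $\V(T)>0$, equivalently that some full-information posterior lies strictly outside $X$, equivalently $X\subsetneq Y$ where $Y=\mathrm{conv}\{p_{a_1,\dots,a_n}\}$. This is where the hypothesis enters and is the main obstacle: I must show that $X=Y$ (every full posterior is a convex combination of proper-subset posteriors) forces the structure to be a mixture of trivial substitutes. The key observation is that each extreme point of $Y$ is itself a full posterior, so—being extreme—it can only be represented using proper-subset posteriors that already equal it; hence on a positive-probability event some strict subset of the signals already pins down the full-information belief, which I would assemble into a trivial-substitutes component of the prior following the convex-hull characterization of Proposition \ref{prop:subs-mix-trivial}. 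Matching this extreme-point degeneracy picture cleanly to the precise ``mixture of trivial substitutes'' definition is the delicate step; the construction of $G$ and the weak-complements verification are routine by comparison.
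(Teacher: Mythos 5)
Your handling of the two negative claims is correct and is exactly the paper's route: the XOR structure of Corollary \ref{cor:univ-comps-example} is universally weak complements and hence can never be made somewhat strict substitutes, and the symmetric argument with $A_1=A_2$ (universal weak substitutes) rules out somewhat strict complements. The problem is the positive claim, where your construction genuinely fails. You take $X$ to be the convex hull of the posteriors on \emph{all proper subsets} of signals and set $G=\mathrm{dist}(\cdot,X)$, so that $\V$ vanishes everywhere except possibly at the top element $T$; your supermodularity case analysis is then clean, but strictness now requires $\V(T)>0$, i.e.\ that some full-information posterior escape $X$. That requirement is strictly stronger than the hypothesis. Concretely, take $A_1,A_2,A_3$ independent uniform bits and $E=A_1\oplus A_2$: no realization of any single signal moves the posterior at all, so the structure contains no trivial-substitutes component, yet the proper subset $\{1,2\}$ already produces every full-information posterior, so $X=Y$ and your $G$ gives $\V\equiv 0$ --- weak complements, but nowhere strict. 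So the ``delicate step'' you flag ($X=Y$ forces a mixture of trivial substitutes) is not merely unproven; it is false once $n\ge 3$.

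The paper avoids this by placing the zero set of $G$ at the bottom of the lattice rather than just below the top: in Proposition \ref{prop:subs-mix-trivial} one takes $X$ to be the hull of the \emph{singleton} posteriors $\{p_{a_i}\}$ (recursing to pairs, triples, \dots\ if needed), so that $\V(\bot)=\V(A_i)=0$ while some multi-signal posterior falls outside $X$ with positive probability unless every singleton already pins down the joint posterior --- which is exactly the trivial-substitutes degeneracy the hypothesis excludes. In other words, the strict increase in marginal value must be sought at the lowest lattice level at which posteriors escape the hull of the level below, not at the top. If you want to keep your distance-function $G$ (which does make the ``weakly increasing everywhere'' verification cleaner), apply it with $X=\mathrm{conv}\{p_{a_i} : i, a_i\}$, or with the hull of posteriors on subsets of size less than $k$ for the least $k$ at which some size-$k$ posterior escapes; you would then still owe the supermodularity check at the intermediate levels of the lattice, which your all-proper-subsets choice was designed to trivialize.
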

\begin{proof}
We showed in Corollary \ref{cor:univ-comps-example} that nontrivial universal complements exist; for these structures, for every decision problem, they are weak complements, which implies that they cannot be somewhat strict substitutes.

Meanwhile, in Proposition \ref{prop:subs-mix-trivial}, we showed that unless the information structure contained a mixture with trivial substitutes, one can construct a decision problem satisfying weakly and sometimes strictly increasing marginal value.
\end{proof}
A crucial direction for future work is to better characterize for what information structures it is always possible to design for substitutability.

%

.

\clearpage

\section{Discussion, Conclusion, and Future Work} \label{sec:conclusion}
\subsection{Contributions and discussion}
This paper makes two types of contributions.
One type is to propose definitions of informational substitutes and complements and provide evidence that these definitions are natural, useful, and tractable.
The other type is to prove concrete results for game-theoretic and algorithmic problems.
These two contributions are interrelated: The concrete results rely on the definitions and characterizations of S\&C, while evidence in favor of the definitions comes from their utility in proving the concrete results.

We begin with a summary of the contributions in terms of concrete results, then summarize the evidence in favor of our proposed definitions of informational S\&C.
We then discuss future work in a variety of directions.

\paragraph{Contributions independent of S\&C}
The first application in this paper was to identify essentially necessary and sufficient conditions for the two types of prediction market equilibria that have been most studied: The ``good'' case where all traders rush to reveal information as soon as possible, and the ``bad'' case where all traders delay all revelation as long as possible.
This result broadly generalizes previously-known special cases~\cite{chen2010gaming,gao2013jointly}.
These conditions corresponded respectively to our definitions of informational substitutes and complements.
Our other results regarding these definitions have implications and applications for the market setting.
We gave some tools and approaches for identifying and designing substitutability, which (we showed) corresponds to the main goal of a market designer: encouraging immediate information revelation.

We also gave some additional game-theoretic applications in other settings involving strategic information revelation and aggregation.
We hope that future work can explore more connections.

The second main application was to the algorithmic problem of information acquisition, which we formalized as the \sigsel/ problem.
This is a very natural and general problem, capturing any sort of information acquisition scenario.
We showed that substitutes imply efficient approximation algorithms via submodular maximization, offering a unifying lens or perspective on a variety of literature utilizing this approach.
We also saw matching hardness in general when substitutability is not present.
These results give hope that substitutable structure may be leveraged in future algorithmic investigations of related problems.

\paragraph{Evidence for informational S\&C}
Informational substitutes require much more background and work to define than their counterparts for goods, as discussed in Section \ref{sec:develop}.
This leads to many ways in which the definition can ``get it wrong''.
However, there are several reasons to believe that the definitions proposed in this paper are substantially on the right track.
The three goals we address are showing that they are \emph{natural}, \emph{useful}, and \emph{tractable}.
The evidence for these is summarized below; however, there is room for further investigation and discussion on the subtleties of the proposed definitions.
As additional future applications are investigated, it may be found that tweaks in the definition improve it along these axes.

\paragraph{Natural.}
The following evidence suggests that the definitions of informational S\&C presented in this paper are natural.
\begin{itemize}
 \item They are defined in terms of sub- and super-modular lattice functions, sharing this property with widely accepted definitions of S\&C for items.
       The weak, moderate, and strong versions of the definitions correspond to lattices have natural interpretations both game-theoretically -- strategies that are binary, deterministic, or randomized -- and algorithmically, optimizing over subsets, deterministic ``poolings'', or randomized ``garblings''.
 \item They can be alternatively characterized information-theoretically, by diminishing (increasing) amount of information revealed by a signal.
 \item They can be alternatively characterized geometrically, by the distance a belief is moved by a posterior update on one signal given the other.
 \item S\&C respectively characterize ``all-rush'' and ``all-delay'' equilibria in prediction markets.
       Hence, our definitions (or very close variants) seem unavoidable when studying equilibria of these basic models of strategic play.
 \item The algorithmic complexity mirrors the now-familiar story in the case of items: approximately optimizing over substitutes can be done efficiently (at least in the case of weak substitutes), while for the general case or complements the problem is difficult.
\end{itemize}

\paragraph{Useful.}
The usefulness of the definitions, so far, is reflected by the results summarized above.
In particular, they allowed resolving an open problem on strategic aggregation in prediction markets that was previously solved only for particular cases.

\paragraph{Tractable.}
Our definition of informational S\&C refers to a very general setting of decision problems and information structures.
It is not initially clear that a definition that broadly captures S\&C can also be amenable to analysis or intuition.
This particularly presents a challenge for our definitions because they depend on both the decision problem and the information structure, each of which can be a very complicated object.

We presented a convexity-based approach, namely, studying the convex \emph{expected-utility} function $G$, and showed that it both captures geometric intuitions for understanding S\&C and gives useful analytic tools.
One of these is prediction and the theory of proper scoring rules, which gives a way to construct a decision problem from any such convex $G$.
We used these tools to give some example broad classes of informational S\&C as well as positive and negative results on designing decision problems to encourage substitutability or complementarity.
We also gave intuitive definitions of informational S\&C from the perspectives of information theory (generalized entropies) and geometry (divergences).
These found some formal applications already in this paper, \emph{e.g.} a convexity condition on the divergence definition of substitutes implies that all independent signals are substitutes.

\subsection{Future work: game theory}
An immediate direction is to extend our results to broader models of financial markets.
We believe that analogous results are likely to hold.
It would also be ideal, albeit esoteric, to understand the distinction between when Bayes-Nash equilibria and perfect Bayesian equilibria exist or are essentially unique.
In a Bayes-Nash equilibrium that is not perfect Bayesian, some trader essentially makes a ``threat'' that is ``not credible''.
Can such scenarios exist when signals are substitutes or complements?

Many broader questions about S\&C have a direct implication on markets via our results, and we outline some of these directions below.

For more general Bayesian games, the implications of our results are not yet clear.
With multiple players, for instance, it is no longer true that more information always helps.
It may be most tractable to look at special cases such as auctions or signaling games, although those seem to have a significant component of \emph{strategic} substitutes as well.
One natural direction is common-value auctions, where in one case~\citep{milgrom1982value} informational substitutes have been explicitly used.

\subsection{Future work: algorithms}
There are many potential algorithmic questions raised by these definitions.
One is to consider existing problems, such as algorithmic Bayesian persuasion~\citep{dughmi2015algorithmic} or signalling in auctions, through the lens of substitutability.
However, such problems are already significantly more complex because of the interaction between the players, which is largely absent or abstracted out so far in this work.
We hope that further investigation will uncover the connections to such settings.
Meanwhile, one might ask whether, for instance, variants of \sigsel/ on the discrete or continuous lattice are well-motivated in any settings and, if so, whether they are tractable.

More specifically and technically, one might ask whether the oracle model presented in this paper the best way to represent a decision problem, or if there is some other natural alternative beyond those we presented.
Perhaps it is possible to obtain positive results with weaker or substantially different assumptions on the oracles.

Another group of algorithmic questions relates to better understanding the definitions themselves.
One concrete question is the following: Given a decision problem, perhaps in the oracle model (or some other well-justified model), determine whether signals are substitutes, complements, or neither.
Or more strongly, given a decision problem and two signals $A,B$, find
 \[ \arg\min_{Q \preceq A} \V(Q \join B) - \V(Q) . \]
An algorithm for this problem and multi-signal generalizations would help identify substitutes or complements, and would more generally solve \emph{e.g.} the problem of how to trade in general prediction markets where signals are neither substitutes nor complements, but something in between.
Many more possible algorithmic questions are likely to arise as investigation continues.

\subsection{Future work: structure of S\&C}
A straightforward direction for future work is to identify further classes of substitutes and complements, including natural classes of decision problems and information structures that together produce S\&C.
This question also bears directly on the design of prediction markets.

The second straightforward direction is how to ``design for substitutability''.
We believe that our results give a good start in this direction, but many questions remain.
For instance, can we characterize all universal substitutes and complements? (This characterizes cases where we cannot design for substitutability or complementarity.)
And, in cases where we can make signals strict substitutes, can we design some natural and computationally efficient method for doing so?
We hope that the geometric intuitions in this paper provide a starting point for addressing such questions.

Finally, one could be interested in exploring variants of the definitions taking the same approach but with some piece of the puzzle substantially altered.
Concretely, one could imagine a very risk-averse agent with an adversarial view of nature and some more combinatorial representation of information.
Can one formulate analogous definitions in cases like these?

\clearpage

\bibliographystyle{plainnat}
\bibliography{citations}

\clearpage

\appendix
\section{Other Related Work} \label{sec:related}
We first survey related work on substitutes and complements for information (and in general).
Then, we discuss work relating to information aggregation in markets and other game-theoretic settings.
Finally, we describe algorithmic work on information acquisition (particularly in ``submodular'' settings).

\subsubsection{Substitutes and complements}
The notion that pieces of information may exhibit substitutable or complementary features is certainly not a new intuition; but up until this work, it seems to have remained mainly an intuition.
There seem to be few attempts at formalizing a general definition or even special cases.
The only work we know of in this direction is \citet{borgers2013signals}, which inspires our approach but also has significant drawbacks and limitations.
We extensively discuss \citet{borgers2013signals} in in Section \ref{sec:develop}, where we contrast it with our definitions.

Two works in (algorithmic) game theory touching on informational S\&C are \citet{jain2009designing} and \citet{milgrom1982value}; however, these do not propose general definitions.
They are discussed below.

Somewhat related is the game-theoretic notion of \emph{strategic substitutes and complements}~\citep{bulow1985multimarket}.
Roughly, these concepts refer to cases where a change in action by one player in a game results in a response from another that is similar to the first player's (in the case of complements) or offsetting (in the case of substitutes).
This notion seems relatively unrelated to our definitions of informational S\&C.
For one, our definitions focus on the case of a single decisionmaker or single optimization problem.
Also, strategic S\&C can be defined in complete-information games, where there are no signals or information of any kind.
However, perhaps future work can discover classes of games in which the notions are more closely related.

\paragraph{Valuations for items.}
In contrast to the lack of literature on informational S\&C, in the case of valuation functions for \emph{items}, substitutability and complementarity have been put on firm formal foundations, with strong connections between substitutes and existence of equilibria in markets for goods or matching markets~\citep{kelso1982job,roth1984stability,gul1999walrasian,hatfield2005matching,ostrovsky2008stability}.

In computer science, the literature on optimization has produced strong positive results leveraging substitutable structure.
The main example of this is submodularity, which has been connected to computational tractability throughout theoretical computer science and machine learning~\citep{krause2012submodular}.
Submodularity, in addition to having nice algorithmic properties, is also recognized as a natural model of substitutes in (algorithmic) game theory~\citep{lehmann2001combinatorial}.

This paper draws parallels to such research because our definition of informational substitutes turns out to correspond formally to submodular valuation functions.
We show market equilibrium results of a similar flavor, but for ``information markets''; and we also show algorithmic results of a similar flavor for the problem we call \sigsel/, which is the problem of selecting an optimal set of signals subject to constraints.

However, we emphasize that informational S\&C pose challenges that do not arise in the item setting:
\begin{itemize}
 \item Items are modeled as having an \emph{a priori} innate value.
       Information is not; its value must arise from context.
 \item Items are modeled as being \emph{atomic} or indivisible, with no inner structure.
       In contrast, information, modeled as \emph{e.g.} random variables, is defined by inner structure: the probability distributions from which it is drawn.
 \item The relationship between items is completely determined by the valuation function in the context of interest.
       Concretely, when modeling a set function $f:2^{\{1,\dots,n\}} \to \R$, it is usually not the case in the model that items $3$, $7$, and $k$ have a special relationship that has an impact on allowable forms of $f$.
       In contrast, in a value-of-information setting, the value of observing a triple of signals cannot be completely arbitrary; it must depend somehow on correlations between these signals.
\end{itemize}
We give an in-depth description of how our definitions overcome these challenges in Section \ref{sec:develop}.

\subsubsection{Information in markets}
The ``efficient markets hypothesis'' (EMH) refers to a large set of informal conjectures about how quickly information is revealed and incorporated into the prices of financial instruments in markets.
For our purposes, a ``financial instrument'' may be formalized as a \emph{security} with an uncertain value, which will be revealed after the close of the market; each share purchased of that security may then be redeemed for a payout equal to this revealed true value of the security.
Concretely, one can picture a binary security that has value $1$ if a certain event occurs and $0$ otherwise.

\citet{fama1970efficient} discussed formalizations of the EMH with varying levels of strength.
\citet{kyle1985continuous} defined a formal model of financial traders in markets, involving both informed traders and ``noise'' traders who are uninformed and essentially trade randomly; this is the current most common model of such trading in the economics literature.
However, formal progress on this question was very slow until \citet{ostrovsky2012information} showed that, in equilibrium of this model, information is always \emph{eventually} aggregated under a certain condition on securities.
Formally, this means that the price of a security converges to its \emph{ex post} expected value conditioned on the information held by all traders.
\citet{ostrovsky2012information} considered both finite-round and infinite-round markets (with and without discounting), and considered \emph{prediction markets} (described below) as well as Kyle's model.
One subtlety that may be worth pointing out is that, in an infinite-round market without discounting, it is not known that a Bayes-Nash equilibrium always exists, while this is known for the other cases.
\citet{ostrovsky2012information} showed in all cases that aggregation occurs in any equilibrium, under his ``separability'' condition on securities.
The condition essentially ensures that, if information is not yet aggregated, then some participant has information that can be used to make a ``useful'' trade, \emph{i.e.} one that makes money and (therefore) intuitively makes ``progress'' toward aggregation.

This showed that information is eventually aggregated; but \emph{how} is it aggregated?
It would be ideal if traders rush to reveal their information, but very bad if they ``delay'' as long as possible.

Such questions are difficult to address in the economic model of financial markets of \citet{kyle1985continuous,ostrovsky2012information}.
Research on the dynamics of strategic trading has made some progress in the model of prediction markets.
These are simplified financial markets in which there are no uninformed ``noise'' traders and in which participants generally interact one-at-a-time with a centralized market maker, who sets prices via a transparent mechanism and subsidizes the market.
Specifically, research on strategic play focuses, as does this paper, on the \emph{market scoring rule} design of prediction markets~\citep{hanson2003combinatorial}, which are based on proper scoring rules (see \emph{e.g.}~\citet{savage1971elicitation,gneiting2007strictly}).
However, we note that market scoring rule markets are equivalent, in a strategic sense, to more traditional-looking ``cost function'' based prediction markets~\cite{abernethy2013efficient}.

The following was known about equilibrium in such markets prior to this work, in addition to \citet{ostrovsky2012information}.
\citet{chen2007bluffing} studied the $\log$ scoring rule and a particular type of information structure among the traders, namely, that each trader's ``signal'' (information) was distributed independently of all others' conditional on the true value of the security.
(For a simple example of such a structure, suppose that the true value of the security is distributed randomly in some way; then each trader observes this true value plus independent noise.)
In this conditionally-independent case, \citet{chen2007bluffing} showed that the ``ideal'' outcome does indeed occur in equilibrium: Traders all rush to reveal their information as early as possible.

Subsequently, \citet{dimitrov2008non} also studied the $\log$ scoring rule but considered other signal structures, particularly independent signals (that is, unconditionally independent). 
For a simple example, suppose each trader observes an i.i.d. random variable, and the true value of the security equals the sum of all the traders' observations.
\citet{dimitrov2008non} showed that, in this case, the ``ideal'' outcome does not occur.
However, when assuming discounting and an infinite number of trading rounds, \citet{dimitrov2008non} showed that information is ``eventually'' aggregated.
This result was generalized to any scoring rule (not just $\log$) by \citet{ostrovsky2012information}.
\citet{chen2007bluffing} and \citet{dimitrov2008non} were combined and extended in \citet{chen2010gaming}.

Then, \citet{gao2013jointly} revisited the $\log$ scoring rule in finite-round markets and considered the information structure where all traders signals are unconditionally independent.
In this case, \citet{gao2013jointly} showed that the ``worst possible'' outcome occurs in equilibrium: Traders all delay as long as possible before making any trades based on their information.
This casts doubt on the efficient markets hypothesis and suggests, taken in tandem with \citet{chen2007bluffing}, that structure of information is crucial to determining strategic behavior.

\subsubsection{Other game-theoretic settings.}
In almost any Bayesian extensive-form game, the question of information revelation is relevant.
While we hope that future work may expand the set of topics to which informational substitutes and complements may apply, in this section, we will focus on the most closely related works, those that directly touch on S\&C, or those for which we show results in Section \ref{sec:game-other}.

The model of prediction markets is in some sense the simplest model of strategic information revelation in dynamic settings.
Thus, it is natural that settings such as crowdsourcing contests are closely related.
One such model of crowdsourcing and machine-learning contests appears in \citet{abernethy2011collaborative,waggoner2015market}.
In that framework, participants iteratively provide data sets or propose updates to a central machine-learning hypothesis, being rewarded for the improvement they make to performance on a test set of data.
A prediction market can be seen as a special case.
Those works did not address strategic equilibria of the mechanisms they proposed.

Another related setting is the model of question-and-answer forums in \citet{jain2009designing}.
That paper introduced a model where an asker has some value function for ``pieces of information'', which are not modeled directly.
Participants can strategically choose when to reveal information.
\citet{jain2009designing} identified ``substitutes'' and ``complements'' cases in which participants rush (respectively, delay) to provide answers.
However, \citet{jain2009designing} did not provide any endogenous model for asker utility or information; the information was modeled almost as discrete items without structure.
Hence, it was not clear from that work under what circumstances (if any) pieces of information would satisfy their substitutes and complements conditions.

In Section \ref{sec:game-other}, we describe implications of our work for results in the above two settings.

More broadly, there are large literatures dealing with signalling in games~\citep{spence1973job}, or the more recent Bayesian persuasion literature~\citep{kamenica2011bayesian,gentzkow2011competition,dughmi2015algorithmic}.
While the models and questions in this area are related, to our knowledge there is no immediate connection.
It may be that future work uncovers connections of informational S\&C to this field, but the literature on persuasion and signalling in games does not seem to have developed notions of informational substitutes nor tools for addressing the applications considered in this paper.

Another significant line of work has considered signalling in auctions, \emph{e.g.} \citet{milgrom1981rational,milgrom1982value}.
The only paper in this literature that we know to explicitly formalize a notion of substitutable information is \citet{milgrom1982value}, which considers a common-value auction with two bidders, one informed and one uninformed, with two signals, each a real-valued random variable with some positive affiliation with the item's value.
The authors define a notion of substitutes specific to their context and show that it implies intuitive properties for this asymmetric-information auction setting.
In future work, it would be interesting to see if there is a formal connection of our definitions to their setting.

\subsubsection{Algorithms for information acquisition.}
The value of information to a decision problem was formally introduced by \citet{howard1966information} and is also closely related to the classic problem in statistics of Bayesian experimental design~\citep{lindley1956measure}.
Given this perspective, it is natural to consider the problem of acquiring information under constraints.
This problem has historically been investigated from many different angles, \emph{e.g.}~\citep{mookerjee1997sequential}.
It is known to be very computationally difficult in some general settings~\citep{krause2009optimal}.

A successful recent trend in this area is to leverage submodular structure to apply efficient approximation algorithms.
For instance, an approximation ratio of $(1-1/e)$ is obtained by the greedy algorithm for maximizing a monotone submodular function subject to a cardinality constraint.
This algorithm or related submodular maximization algorithms were utilized by \citet{krause2005optimal,guestrin2005near}, and a variety of literature since; see \citet{krause2012submodular} for a survey.
In cases where the information is acquired not in a batch but adaptively over time, based on the information observed so far, the problem (and/or solution) is known as \emph{adaptive submodularity}~\citep{asadpour2008stochastic,golovin2011adaptive}.

\section{Defining S\&C: intuition, challenges, and historical context} \label{sec:develop}
In this section, we describe the intuition, justification, and historical context behind our proposed definitions of informational substitutes and complements (S\&C).
The formal definitions are presented more concisely starting in Section \ref{sec:defs}, to which the reader may skip if uninterested in background.
We focus on the substitutes case; the complements case, where not mentioned, is analogous.

The only prior attempt at definitions of S\&C of which we are aware is \citet{borgers2013signals}, which will be introduced shortly.
The setting and approach in that paper are very appealing, and they inspire our approach in this paper.
However, there are also key drawbacks that motivate us to diverge from their approach in several ways.
We will describe in context the drawbacks and our approach to overcoming them.

\subsubsection{Defining the value of information}
Defining informational S\&C turns out to be significantly more work than defining substitutes and complements for valuation functions over items, starting from the very beginning: How does \emph{value} arise in the first place?
It is generally accepted to model a valuation function over a set $U$ of goods as some $f: 2^U \to \mathbb{R}$, without justifying how $f(S)$ arises (perhaps the items are yummy, shiny, or have other desirable qualities).
However, outside of curiosity, it seems that information's innate value is more questionable; and furthermore, should depend on its probabilistic structure.
For instance, two signals may be independent or may be highly correlated.
How does this relate to their value?

A solution arises from the observation that information's value is often determined by the \emph{use} to which the information may be put.
As in \cite{howard1966information,borgers2013signals}, for us the value of information arises from its utility in the context of a decision problem.
We consider a Bayesian model of information in which there is a prior distribution on the event $E$ of interest and on the possible pieces of information, called \emph{signals}. 
In a decision problem, the agent observes some signals and then makes a decision $d \in \D$, after which the outcome $E=e$ is revealed and the agent's utility is $u(d,e)$.
Thus (following \cite{borgers2013signals}), we define a \emph{value function} $\V^{u,P}$ on signals, for a given decision problem $u$ and prior $P$, by the expected utility of the agent given that she first observes $A$, then takes the optimal action based on that information. Formally,
 \[ \V^{u,P}(A) = \E_{a\sim A} \left[ \max_{d\in\D} \E_{e\sim E} \left[ u(d,e) \mid A=a\right] \right] . \]
Thus, one can compare, for instance, the value $\V^{u,P}(A_1)$ for observing signal $A_1$ alone versus the value $\V^{u,P}(A_2)$ for observing signal $A_2$ alone.
Note that the \emph{marginal value} for observing signal $A_2$, given that the agent will already observe signal $A_1$, is $\V^{u,P}(A_2 \join A_1) - \V^{u,P}(A_1)$, where the notation $A_2 \join A_1$ means to observe both signals (this notation will be explained shortly).

\subsubsection{The approach of B\"{o}rgers et al.}
\citet{borgers2013signals} proposes the following definition: Given an event $E$, two signals $A_1$ and $A_2$ are substitutes if for \emph{every decision problem} (and associated value function $\V^{u,P}$),
 \[ \V^{u,P}(A_1) + \V^{u,P}(A_2) \geq \V^{u,P}(A_1 \join A_2) + \V^{u,P}(\bot) , \]
where $A_1 \join A_2$ denotes observing both signals, while $\bot$ denotes not observing any signal and deciding based on the prior.

This definition has two properties that might seem attractive, but turn out to be fatal in many cases of interest: (a) it does not depend on the particular decision problem, but only on how $A_1$, $A_2$, and $E$ are correlated; (b) it depends only on the values $\V^{u,P}(A_1)$, $\V^{u,P}(A_2)$, of both, and of neither, and does not depend on any internal structure of $A_1$ and $A_2$.
We explain why these properties are problematic and how our definition will differ.

\paragraph{a. Lack of dependence on the decision problem.}
The problem here is that in a majority of cases, two signals can turn out to be either substitutes or complements depending on the decision problem at hand.
For example, whether two weather observations are substitutes or complements depends on what decision is being made.
Temperature and dew point might be considered complements when deciding whether to bring one's umbrella.\footnote{Accepting the proposition that knowledge of both gives a much better prediction of rain than knowledge of either alone.}
But when deciding, for instance, how warmly to dress, these two measurements might be considered substitutes since, given one of them, the other gives relatively less information about the comfort level of warm or cool clothing.
For another example: To a trader deciding whether to invest in a technology index fund (that is, a stock whose value follows that of the general tech sector), the share prices of two given tech companies may be substitutable information, since each gives some indication of the current value of tech stocks.
But to a trader deciding which of these two specific companies to invest in, these prices may be complements, since the decision can be made much more effectively with both pieces of information than with either alone.

The definition of \cite{borgers2013signals} cannot capture such scenarios because it requires two signals to be substitutes for \emph{every} possible decision problem.
Our solution is to define S\&C relative to both the particular information structure and the particular decision problem.

\paragraph{b. Lack of dependence on the internal structure of the signals.}
The other concern with the definition of \citet{borgers2013signals} is that it only depends on ``extreme'' values: $\V^{u,P}(A_1), \V^{u,P}(A_2), \V^{u,P}(A_1 \join A_2),$ and $\V^{u,P}(\bot)$. Hence, it ignores the internal structure of $A_1$ and $A_2$, which can lead to incongruous predictions.
For example, suppose that $B_1$ and $B_2$ are substitutes while $C_1$ and $C_2$ are complements.
Now consider the signals $A_1 = (B_1,C_1)$ and $A_2 = (B_2,C_2)$.
For some decision problems, it may be that the $B$ signals are slightly more important and so $A_1$ and $A_2$ seem to be substitutes.
For other decision problems, it may be that the $C$ signals are slightly more important and $A_1$ and $A_2$ seem to be complements.
This is formalized in Example \ref{ex:weak-but-not-moderate}.

To see why this could be problematic for a predictive or useful theory, suppose that an agent will be able to observe $A_1$, and a seller wishes to sell to that agent the opportunity to observe $A_2$ as well.
As just argued, one might have defined $A$ and $B$ to be ``substitutes'' or ``complements'' depending on very small fluctuations in the decision problem.
But the seller, by ``hiding'' or ``forgetting'' either the $B_2$ or the $C_2$ component of his signal, can force the signals to become either substitutes or complements as she desires.
A definition that does not account for internal structure may get such examples completely wrong, \emph{e.g.} classifying the signals as substitutes when the seller can make them behave as complements.

We will introduce definitions that account for the internal structure of signals.

\subsubsection{Our approach: dependence on context and internal structure}
\paragraph{Context.}
As mentioned above, we will allow the definitions of S\&C to depend on the particular decision or value function $\V^{u,P}$.
That is, while \citet{borgers2013signals} defined a particular information structure $P$ to be substitutes on pairs of signals if $\V^{u,P}$ satisfied a condition for all $u$, we will define a pair $(u,P)$ to be substitutes if $\V^{u,P}$ satisfies a similar condition.
This will turn out to be crucial in all of our applications.

The potential drawback is that it might be difficult to say anything \emph{general} about when signals are substitutes or complements; it might seem that one must take things completely on a case-by-case basis.

We make two counterpoints.
First, a universal approach may be the wrong goal or ``too much to hope for''.
For instance, in the case of items, there is no such optimistic analogue; one does not consider items that are always substitutes for every valuation function.
Despite this, there are many successful theories leveraging substitutable goods.
These approaches start by assuming a context (\emph{e.g.} valuation functions) for which the goods are substitutes; similarly, we can consider a set of signals and only those decision problems for which they are substitutes.

Second, we later give some evidence that we need not take things completely case-by-case.
We seek classes of signals that can be considered substitutes or complements in a broad class of decision problems.
For example, we show that if signals are independent, then they are complements for \emph{any} decision problem satisfying a smoothness condition.
Our work also gives intuition for which kinds of signals are more likely to be substitutable or are substitutable in more contexts.
And indeed, one of the exciting questions raised by our work is how the context of a decision problem and internal structure combine to produce substitutable or complementary features.

\paragraph{Probabilistic structure.}
We will allow definitions of S\&C to depend on the internal structure of signals.
But how?
Two signals may be related in complex ways by correlations with each other and with the event $E$ of interest.
Therefore, a more natural analogy than substitutability of two items may be substitutability of two \emph{subsets} of items.
Consider \citet{lehmann2001combinatorial}, which studied valuation functions over sets of items.
There, the authors identified a natural ``no complementarities'' condition where two sets of items, $A_1$ and $A_2$, could only be substitutes if all ``pieces'' of those subsets were substitutes: no subset of set $A_1$ could be complementary to the set $A_2$, and vice versa.
This turned out to be exactly a requirement that the valuation function be \emph{submodular}: that it exhibit diminishing marginal value.

We would also like to capture diminishing marginal value.
The challenge that arises is, what is a marginal ``unit'' of information?
The answer actually may vary by application.
\begin{enumerate}
 \item In some applications, a ``marginal unit'' may be an entire signal: Given the current subset of $\{A_1,\dots,A_n\}$, one can either add another $A_i$, or not.
       This would be appropriate for cases where our above arguments about internal structure may not apply.
       For example, perhaps the seller in an auction does not have the ability, for whatever reason, to process her signals in any way; she can just choose between allowing each of them to be either broadcast or kept private.
       In this paper, we utilize this notion, which will correspond to ``weak substitutes'', in the context of discrete optimization problems where an algorithm must choose between acquiring different signals.
       In many contexts, it is impossible to acquire partial signals, so this is the natural marginal unit.
       While they may be useful, they also are subject to the criticisms given above; in many contexts that allow ``pieces'' of signals, they may not behave as substitutes or may even behave as complements.
 \item Sometimes, a ``marginal unit'' may be some partial information about a signal, in the form of a ``fact'' about its realization.
       For instance, imagine a commuter learns something about the barometer reading but not the exact reading; \emph{e.g.}, whether it is above or below $30$, or the measurement rounded to the nearest integer.
       This application arises when considering pure strategies in a game, or deterministic ``post-processings'' of a signal in algorithmic contexts.
       The effect of such processing is always to ``coarsen'' a signal by pooling multiple outcomes together under one announcement.
       In the barometer example, all realizations of the signal below $30$ map to the same result, and all realizations above map to the other result; similarly when rounding to the nearest integer.
       If a set of signals exhibits diminishing marginal value with respect to this notion, we will term them ``moderate substitutes''.
       We actually do not provide an application for moderate substitutes in this paper, but expect them to be useful in contexts such as those just described.
 \item Finally, a ``marginal unit'' may be partial information in the form of a noisy ``signal about the signal''.
       For instance, the commuter may learn the barometer reading plus Gaussian noise.
       To see that this notion may be much more fine-grained than the previous one, imagine starting from the binary barometer example, where the commuter learns whether or not the barometer is below $30$; and now imagine that, with some probability $p$, this observation is ``flipped'' from the correct one.
       When $p=0$, the commuter can be certain that she learns correctly which outcome is the case (above or below $30$).
       But as $p \to \frac{1}{2}$, the commuter learns less from the signal.
       If a set of signals exhibits diminishing marginal value even with respect to such partial information -- for instance, diminishing marginal value as $p$ decreases from $\frac{1}{2}$ to $0$ -- then we term them ``strong substitutes''.
       In applications where, for instance, the barometer observation is controlled by a strategic agent whose strategy consists of a ``garbling'' of that observation, this will be a useful notion of marginal information.
\end{enumerate}
We formalize these marginal units of information using \emph{lattices} of signals: sets of signals with a partial order $\preceq$ corresponding to ``informativeness'' and satisfying some natural conditions.
While our proposed uses for them here are quite new, the lattices we use, or closely related concepts, are relatively classical.
For weak substitutes, we consider the lattice of subsets of signals, with partial order given by set containment; this corresponds directly to subsets of goods and the notion of substitutes is essentially the same.

For moderate substitutes, we utilize a variant of Aumann's classic model of information in Bayesian games~\citep{aumann1976agreeing}, in which signals correspond to partitions on a ground state of the world.
To our knowledge, although it is known that Aumann's signals form a lattice (because the space of partitions do), they have not been used to formalize marginal units of information.
One difference in the variant we propose is that the ground states only determine the signals, not the event $E$ or any other pieces of information; this makes our model much more useful for formalizing marginal pieces of information because the ground states only contain information about the signals.

For strong substitutes, we extend Aumann's model to capture randomized ``garblings'' of signals.
Although this is not the model normally used in that context, the idea and intuition is extremely similar to Blackwell's criterion or partial ordering on signals~\citep{blackwell1953equivalent}.
One major difference is that in our model, there is a particular event $E$ of interest and signals are ordered according to informativeness \emph{about that event}, rather than pure informativeness.
Also, the use of Aumann's partition model allows our signals to form a lattice.

\subsubsection{Capturing ``diminishing'' marginal value}
Luckily, once we have placed a lattice structure on signals, we can apply a now-classic criterion for diminishing marginal value: submodularity.
Often, submodularity is a condition for functions on subsets, \emph{e.g.} $f: 2^{\{1,\dots,n\}} \to \R$, which is submodular if an element $i$'s ``marginal contribution'' to $S$, $f(S \cup \{i\}) - f(S)$, is decreasing as elements are added to $S$.
This is a widely-used model for substitutability of discrete, indivisible items~\citep{lehmann2001combinatorial}.
The same goes for supermodularity, increasing marginal value, and complementary items.

The final piece of our puzzle will be to utilize submodularity and supermodularity, and extensions, to capture diminishing marginal value and increasing marginal value respectively.
This is formalized in Section \ref{sec:defs}.

\end{document}